\documentclass[prx,twocolumn,showpacs,superscriptaddress,preprintnumbers,amssymb]{revtex4-2} 
\usepackage{graphicx}
\usepackage{amsthm}
\usepackage{thmtools}
\usepackage{url}
\usepackage{tikz}
\usepackage{relsize}
\usepackage{latexsym}
\usepackage{amssymb}
\usepackage{amsmath}
\usepackage{mathtools}
\usepackage{amsmath} 
\usepackage{amsfonts} 
\usepackage{upgreek}
\usepackage{bm}
\usepackage{multirow}
\usepackage{enumitem}
\usepackage{color}
\usepackage[colorlinks, citecolor=blue, linkcolor=blue]{hyperref}

\usepackage{tikz-cd}

\usepackage{calligra}
\DeclareMathAlphabet{\mathcalligra}{T1}{calligra}{m}{n}

\usepackage{physics}
\usepackage{tcolorbox}
\usepackage{manfnt}
\usepackage{lipsum}
\usepackage[ruled,linesnumbered]{algorithm2e}
\usepackage{tabularx}
\usepackage{nicematrix}
\usepackage{braket}
\usepackage[normalem]{ulem} 
\usepackage{enumitem}

\newcommand{\cC}{\mathcal{C}}

\newcommand{\cA}{{\cal A}}
\newcommand{\cB}{ {\cal B} }

\newcommand{\cH}{ {\cal H} }

\newcommand{\cL}{ {\cal L} }

\newcommand{\cS}{ {\cal S} }

\newcommand{\cZ}{ {\cal Z} }

\newcommand{\secref}[1]{Sec.\,\ref{#1}}
\newcommand{\appref}[1]{App.\,\ref{#1}}
\newcommand{\eqnref}[1]{Eq.\,\eqref{#1}}

\newcommand{\figref}[1]{Fig.\,\ref{#1}}

\newcommand{\cN}{{\mathcal{N}}}

\newcommand{\Z}{\mathbb Z}
\newcommand{\RR}{\mathbb R}
\newcommand{\PP}{\mathbb P}
\newcommand{\EE}{\mathbb E}

\newcommand{\eps}{\varepsilon}

\newcommand{\dist}{{\rm dist}}

\theoremstyle{plain} 
\newtheorem{theorem}{Theorem}

\newtheoremstyle{axiomwithbreak} 
  {3pt}   
  {3pt}   
  {} 
  {}      
  {\bfseries} 
  {.}     
  {\newline }   
  {}      

\theoremstyle{axiomwithbreak}

\theoremstyle{definition}               

\newtheorem{definition}{Definition}

\newtheorem{proposition}{Proposition}

\newtheorem{assumption}{Assumption}

\makeatother

\newtheorem{lemma}{Lemma}

\makeatletter

\newif\ifappendixtoc
\appendixtocfalse

\let\oldappendix\appendix
\renewcommand{\appendix}{%
  \oldappendix
  \global\appendixtoctrue
}

\let\oldaddcontentsline\addcontentsline

\renewcommand{\addcontentsline}[3]{%
  \oldaddcontentsline{#1}{#2}{#3}%
  \ifappendixtoc
    \def\tempa{#1}%
    \def\tempb{toc}%
    \ifx\tempa\tempb
      \oldaddcontentsline{atoc}{#2}{#3}%
    \fi
  \fi
}

\newcommand{\appendixtableofcontents}{%
  \begingroup
  \setcounter{tocdepth}{2}
  \begin{center}
    \large \bfseries Appendix
  \end{center}
  \@starttoc{atoc}%
  \endgroup
}

\makeatother

\begin{document}

\title{Renormalization for Robust Memory without Synchronization}

\author{Shashwat Chakraborty}
\affiliation{Department of Physics and Institute for Condensed Matter Theory, University of Illinois Urbana-Champaign, Urbana, Illinois 61801, USA}

\author{Jong Yeon Lee}
\email[Contact Author:\,\,]{jongyeon@illinois.edu}
\affiliation{Department of Physics and Institute for Condensed Matter Theory, University of Illinois Urbana-Champaign, Urbana, Illinois 61801, USA}
\affiliation{Korea Institute for Advanced Study, Seoul 02455, South Korea}

\date{\today}
\begin{abstract} 
Local error correction can preserve classical information against noise, but protection under synchronized updates need not survive when sites evolve at independent random times. 
Establishing stability is particularly challenging when protection relies on the collective action of competing local rules.
Here we develop a spacetime renormalization framework that reveals how independently acting local updates can collectively stabilize memory. 
We identify sufficient conditions under which the geometry of stable interfaces and suppression of defect propagation produce reliable erosion at large scales. Applying this framework, we prove stability of the asynchronous \textsf{R} squeezing code and introduce robust processes in two and three dimensions. In these systems, both encoded phases persist under sufficiently weak noise, even when the noise favors one phase. Our results provide a criterion for designing autonomous memories sustained by competing local stochastic dynamics.
\end{abstract} 
\maketitle

\section{Introduction}

Cellular automata (CA) such as Conway's Game of Life~\cite{Gardner1970Life} support patterns that persist indefinitely. However, such persistence alone does not establish robust memory: stored information must remain distinguishable when the idealized assumptions of the dynamics are relaxed. 
Here we consider two such departures: random local errors and the absence of a global clock that keeps neighboring updates in step.
The broader problem of robust memory has long been studied in the theory of probabilistic cellular automata (PCA)~\cite{GrinsteinJayaprakashHe1985,BennettGrinstein1985,LebowitzMaesSpeer1990,MairesseMarcovici2014}, and has attracted renewed attention through recent work on cellular-automaton based decoders for quantum error correction~\cite{Herold_2015,Herold_2017,breuckmann2016localdecoders2d4d,balasubramanian2025localautomaton2dtoric,Kubica_2019,Ray_2024,paletta2025highperformancelocaldecodersdefect,lake2025localactiveerrorcorrection,lake2025fastofflinedecodinglocal,paletta2026localdecodertoriccode}. 

In synchronous cellular automata, all sites update simultaneously at discrete time steps, as illustrated in~\figref{fig:temp}(a). 
For a broad class of monotone synchronous rules, Toom's stability theorem~\cite{Toom1974,Toom1980,Toom1995} guarantees the stability of a homogeneous configuration under sufficiently weak noise when the noiseless dynamics erase every finite droplet of defects---a property known as \emph{erosion}.
Subsequent work clarified and simplified the proof and developed contour methods and related structural criteria for synchronous PCA~\cite{GacsToomProof,GacsTorma2022,SwartSzaboToninelli2022, SwartSzaboToninelli2025}. From the viewpoint of statistical mechanics, Toom-type models have become canonical examples of dynamically stabilized ordered phases far from equilibrium~\cite{BennettGrinstein1985,FernandezToom2003}.

\begin{figure}[!t]
    \centering
    \includegraphics[width=0.9\linewidth]{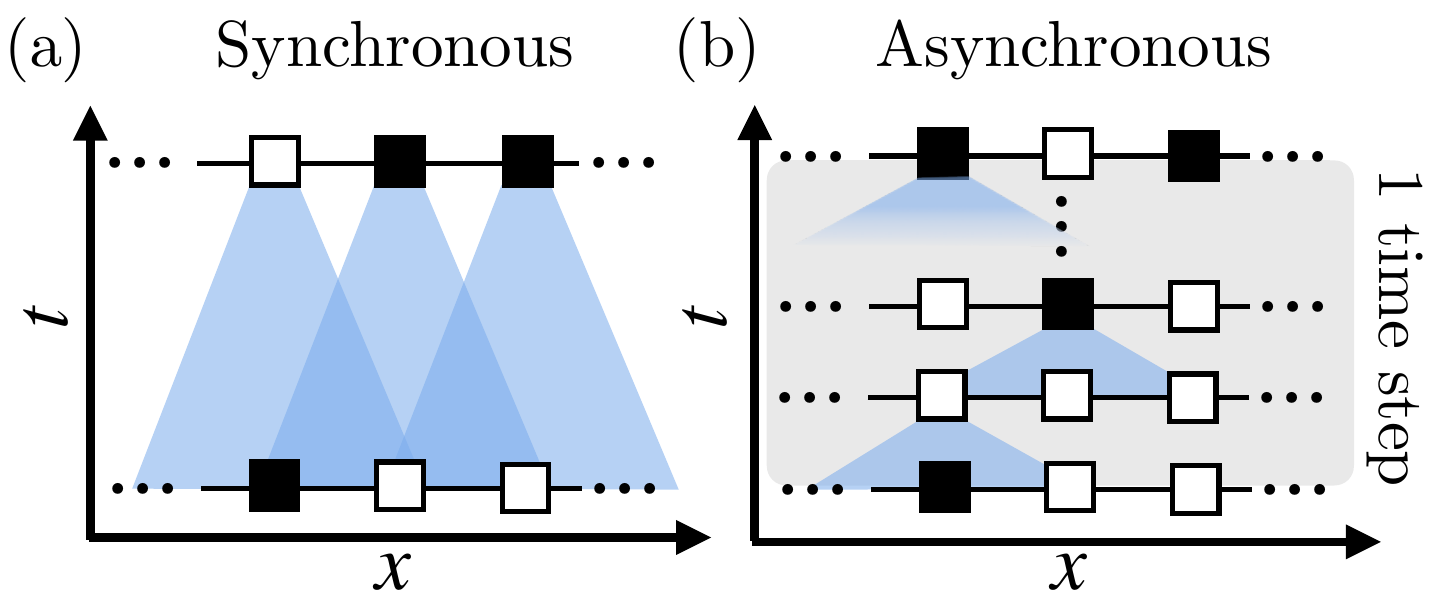}
    \caption{\textbf{Synchronous and asynchronous updates.} {\bf (a)} Under synchronous updates, all sites are updated simultaneously at each discrete time step. {\bf (b)} Under asynchronous updates on a system of $N$ sites, each time step is divided into $N$ substeps, and one randomly chosen site is updated at each substep. Thus, each site is updated once on average per time step, although the order and number of updates at individual sites are random. As $N\rightarrow\infty$, with substep duration $1/N$, the update clocks on any fixed finite collection of sites converge to independent rate-one Poisson processes.}
 \label{fig:temp}    
\end{figure}

\begin{figure*}[!t]
    \centering
    \includegraphics[width=0.95\linewidth]{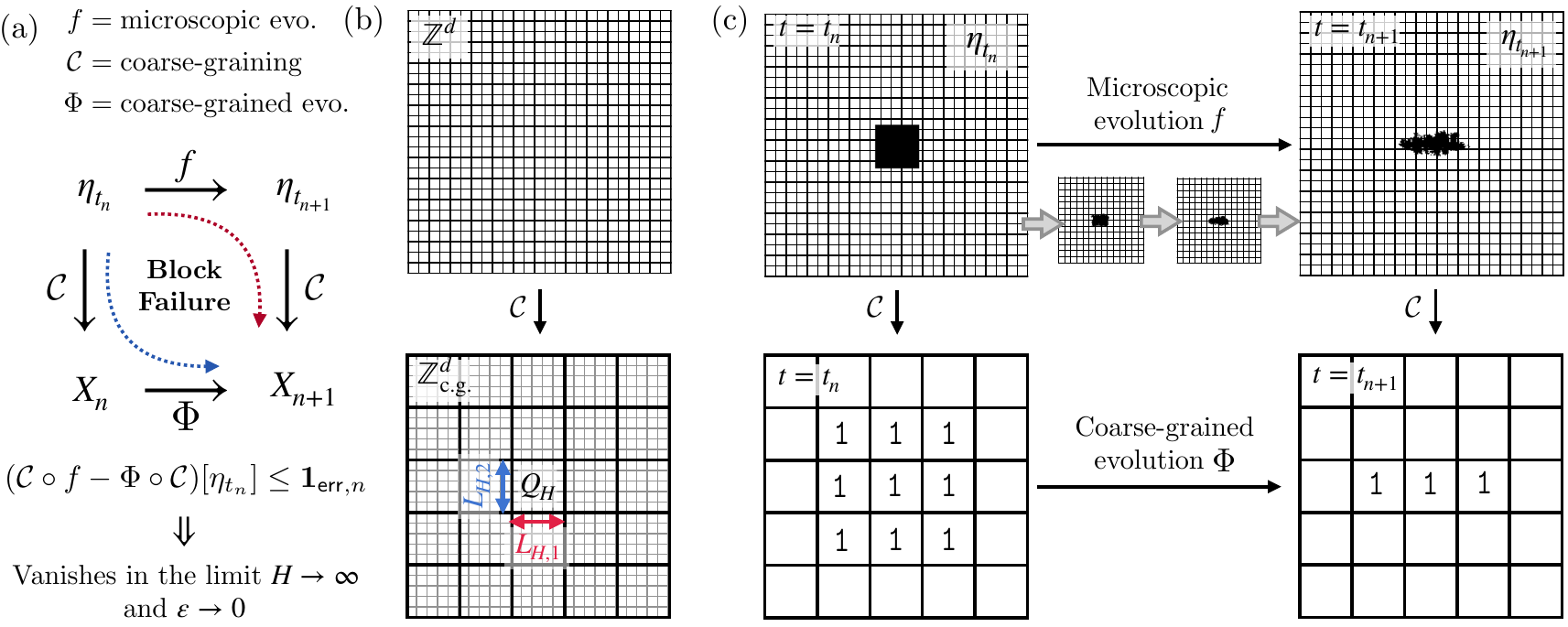}
    \caption{\textbf{Spacetime renormalization argument.} {\bf (a)} For a fixed microscopic update history and in the absence of ${\sf error}_{n,z}$, evolving a state and then coarse-graining is dominated by coarse-graining first and then applying $\Phi$, as stated in \eqnref{eq:domination}. {\bf (b)} The microscopic lattice is partitioned into congruent, possibly anisotropic boxes $Q_H$, which form the coarse-grained lattice. {\bf (c)} Illustration of the two paths in (a) for the $\mathsf{R}$ rule: black squares represent microscopic defects, while {\sf 1} denotes a defective coarse cell.
    The argument is completed by showing that the (one-sided) deviation between two paths (red and blue) is suppressed with the coarse-graining scale. Intuitively, violations of the domination $(\cC \circ f)[\eta_{t_n}] \leq (\Phi \circ \cC)[\eta_{t_n}]$ are rare.}
    \label{fig:coarse-graining}
\end{figure*}

Under asynchronous dynamics, each site updates at independent random times.
A fixed time interval therefore contains a random sequence of local updates, rather than a prescribed parallel update. Stability under synchronous dynamics need not survive this change in schedule. 
Building on the renormalization approach in Ref.~\cite{BramsonGray1991}, Gray established continuous-time stability for a class of processes with exclusive birth and death rates~\footnote{We say that a process has an exclusive birth-death structure if, for every site and every fixed configuration of its surroundings, a positive defect-birth rate forces the corresponding defect-death rate to vanish, and vice versa.} satisfying an additional geometric erosion condition~\cite{Gray1999}, including the continuous-time Toom North-East-Center (NEC) process. 
Gray's proof exploits the exclusivity of microscopic birth and death rates to construct a deterministic update rule, which in turn enables a comparison with a synchronous eroder over coarse time scales. 
However, Gray's construction does not directly cover nonexclusive dynamics, in which competing stochastic updates can create and remove defects for the same surrounding configuration. 

The asynchronous versions of the recently introduced \emph{squeezing codes}~\cite{LakeRo2025} provide concrete examples of nonexclusive dynamics. 
The original two-dimensional rules eliminate minority defect droplets by squeezing them and are provably robust under synchronous updates.
Numerical studies of their asynchronous counterparts, however, revealed a dichotomy: some rules remain robust, whereas others fail to preserve memory.
Thus, the central question is: What enables some of these rules to preserve robust memory without synchronization?

In this work, we answer this question by developing a spacetime renormalization framework that establishes a sufficient stability criterion for monotone continuous-time dynamics without assuming microscopic exclusivity.
The central idea is to show that, on sufficiently large spatial and temporal scales, the stochastic dynamics remove defects at least as effectively as a deterministic eroder, apart from rare errors.
The geometry of stable interfaces provides the eroder, while probabilistic estimates control the errors in this comparison, allowing Toom-type stability arguments to be applied~\cite{Toom1980,GacsToomProof, SwartSzaboToninelli2022,GacsTorma2022}, see \figref{fig:coarse-graining}(a).

We use this framework to provide the first rigorous proof of stability for the continuous-time {\sf R} squeezing process~\cite{LakeRo2025}, controlling its erosion through comparison with a subcritical one-dimensional contact process; interestingly, we establish the stability beyond equal rates among competing microscopic rules.
We also introduce a two-dimensional \emph{cross-gated} (CG) process and its three-dimensional generalization, and prove that both models retain two distinct stable phases in the infinite-volume limit under sufficiently weak noise.
Our analysis of related rules identifies geometric obstructions to the criterion and dynamical mechanisms that destroy robustness.

The remainder of this paper is organized as follows. In \secref{sec:sync_vs_async} we introduce asynchronous dynamics and contrast them with synchronous dynamics, while \secref{sec:operator_spectrum} discusses the stability of asynchronous dynamics from the perspective of
local Markov generators. In \secref{sec:stab_criterion}, we develop the geometric framework and establish our general stability criterion through spacetime renormalization. In \secref{sec:applications}, we apply our stability criterion to several two- and three-dimensional continuous-time processes. Finally, we conclude with a discussion in \secref{sec:disc_and_outlook}. 

\section{Synchrony and Asynchrony} 
 \label{sec:sync_vs_async}
A cellular automaton (CA) on a lattice $\Lambda\subseteq\mathbb Z^d$ is specified by an onsite state space $\mathcal S$, a finite neighborhood $\mathcal N$, and a local update map $f:\mathcal S^{\mathcal N}\to\mathcal S$.
Let $\eta_{t}(x)$ denote the state at site $x$ and $t$.
In the standard \emph{synchronous} setting, time is discrete and the same local map is applied simultaneously at every site,
\begin{align}
\eta_{t+1}(x) = f\big( \{ \eta_t(x+\delta) \}_{\delta\in\mathcal N} \big).
\end{align}
More generally, one may periodically alternate among a finite collection of local maps, giving a Floquet CA. Parallel updates organize the resulting spacetime dynamics into discrete time layers, as illustrated in ~\figref{fig:temp}(a).

In \emph{asynchronous} dynamics, consider a finite collection of deterministic update types $a\in\mathcal A$, specified by neighborhoods $\mathcal N_a$ and maps $f_a:\mathcal S^{\mathcal N_a}\to\mathcal S$.
For each pair $(x,a)$, updates occur according to an independent Poisson clock of rate $\lambda_a$~\cite{Liggett1985}, as shown in~\figref{fig:temp}(b); when the clock rings, only the state at $x$ is replaced by $f_a((\eta(x+\delta))_{\delta\in\mathcal N_a})$.
The local maps are deterministic; randomness enters through the times and locations of their application. This defines a continuous-time Markov process even in the absence of added noise.
Changing the update schedule can alter how the local rules act together, imposing \emph{temporal disorders} on updates. In particular, erosion under synchronous updates does not by itself guarantee stability of the asynchronous process.

The squeezing codes of Ref.~\cite{LakeRo2025} provide a concrete setting in which to examine this dependence on the update schedule.
Each site carries a Boolean variable $\eta(x)\in\{{\tt 0},{\tt 1}\}$.
Their synchronous dynamics form a deterministic period-2 CA, in which OR and AND maps alternate globally,
\begin{align}
\eta_{t+1}(x)
&=
\begin{cases}
\displaystyle\bigvee_{\delta\in\mathcal N_{\vee}}\eta_t ({x+\delta}),
&t\in2\mathbb Z,\\
\displaystyle\bigwedge_{\delta\in\mathcal N_{\wedge}}\eta_t ({x+\delta}),
&t\in2\mathbb Z+1.
\end{cases}
\end{align}
For the $\mathsf R$ squeezing automaton,
\begin{align}
\mathcal N_{\vee}
&=
\{-e_1,0,e_1\}, \quad \,
\mathcal N_{\wedge} =
\{-e_2,0,e_2\},
\label{eq:R_nbd}
\end{align}
where $e_1$ ($e_2$) is the unit vector along the $x$ ($y$) direction. 
For a droplet of 1s in a background of 0s, the horizontal OR updates spread defects laterally, while the vertical AND updates remove defects along the upper and lower boundaries. Their alternating action produces the characteristic squeezing dynamics.

In continuous time, the global period-2 schedule is removed: OR and AND updates instead occur independently at each site according to Poisson clocks.
If a horizontal neighbor is in state {\tt 1} and a vertical neighbor is in state {\tt 0}, the OR map permits a ${\tt 0}\to {\tt 1}$ transition, while the AND map permits a ${\tt 1}\to {\tt 0}$ transition for the same surrounding configuration. The asynchronous {\sf R} process therefore has nonexclusive birth and death rates.

\section{Operator-spectrum perspective} 
\label{sec:operator_spectrum} 

Before developing our infinite-volume stability criterion, we offer a complementary perspective on memory retention through the relaxation of finite systems. This discussion is independent of the subsequent stability argument; readers may proceed directly to Sec. IV for the criterion and its proof.

On a finite lattice $\Lambda$, let $p_t$ denote the probability distribution over configurations, represented as a column vector.
For the Poisson updates introduced above, one can write the evolution as $\partial_t p_t = \cL p_t$, where $\cL$ is a Markov generator written as
\begin{align}
\cL  &= \sum_{x\in\Lambda}\sum_{a\in\mathcal A}\lambda_a(P_{x,a}-I)+\cL_{\mathrm{noise}}. \label{eq:ops:generator}
\end{align}
Here $P_{x,a}$ applies the update rule $a$ at site $x$; $\lambda_a$ is its rate, and $\cL_{\mathrm{noise}}$ contains the local error processes.
Thus, $H  \equiv -\cL $ is a local, generally non-Hermitian many-body operator, formally analogous to a Hamiltonian governing imaginary-time evolution~\cite{AlcarazEtAl1994}.
For irreducible noisy dynamics, the stationary distribution $\pi_{\mathrm{ss}}$ is the unique zero mode of $H $, and convergence to it eventually erases the stored information.
Long memory lifetimes, therefore, require slow global relaxation.

For reversible dynamics, detailed balance gives the generator a Hermitian representation.
Let $D$ be the diagonal matrix with entries $D_{\eta\eta}=\pi_{\mathrm{ss}}(\eta)$.
Similarity transformation by $D^{1/2}$ on the Markov generator $\cL$ gives
\begin{align} \label{eq:dtbalance}
H_{\mathrm{rev}} &= -D^{-1/2}\cL  D^{1/2},
\end{align}
which is real, symmetric, and positive semidefinite.
Its excitation energies are the relaxation rates of the Markov process~\cite{Henley2004}, with the smallest positive energy defining the relaxation gap.

Ising Glauber dynamics, familiar from single-spin Monte Carlo sampling, satisfies detailed balance and therefore admits the Hermitian representation above.
At low temperature, its two zero-field equilibrium phases are related by spin-flip symmetry.
In a finite system, rare transitions between these phases produce a small relaxation gap.
A magnetic field $h$ explicitly breaks this symmetry and selects a single equilibrium phase.
At fixed $h>0$, the relaxation gap on finite cubes with free boundary conditions remains bounded away from zero as their size increases, so the inverse-gap relaxation time is $O(1)$ in system size~\cite{Yoshida1998}.
This degeneracy lifting fits the familiar many-body picture of a phase selection by explicit symmetry breaking, as illustrated schematically in~\figref{fig:spectrum}(a).

\begin{figure}[!t]
    \centering
    \includegraphics[width=1\linewidth]{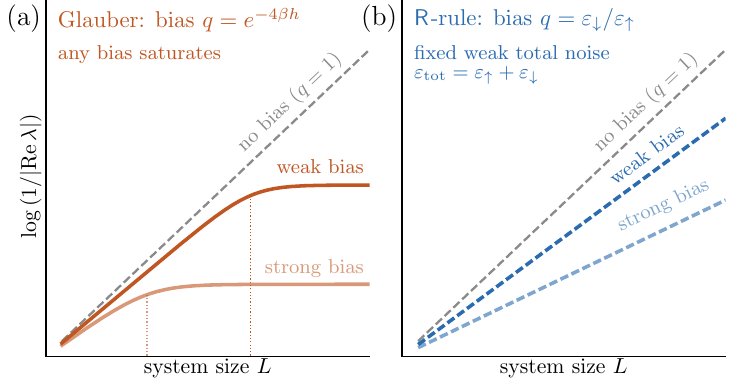}
    \caption{\textbf{Schematic spectral behavior.}
    Schematic dependence of the inverse relaxation gap on system size under a bias $q=\varepsilon_\downarrow/\varepsilon_\uparrow$, where $\varepsilon_\uparrow$ ($\varepsilon_\downarrow$) is the error rate attacking the disfavored (favored) phase.
    {\bf (a)} {\bf Low-temperature Ising}. Here the bias is applied via an external field $h$. At $h=0$, the spectral gap decreases exponentially with the linear system size.
    At any fixed positive field, the gap remains bounded away from zero uniformly in system size.
    {\bf (b) ${\sf R}$ process.} 
    At a fixed, sufficiently small noise strength $\eps_\uparrow +\eps_{\downarrow}$, both phases remain stable under arbitrary bias $q$, exhibiting a memory life-time that grows exponentially in system size~\cite{LakeRo2025}. The dashed curves schematically illustrate the conjectured exponential closing of the relaxation gap in $L$.
    }
    \label{fig:spectrum}
\end{figure}

In contrast, the asynchronous $\mathsf R$ squeezing process with nonzero noise in both directions violates detailed balance even at zero bias.
Its generator, therefore, remains non-Hermitian under the transformation above.
Like Toom's process, it supports two stable phases in infinite volume under sufficiently weak noise, even when a bias breaks the symmetry exchanging them.
This coexistence over a range of noise strengths and biases is known as \emph{generic bistability}~\cite{BennettGrinstein1985,MunozEtAl2005}.
Numerical studies find memory lifetimes consistent with exponential growth in the linear system size $L$, under both unbiased and maximally biased noise~\cite{LakeRo2025}.
If the slow dynamics are governed by transitions between the two encoded phases, exponentially long lifetimes in both phases would imply an exponentially small relaxation gap, as illustrated schematically in~\figref{fig:spectrum}(b).
Here, explicit symmetry breaking leaves the unfavored phase (meta)stable, even though it selects a unique phase in the equilibrium Ising model. 
Whether this robustness can be traced to structural features of the non-normal generator remains an intriguing open question.

Without detailed balance, the transformed generator can be non-normal, so its eigenmodes need not be orthogonal.
A spectral explanation of this robustness therefore requires control of both eigenvalues and eigenmodes as the system size grows.
For an irreducible generator $\cL$, define the relaxation gap as $g=\min_{\lambda\in\operatorname{spec}(\cL)\setminus\{0\}}[-\operatorname{Re}\lambda]$.
For a diagonalizable generator and any initial probability distribution $p_0$, the spectral decomposition gives
\begin{align}
\bigl\|e^{t\cL }p_0-\pi_{\mathrm{ss}}\bigr\|_{\mathrm{TV}} &\le C e^{-g  t}, \label{eq:ops:semigroup}
\end{align}
where $\|\cdot\|_{\mathrm{TV}}$ denotes total-variation distance and the time-independent prefactor $C$ depends on the spectral projectors.
The gap sets the slowest asymptotic decay rate, but relating its inverse to a memory lifetime requires control of this prefactor.
Nonorthogonality can make $C$ grow rapidly with system size, and explicit examples exhibit slow relaxation even when the gap stays open~\cite{MoriShirai2020,HagaEtAl2021}.
The same eigenmode structure also complicates perturbation theory: small overlaps between normalized left and right eigenvectors can amplify the shifts of eigenvalues~\cite{GreenbaumLiOverton2020}.
Controlling an expansion in the noise rate therefore requires bounds on this sensitivity that remain valid as the system size grows.

A separate obstacle to perturbation theory arises even in the reversible Ising example.
Although the field selects one phase, decay of the disfavored phase requires nucleation of a sufficiently large droplet of the favored phase.
For the two-dimensional system at a fixed temperature below the critical point, the relaxation time from the unfavored state satisfies $\tau(h) \sim e^{ C/h}$~\cite{SchonmannShlosman1998}, where $C$ is temperature dependent positive constant. The essential singularity in this expression makes the inverse metastable timescale smaller than every power of $h$, so no finite-order expansion in the field captures it.

Therefore, we take an approach different from spectral analysis of Markov generators.
Through spacetime renormalization, we directly ask whether locally created defects are eroded or can develop into a macroscopic region of the opposite phase.

\section{Stability criterion}\label{sec:stab_criterion}

In this section, we establish a stability criterion for monotone dynamics in continuous time.
We work on the configuration space $\Omega=\{{\tt 0},{\tt 1}\}^{\mathbb{Z}^d}$, where $d$ is the spatial dimension. We write $\eta_t\in\Omega$ for the configuration at time $t$. The homogeneous configurations are denoted by $\bar{\tt 0}$ and $\bar{\tt 1}$. We order configurations sitewise: for $\eta,\xi\in\Omega$, we write $\eta\leq\xi$ if $\eta(x)\leq\xi(x)$ for every $x\in\mathbb{Z}^d$. In the following discussion, we formulate the stability criterion for the $\bar{\tt 0}$ homogeneous configuration. Analogous results can be obtained separately for the opposite homogeneous configuration. 

Throughout, we consider dynamics that are finite range and translationally invariant, with a finite collection of update types $a\in\mathcal A$. 
At each site, updates of type $a$ occur according to an independent Poisson clock of rate $\lambda_a<\infty$, which is specified by a monotone local map $f_a$ satisfying $f_a(\bar{\tt 0})={\tt 0}$~\footnote{In this work, since we are primarily concerned with bistability, $f_a(\bar{\tt 1}) = {\tt 1}$ must hold as well.}. 
Thus, $\bar{\tt 0}$ is absorbing in the absence of noise. Monotonicity makes the process attractive, i.e., using the same realization of the Poisson clocks, $\eta_0\leq\xi_0$ implies $\eta_t\leq\xi_t$ for all $t\geq0$. 
This order-preserving coupling of processes underlies several arguments made below. 
Finite range and bounded update rates ensure that the infinite-volume process is well defined.
They also yield a probabilistic finite-speed bound: causal influence over a distance $r$ within time $t$ is exponentially unlikely when $r$ exceeds a fixed multiple of $t$; see Lemma~\ref{lem:speed} of Appendix~\ref{sec:app_A}.

We introduce noise through independent rate-$\eps$ Poisson clocks at every site, each of which forces the local state to ${\tt 1}$. These events are hostile to the $\bar{\tt 0}$ configuration. We denote by $\mathbb{P}_{\bar{\tt 0}}^{(\eps)}$ the law of the resulting process started from $\bar{\tt 0}$. 
To establish stability of ${\tt 0}$, it suffices to consider these one-sided errors; additional transitions ${\tt 1}\to{\tt 0}$ can only decrease the defect configuration and hence improve stability.
We say that the configuration $\bar{\tt 0}$ is stable if
\begin{equation}
\lim_{\eps\to0^+}\sup_{t\geq0}
\mathbb{P}_{\bar{\tt 0}}^{(\eps)}
\bigl(\eta_t(x)={\tt 1}\bigr)=0
\label{eq:stability_maintext}
\end{equation}
for every $x\in\mathbb{Z}^d$. By translation invariance, this probability is independent of $x$.

Finally, suppose that the noiseless dynamics are invariant under $\mathcal X\circ g$, where $\mathcal X$ is the global spin flip and $g$ is a lattice symmetry.
This transformation acts as $[(\mathcal X\circ g)\eta](x)=1-\eta(g^{-1}x)$ and exchanges the two homogeneous configurations and their corresponding hostile errors.
Stability of $\bar{\tt 0}$ against ${\tt 0}\to{\tt 1}$ errors therefore implies stability of $\bar{\tt 1}$ against ${\tt 1}\to{\tt 0}$ errors.
Applying the monotone comparison to each configuration then gives stability of both configurations under sufficiently weak noise in both directions, even when the two error rates differ.
This bistability allows robust encoding of a classical bit.

\subsection{Coarse-graining}

The basic idea on which the stability criterion relies is summarized in Fig.~\ref{fig:coarse-graining}(a).
We seek a spacetime coarse-graining under which the asynchronous microscopic dynamics reduce to a noisy discrete-time eroder. 
Real-space coarse-graining has a long history in cellular automata and interacting particle systems, including renormalization schemes for probabilistic cellular automata~\cite{Vespignani1997,deOliveiraSatulovsky1997} and block coarse-graining of deterministic and probabilistic cellular automata~\cite{IsraeliGoldenfeld2004,IsraeliGoldenfeld2006,EdlundJacobi2010,WeaverPrugelBennett2015}. 
Our construction differs in that we coarse-grain continuous-time dynamics jointly in space and time, with the aim of obtaining a coarse evolution dominated by a deterministic eroder, which is closer in spirit to the block-renormalization methods developed for interacting particle systems~\cite{DurrettGriffeath1983,BramsonGray1991}.

The construction has two ingredients.
Geometrically, we first identify the orientations of clean half-spaces that are preserved by every microscopic update rule. The geometry of these protected interfaces guides our choice of a coarse-graining map $\cC$ and the construction of a deterministic coarse eroder $\Phi$.
Dynamically, we assign a physical duration $T_H$ to each coarse step.
The central task is to choose the block dimensions and coarse time step $T_H$ so that, whenever $\Phi$ prescribes a clean target cell, the noiseless microscopic dynamics also leave that cell clean after time $T_H$, with probability approaching one as $H$ grows.

We record an effective error at a coarse spacetime block whenever the noiseless block comparison described above fails, or a hostile error occurs in the spacetime region involved in the block comparison (more details in Sec.~\ref{sec:main-block-failure}).
We collect the coarse spacetime points at which these errors occur in a random \emph{error set} $E$, following the definition in Ref.~\cite{GacsTorma2022}. 
For each coarse spacetime point $v$, let ${\sf error}_v$ denote the event that $v$ belongs to $E$, i.e., the event that an effective error is recorded at $v$.
We call the corresponding collection of binary indicators $\{\mathbf{1}_{{\sf err},v}\}_v$ the effective coarse \emph{error field}; as $E$ is random, the field is random as well. 
We then control the correlations in the error field to establish the sparsity needed to apply Toom's theorem: 
\begin{lemma}[Toom's theorem]
\label{lem:main-toom}
    Consider monotone synchronous dynamics $\Phi$ that erodes every finite defect set in a finite number of steps. Consider an error field $\mathbf{1}_{{\sf err}}$ that is sparse with parameter $q$, i.e., for any finite set of spacetime points $A$, the following holds:
    \begin{equation}
        \PP\left(\text{error is recorded for every }v\in A\right) \leq q^{|A|}.
    \end{equation}
    Then there exists $q_c > 0$ such that starting from $Y_0 = \bar{\tt 0}$, the noisy dynamics $Y_{n+1} = \Phi[Y_n] \lor \mathbf{1}_{{\sf err},n}$ satisfy
\begin{equation}
    \forall q \leq q_c, \quad \sup_{n \geq 0}\, \PP(Y_n(z) = {\tt 1}) \leq \rho_{\Phi}(q)
\end{equation}
for any site $z$, where $\rho_{\Phi}(q)$ is a function of the noise parameter that vanishes as $q \to 0$~\cite{Toom1974, Toom1980,GacsTorma2022}. 
\end{lemma}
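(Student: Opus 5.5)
The plan is to follow the Gács--Toom contour/explanation-tree argument in the form of Ref.~\cite{GacsTorma2022}, since the error field is only assumed sparse, not independent. The first step is a deterministic structural reduction. Suppose $Y_n(z)={\tt 1}$. Because $\Phi$ is monotone with $\Phi[\bar{\tt 0}]=\bar{\tt 0}$, it can be written as a finite OR of ANDs over a finite neighborhood. The eroder property then gives, via Toom's characterization, finitely many linear functionals $\ell_1,\dots,\ell_m$ with $\sum_i \ell_i=0$. For each AND-clause of $\Phi$ there is, for every $i$, a neighbor in the clause at which $\ell_i$ increases by at least one unit when one looks one time step backward.

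Starting from $(z,n)$, we then trace ${\tt 1}$'s backward in time. At a ${\tt 1}$ that is not an error site, some clause is fully satisfied, so all of its inputs are ${\tt 1}$. We then build Toom's "explanation graph" (Gács' sorted tree). This is a connected spacetime graph containing $(z,n)$. Its edges are labeled by $i\in\{1,\dots,m\}$, and it has a set of \emph{sinks}, which are error points at which the tracing stops.

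The key combinatorial claim comes next. Using the $\ell_i$-balance identity, the number of edges of the graph is at most $C\,|\text{sinks}|$, with $C$ depending only on $\Phi$. The graph also has bounded degree. This is the step that uses erosion essentially: defects cannot be explained backward forever without hitting errors, and the linear functionals turn that statement into a linear edge--sink inequality.

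The second step is counting. Connected bounded-degree graphs of size $k$ containing a fixed root number at most $c^{k}$ up to isomorphism-with-embedding. Here, after fixing the root, the embedding is determined by the edge labels and local choices, so the count with $s$ sinks is at most $c^{Cs}$. The sinks of a fixed graph are distinct spacetime points, and all of them must lie in $E$. Sparsity applied to $A=\text{sinks}$ gives probability at most $q^{s}$. Since $s\ge1$, a union bound gives $\PP(Y_n(z)={\tt 1})\le\sum_{s\ge1}(c^{C}q)^{s}$. This is finite and tends to $0$ for $q\le q_c<c^{-C}$, uniformly in $n$, and we define $\rho_\Phi(q)$ as this sum. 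Only sparsity is used, not independence, which is exactly why the lemma is stated with the $q^{|A|}$ hypothesis.

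The main obstacle is the first step. We need to make the explanation graph rigorous so that sinks are pairwise distinct spacetime points, which is where sparsity applies, and we need the edge count to be controlled linearly by the sinks. I would not redo this from scratch. I would invoke the version in Ref.~\cite{GacsTorma2022}: their Theorem on eroders with sparse error sets states precisely that, for monotone eroders, $\{Y_n(z)={\tt 1}\}$ implies the existence of such an explanation graph with size linear in its error set. The only thing left to check is that our noisy dynamics $Y_{n+1}=\Phi[Y_n]\lor\mathbf 1_{{\sf err},n}$ matches their convention, where errors may only add ${\tt 1}$'s and monotonicity ensures this is the worst case. After that check, the geometric-series counting above completes the proof.
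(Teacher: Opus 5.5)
Your proposal is correct and follows the paper's route. The paper gives no proof of this lemma beyond citing Toom and Gács--Torma, and your Toom-graph Peierls sketch, which defers the explanation-graph construction (distinct error sinks, edge count linear in the number of sinks) to those same sources, is the standard argument behind that citation.

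One precision: eroders such as the paper's $\Phi$ can have static directions with $\delta_i=0$ (for example $0\in S_2$ for the cross-gated eroder), so the unit gain per backward step cannot hold for purely spatial forms. You need spacetime forms such as $\alpha_i\widetilde{\ell}_i(x)+(\alpha_i\delta_i-\Delta/m)\,t$ with $\Delta=\sum_i\alpha_i\delta_i$; these still sum to zero and gain at least $\Delta/m$ per backward step along $S_i$.
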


To implement the general strategy outlined above, we first identify the geometric conditions under which the coarse-grained evolution $\Phi$ can be chosen to be an eroder.
For an update type $a\in\mathcal A$ with neighborhood $\mathcal N_a$, we call a subset $Z\subseteq\mathcal N_a$ a \emph{zero set} if fixing all spins in $Z$ to ${\tt 0}$ forces the output of the update to be ${\tt 0}$, irrespective of the remaining inputs:
\begin{align}
\eta(\delta)={\tt 0}
\text{ for every }\delta\in Z
\,\,\Longrightarrow\,\,
f_a(\eta)={\tt 0}.
\label{eq:zero_set}
\end{align}
Thus, a zero set specifies local input patterns sufficient to force a clean output.
For example, for Toom's NEC majority rule on $\mathcal N=\{0,e_1,e_2\}$, any two of the three sites form a zero set.

Physically, the zero sets determine which interfaces are stable under the noiseless dynamics.
Consider a configuration that is completely clean on one side of a hyperplane and defective on the other.
We call a nonzero linear form $\ell\in(\Z^d)^*$ a \emph{stable direction} if a clean half-space $\cH$ normal to $\ell$ (with $\ell$ pointing into $\cH$) cannot be invaded by defects.
Equivalently, defining
\begin{align}
\cN_a^{+}(\ell) = \{\delta\in\cN_a:\ell(\delta)\geq0\},
\end{align}
we require $\cN_a^{+}(\ell)$ to be a zero set for every update type $a\in\cA$.
Indeed, for a site on the clean side $\cH$ of the interface, all inputs in $\cN_a^{+}(\ell)$ remain clean; the zero-set condition, therefore, forces every update at that site to output ${\tt 0}$.
Thus, a stable direction defines a half-space boundary across which defects cannot propagate, as illustrated in~\figref{fig:geometric_conditions}(a).

\begin{figure}[!t]
    \centering
    \includegraphics[width=0.95\linewidth]{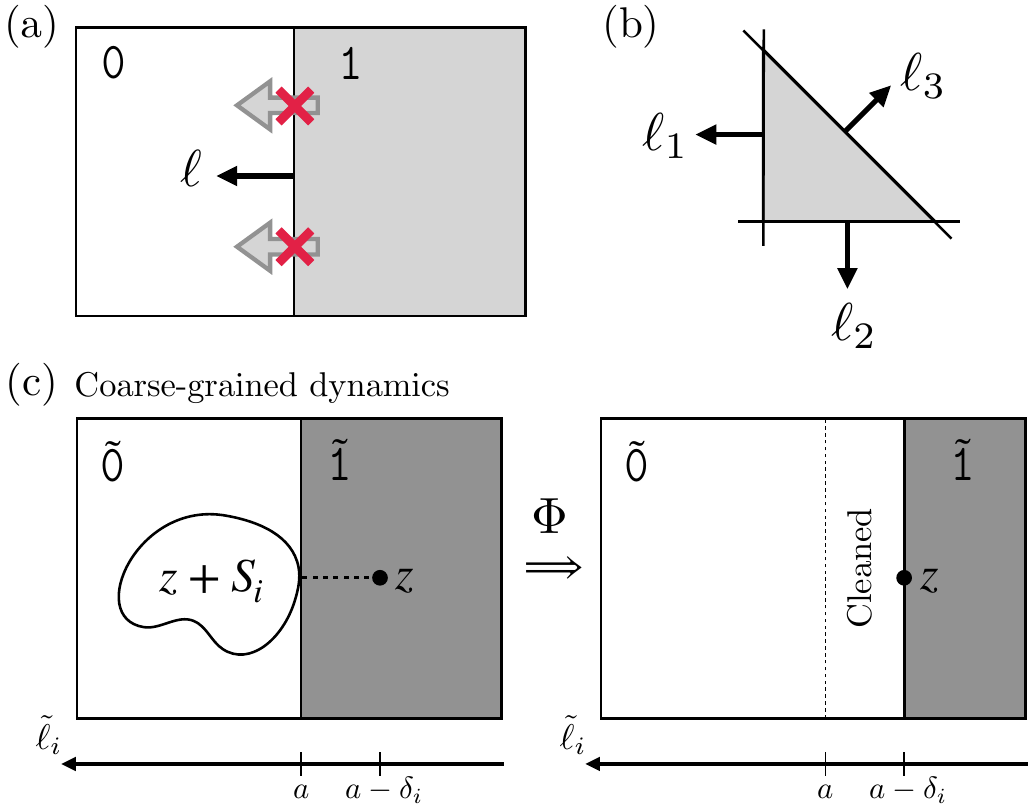}
    \caption{\textbf{(a,b) Geometric conditions for stability.} (a) illustrates that a direction $\ell$ is stable if a defect-free half-space cannot be invaded by defects under the noiseless dynamics. (b) shows a balanced family of stable directions $\ell_1 = -e^{\ast}_1$, $\ell_2 = -e^{\ast}_2$, and $\ell_3 =e^{\ast}_1+e^{\ast}_2$ with positive weights $\alpha_i = 1$ for $i \in \{1,2,3\}$, such that $\sum_i\alpha_i\ell_i=0$. 
    \textbf{(c) Coarse-grained eroder.} Initially, the half-space $\widetilde{\ell}_i(y)\geq a$ is clean (white).
    The condition $\widetilde{\ell}_i(s)\geq\delta_i$ for all $s\in S_i$ ensures that any target cell $z$ with $\widetilde{\ell}_i(z)\geq a-\delta_i$ has its entire predecessor set $z+S_i$ inside this clean half-space.
    The predecessor set is represented by the outlined region.
    One application of $\Phi$ therefore clears all such target cells, extending the guaranteed clean region to $\widetilde{\ell}_i(y)\geq a-\delta_i$.
    The dashed line marks the original interface; the remaining gray region may contain defects.
    }
    \label{fig:geometric_conditions}
\end{figure}

A single stable interface protects the clean half-space but need not make a droplet shrink.
At the coarse scale, a stable interface may remain stationary or move into the defective region as the clean region expands. We call the corresponding directions \emph{static} and \emph{advancing}, respectively.
To obtain erosion, we seek several stable interfaces that confine the droplet between them, so that advancing one interface without the others retreating squeezes the droplet.
We express this geometric requirement through a \emph{balanced} family of stable directions $\{\ell_1,\ldots,\ell_m\}$: there exist positive weights $\alpha_i>0$ such that $\sum_{i=1}^{m}\alpha_i\ell_i=0$, see \figref{fig:geometric_conditions}(b); intuitively, this allows the stable interfaces to confine defects between them in at least one spatial direction.
If at least one of these interfaces also advances, the resulting squeezing provides the geometric ingredient needed to construct the coarse-grained eroder $\Phi$.

With these geometric ingredients, we define the coarse-graining map $\cC$, illustrated in \figref{fig:coarse-graining}(b).
For a coarse length scale $H$, we partition the microscopic lattice into congruent, possibly anisotropic blocks $Q_H(z)$ indexed by coarse lattice sites $z\in\Z^d_{\rm cg}$.
Each block has side length $L_{H,j}$ (functions of $H$) along the $j$-th axis.
A coarse cell is declared defective if it contains at least one microscopic defect:
\begin{align}
\cC[\eta](z) &= \begin{cases} {\tt 1}, & \text{if $\eta(x)={\tt 1}$ for some $x\in Q_H(z)$},\\
{\tt 0}, & \text{otherwise}. \end{cases} \label{eq:coarse_graining}
\end{align}
We also introduce an $H$-dependent coarse time scale $T_H$ and sample the microscopic dynamics at times $t_n=nT_H$, writing $X_n=\cC[\eta_{t_n}]$.
The block dimensions $L_{H,j}$ and the coarse time $T_H$ are chosen to match the microscopic dynamics.
In particular, anisotropic blocks may be required when defect propagation occurs on parametrically different scales along different spatial directions.

We now use the balanced family of stable directions to construct the coarse-grained rule $\Phi$.
For each direction $\ell_i$, we choose a finite set of coarse displacements $S_i\subset\Z^d_{\rm cg}$, independently of the coarse-graining scale $H$.
For a target cell $z$, the translated set $z+S_i$ is its predecessor region associated with direction $i$.
The coarse rule declares $z$ clean exactly when all cells in at least one of these predecessor regions are clean:
\begin{align}
\Phi[X](z)=\bigwedge_{i=1}^{m}\left(\bigvee_{s\in S_i}X(z+s)\right). \label{eq:Phi_maintext}
\end{align}
The inner OR is zero precisely when every cell in $z+S_i$ is clean.
The outer AND therefore makes the output zero as soon as this condition holds for any one direction.

We choose the predecessor regions so that $\Phi$ preserves or advances the corresponding clean interfaces, see \figref{fig:geometric_conditions}(c).
Let $\widetilde{\ell}_i$ denote $\ell_i$ expressed in coarse coordinates, with larger values of $\widetilde{\ell}_i$ corresponding to the clean side.
This common change of coordinates preserves the balance relation $\sum_i\alpha_i\widetilde{\ell}_i=0$.
For each direction, we choose an offset $\delta_i\geq0$ and require
\begin{align}
\widetilde{\ell}_i(s)\geq\delta_i \qquad \text{for every }s\in S_i.
\end{align}
To understand this condition, suppose that all coarse cells on one side of the plane $\widetilde{\ell}_i(y)=a$, namely those satisfying $\widetilde{\ell}_i(y)\geq a$, are clean.
For any target cell $z$ satisfying $\widetilde{\ell}_i(z)\geq a-\delta_i$, every predecessor cell obeys $\widetilde{\ell}_i(z+s)=\widetilde{\ell}_i(z)+\widetilde{\ell}_i(s)\geq a$ and is therefore clean.
The rule $\Phi$ consequently makes the entire half-space $\widetilde{\ell}_i(z)\geq a-\delta_i$ clean after one coarse update.
Thus, $\delta_i=0$ guarantees preservation of the clean half-space, while $\delta_i>0$ guarantees its expansion into the defective region, toward smaller values of $\widetilde{\ell}_i$.

For a finite defect configuration, we can place a clean half-space beyond the furthest defect in each selected direction.
Under repeated application of $\Phi$, these half-space bounds are preserved or advance inward.
Together with balance, a positive net advance $\sum_i\alpha_i\delta_i>0$ forces the remaining defect region to disappear after finitely many coarse steps.
Thus, $\Phi$ is an eroder, as proved in \appref{sec:sm-coarse-eroder}. 
The block estimates below determine whether the microscopic dynamics reproduce this coarse-grained advance with high probability.

\subsection{Block failure}
\label{sec:main-block-failure}
The remaining dynamical question is whether, over a duration of $T_H$, the microscopic evolution is dominated by the coarse-grained action prescribed by $\Phi$ up to a rare failure event. 
Directly bounding the probability of such failures is challenging. We instead analyze a simpler but more adversarial process whose failure probability can be bounded tractably. Monotonicity then allows us to transfer this upper bound to the original dynamics.

To this end, we choose a finite region formed by coarse cells containing the core and its predecessor regions, which we call the \emph{buffer}, illustrated in \figref{fig:block-failure} as the region enclosed by the thick black line. 
We run an auxiliary noiseless microscopic process inside this buffer, with all exterior sites frozen to ${\tt 1}$.
By monotonicity, the auxiliary process dominates the original noiseless dynamics inside the buffer when both use the same update marks, provided the original initial configuration is no more defective there. 
Freezing the exterior to ${\tt 1}$ imposes the most defective boundary condition, so the comparison requires no assumption on the configuration outside the buffer in the original system.
We now specify the predecessor regions and the buffer. Fix a coarse cell $z$. For each stable direction $\ell_i$, we denote the set of microscopic sites in the predecessor region associated with that direction by 
\begin{equation}
Q_H(z+S_i) := \bigcup_{s\in S_i}Q_H(z+s).
\end{equation}
We write $\cB_H(z)$ for the set of microscopic sites around $z$ corresponding to the buffer.

\begin{figure}[!t]
    \centering
    \includegraphics[width=1\linewidth]{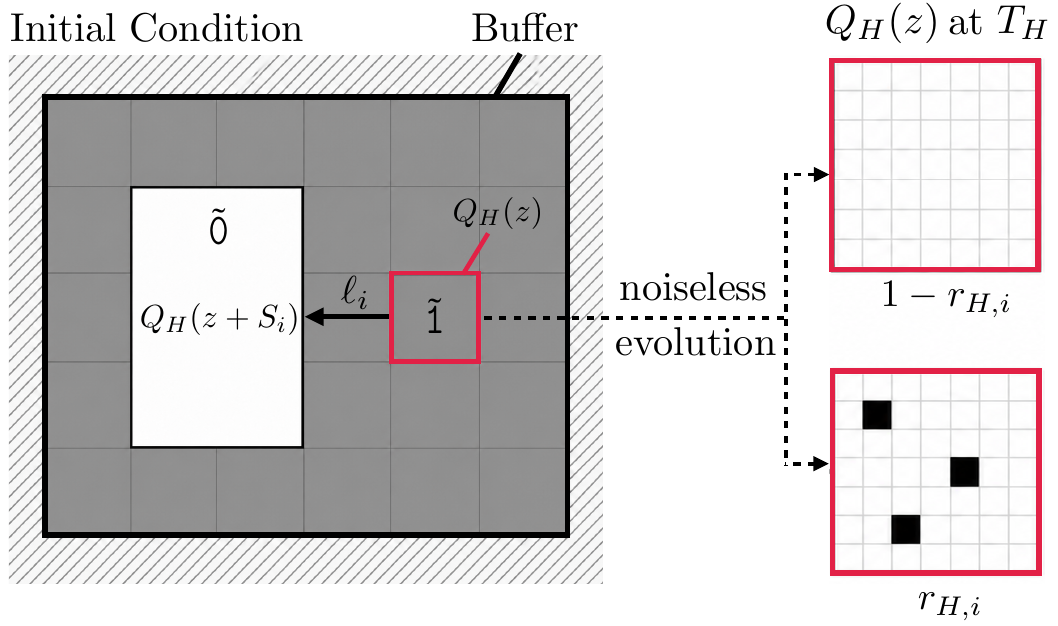}
    \caption{\textbf{Illustration of the noiseless block process.}
   (Left) For a stable direction $\ell_i$, we define the block process to start from the worst-case configuration: the predecessor region $Q_H(z+S_i)$ is clean ($\tilde{\tt 0}$), while the rest of the buffer is defective ($\tilde{\tt 1}$). All the sites outside the buffer are frozen to ${\tt 1}$. (Right) After noiseless continuous-time evolution for a block time $T_H$, the block process for the core cell $Q_H(z)$ succeeds if every site in the core is clean and fails if at least one defect remains (black squares). These outcomes have probabilities $1-r_{H,i}$ and $r_{H,i}$, respectively.}
    \label{fig:block-failure}
\end{figure}

For each $i$, we define the \emph{noiseless block process} by running the microscopic dynamics inside $\cB_H(z)$, initially clean on $Q_H(z+S_i)$ and defective on all remaining buffer sites, with the exterior frozen to ${\tt 1}$; see  \figref{fig:block-failure}. 
We compare the original noiseless dynamics and the auxiliary block process using the same update schedule inside the buffer. Since the auxiliary process starts from a more defective configuration and has a fully defective exterior, monotonicity ensures that it remains at least as defective as the original process throughout the evolution.

We define the \emph{block-failure probability} $r_{H,i}$ as the probability that the core cell $Q_H(z)$ still contains a defect after time $T_H$ (see \figref{fig:block-failure}):
\begin{align}
r_{H,i}
=
\PP\Big(
Q_H(z)\text{ contains a defect at time }T_H
\Big).
\label{eq:block_failure}
\end{align}
The essential dynamical requirement is that $r_{H,i}\to0$ as $H\to\infty$ for every direction in the balanced family.

The block-failure probabilities above control deviations from the prescribed coarse evolution caused by random update times in the noiseless dynamics. Hostile noise provides an additional source of deviations.
We therefore declare an effective error ${\sf error}_v$ at a coarse spacetime point $v=(n,z)$ whenever at least one of the corresponding noiseless block tests fails or a hostile noise mark occurs in $\cB_H(z)\times(nT_H,(n+1)T_H]$.
Writing $b_H:=|\cB_H(z)|$, the probability of an individual effective error satisfies
\begin{equation}
\PP({\sf error}_v)\leq p_H(\eps)
:= 
\sum_{i=1}^{m}r_{H,i}+\eps b_H T_H .
\end{equation}
which can be made to vanish as $H \rightarrow \infty$ and $\epsilon \rightarrow 0$.

However, these effective errors need not be independent: when overlapping buffers share a single hostile noise mark, errors at coarse cells become correlated. To make this correlated error sparse, we require each buffer to lie within a fixed finite neighborhood of $z$ in coarse-lattice coordinates, independent of $H$.
Each spacetime buffer then overlaps only a bounded number of others.
Consequently, from any collection of specified error events, we can select at least a fixed positive fraction, independent of $H$, whose spacetime buffers are pairwise disjoint.
The selected events depend on disjoint sets of Poisson marks and are therefore mutually independent.
This makes the simultaneous occurrence of all the specified errors exponentially suppressed with the number of events, i.e., \emph{sparse}.  
We now state the central result.

\begin{theorem}[Continuous-time stability criterion]
\label{thm:maintext}
Let $\cA$ be a finite-range, translation-invariant, monotone continuous-time process for which $\bar{\tt 0}$ is an absorbing state in the absence of noise.
Suppose there exists a family of spacetime coarse-grainings, characterized by a scale $H>0$, with a scale-independent coarse map $\Phi$ constructed from a balanced family of stable directions $\{\ell_i,\alpha_i\}_{i=1}^m$, such that $\sum_{i=1}^{m}\alpha_i\delta_i>0.$ We require that the noiseless block-failure probabilities $r_{H,i}$ obey
\begin{align} \label{eq:assumptionKey}
\lim_{H\to\infty}\max_i r_{H,i}=0,
\end{align}
which are defined using a buffer whose size, in units of coarse-grained cells, remains uniformly bounded as $H \rightarrow \infty$.
Then the homogeneous state $\bar{\tt 0}$ is stable in the sense of \eqnref{eq:stability_maintext}.
\end{theorem}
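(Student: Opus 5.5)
The plan is to reduce the continuous-time process to the noisy synchronous dynamics of Lemma~\ref{lem:main-toom} through a pathwise domination argument. The proof has four steps: construct a coarse process that dominates the true one, show the recorded error field is sparse, apply Toom's theorem, and read off the microscopic bound.

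\emph{Step 1: pathwise domination.} Couple the original noisy process with the sequence $Y_0=\bar{\tt 0}$, $Y_{n+1}=\Phi[Y_n]\lor\mathbf{1}_{{\sf err},n}$, where ${\sf error}_{(n,z)}$ is the event defined above. This event is a function of the Poisson marks in $\cB_H(z)\times(nT_H,(n+1)T_H]$ only.

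\begin{itemize}
\item[(a)] We claim by induction on $n$ that $X_n=\cC[\eta_{t_n}]\le Y_n$ almost surely. The base case is immediate, since $X_0=\bar{\tt 0}$.
\item[(b)] For the inductive step, fix $z$ with $Y_{n+1}(z)={\tt 0}$. Then ${\sf error}_{(n,z)}$ did not occur, and $\Phi[Y_n](z)={\tt 0}$.
\item[(c)] The condition $\Phi[Y_n](z)={\tt 0}$ gives some $i$ with $Y_n\equiv{\tt 0}$ on $z+S_i$. By the induction hypothesis, $\eta_{t_n}$ is clean on $Q_H(z+S_i)$.
\item[(d)] No hostile mark falls in the buffer during the interval. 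So inside $\cB_H(z)$ the true process evolves noiselessly, and its initial configuration there is dominated by the worst-case block initial condition. The frozen ${\tt 1}$ exterior dominates whatever the true process has outside.
\item[(e)] Monotonicity under the common clock realization, together with the fact that $f_a$ is monotone in the boundary inputs, shows that the true process restricted to $\cB_H(z)$ is dominated by the block process $i$. That block test succeeded, so $Q_H(z)$ is clean at $t_{n+1}$, which gives $X_{n+1}(z)={\tt 0}$.
\end{itemize}

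One technical point needs care here. The block process is a finite-volume Markov chain driven by the same marks, so domination should be checked at each ring time inside the buffer. Ring times are almost surely distinct and locally finite, which makes this an ordinary induction on the ordered update events.

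\emph{Step 2: sparsity of the error field.} I expect this step to be the main obstacle. The error indicators at different $v$ are correlated through shared marks, and the raw event ${\sf error}_v$ conditioned on the past is not independent of the coarse state. I would avoid conditioning altogether: the error field is a deterministic function of the marks, and Step 1 holds pathwise.

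\begin{itemize}
\item[(a)] Let $R$ be a uniform bound, in coarse units, on the radius of the buffer. Two spacetime buffers can intersect only if their coarse points are within distance $R$ in space and lie in the same time slab. So the dependency graph on coarse spacetime points has bounded degree $D$, independent of $H$.
\item[(b)] Given a finite set $A$, a greedy selection extracts $A'\subseteq A$ with $|A'|\ge|A|/(D+1)$ and pairwise disjoint spacetime buffers.
\item[(c)] Poisson marks in disjoint regions are independent. Hence
\[
\PP\bigl(\textstyle\bigcap_{v\in A}{\sf error}_v\bigr)\le p_H(\eps)^{|A|/(D+1)}.
\]
This is sparsity with parameter $q=p_H(\eps)^{1/(D+1)}$.
\end{itemize}

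The only ingredients are the uniformly bounded buffer from the hypothesis and the fact that $\Phi$ is independent of $H$.

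\emph{Step 3: applying Toom's theorem.} The map $\Phi$ is a monotone eroder by \appref{sec:sm-coarse-eroder}, using balance together with $\sum_i\alpha_i\delta_i>0$. Given $\gamma>0$, first use \eqnref{eq:assumptionKey} to choose $H$ with $\sum_i r_{H,i}$ small. With $H$ now fixed, $b_HT_H$ is a finite constant, so taking $\eps$ small makes $q\le q_c$ and $\rho_\Phi(q)<\gamma$. Lemma~\ref{lem:main-toom} then gives $\sup_n\PP(Y_n(z)={\tt 1})\le\gamma$. By Step 1, the same bound holds for $\PP(\eta_{t_n}(x)={\tt 1})$.

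\emph{Step 4: intermediate times.} The argument so far controls only the sample times $t_n=nT_H$. For $t\in(t_n,t_{n+1})$, note that $\eta_t(x)={\tt 1}$ requires one of two things. Either a noise mark landed in a finite space-time neighbourhood of $x$, which has probability at most $O(\eps T_H)$. Or a defect already present at time $t_n$ influenced $x$ within time $T_H$. The causal region can be truncated at distance $cT_H$ using Lemma~\ref{lem:speed}, at the cost of an error term that is exponentially small in the distance. Alternatively, the block argument can be run with the same buffer and a shifted time grid. Either route gives $\sup_t\PP(\eta_t(x)={\tt 1})\to0$ as $\eps\to0$, which is \eqnref{eq:stability_maintext}.
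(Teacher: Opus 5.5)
Your Steps 1--3 are sound and follow the paper's proof closely. You merge the block inequality and the induction $X_n\le Y_n$ into one step. Your greedy extraction of pairwise-disjoint buffers is equivalent to the paper's colouring-plus-pigeonhole bound. (Two buffers can meet when their centres are up to $2R$ apart, not $R$, but the degree is still bounded uniformly in $H$.)

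The gap is Step 4: neither of your two routes closes under the theorem's hypotheses.

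\emph{Causal-cone route.} After truncating at distance $cT_H$, you still have to bound the probability that \emph{some} site of the cone is defective at time $t_n$. The only bound uniform in $n$ you have is the per-site Toom bound $\rho_\Phi(q_H(\eps))$. A union bound over the cone therefore costs a factor of order $T_H^{d}$, or, at the coarse level, the number of cells the cone meets. That number is unbounded whenever $T_H$ outgrows a cell side; in the ${\sf R}$ process, $T_H\sim H\log H$ against cell height $H$. At fixed $H$ the per-site bound does not vanish as $\eps\to0$: it tends to about $\rho_\Phi(q_H(0))$, a fixed positive number. So you must send $H\to\infty$. Whether $T_H^{d}\rho_\Phi(q_H(0))\to0$ then depends on decay rates for $r_{H,i}$ and $\rho_\Phi$, and on a relation between $T_H$ and the cell sizes. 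The theorem assumes none of these. (Your noise term is also $O(\eps T_H^{d+1})$ rather than $O(\eps T_H)$, though that is harmless at fixed $H$.)

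\emph{Shifted-grid route.} The new initial coarse state is $\cC[\eta_s]$ with $0<s<T_H$, which is not $\bar{\tt 0}$. So the base case of your Step 1 induction fails, and controlling $\cC[\eta_s]$ is the same intermediate-time problem again. The hypotheses control the block process only at time exactly $T_H$; for an advancing direction the core starts fully defective. Nothing in the hypotheses makes the buffer margin exceed $v_\ast T_H$, so you cannot fall back on finite speed of propagation either.

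The missing idea is time monotonicity from the minimal state, which is what the paper uses (property \ref{M3} of Lemma~\ref{lem:couplings}). Because $\bar{\tt 0}$ is the smallest configuration, the state reached from $\bar{\tt 0}$ after any time $u$ dominates $\bar{\tt 0}$. By the Markov property and the attractive coupling with shared marks, this gives $\PP^{(\eps)}_{\bar{\tt 0}}(\eta_t(x)={\tt 1})\le\PP^{(\eps)}_{\bar{\tt 0}}(\eta_{t+u}(x)={\tt 1})$ for all $u\ge0$. Taking $t+u=t_{n+1}$ shows that the intermediate-time probability is at most the block-time probability. Hence $\sup_{t\ge0}$ equals the limit along $t_n$, and the bound from your Step 3 already gives \eqnref{eq:stability_maintext}, with no causal-cone estimate or regridding needed.
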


\begin{proof}
    The proof follows directly from the coarse-graining construction discussed in the previous paragraphs.
Suppose that $\Phi[X_n](z)={\tt 0}$.
Then, by the definition of $\Phi$ [\eqnref{eq:Phi_maintext}], there is at least one stable direction $i$ for which every predecessor cell in $z+S_i$ is clean at time $t_n$. 
By monotonicity, the subsequent microscopic evolution inside the corresponding buffer is dominated by the worst-case noiseless block process used to define $r_{H,i}$, unless a hostile error occurs during the interval $(t_n,t_{n+1}]$.
Consequently, the coarse configuration after one microscopic time block obeys the pathwise bound
\begin{align}
X_{n+1}(z)
\leq
\Phi[X_n](z)\lor\mathbf{1}_{{\sf err},n}(z),
\label{eq:domination}
\end{align}
where ${\sf error}_{n,z}$ denotes the event that either some noiseless block process fails or a hostile error occurs in the buffer.
Thus, the microscopic dynamics are dominated by the deterministic eroder $\Phi$ supplemented by an effective spacetime error field, as illustrated in~\figref{fig:coarse-graining}(c).

To bound the effective error field independently of the particular configuration $X_n$, we define ${\sf error}_{n,z}$ conservatively to occur if any of the $m$ noiseless block processes associated with $z$ fail, or if at least one hostile error occurs in the buffer of $z$ during $(t_n,t_{n+1}]$.
Taking a union bound over the failure events gives
\begin{align}
\PP({\sf error}_{n,z}) &\leq \sum_{i=1}^{m} r_{H,i} + \left(1-e^{-\eps b_HT_H}\right)
\nonumber\\
&\leq \sum_{i=1}^{m} r_{H,i}
+
\eps b_HT_H
=:p_H(\eps),
\label{eq:block_error_bound}
\end{align}
where we have used the fact that translational invariance makes $b_H \equiv |\cB_H(z)|$ independent of $z$. 
The two terms in \eqnref{eq:block_error_bound} have distinct origins: the first measures the failure of the noiseless microscopic dynamics to realize the intended coarse evolution, while the second originates from physical noise during the coarse time interval.
For sufficiently large $H$, the first term is small by assumption \eqnref{eq:assumptionKey}. 
At that fixed $H$, the second term can be made arbitrarily small by taking $\eps$ sufficiently small~\cite{Gray1999}.

As mentioned before, the effective errors are not independent, since neighboring coarse cells may have overlapping buffers. The scale-independent locality condition, however, bounds these correlations independently of $H$. Indeed, each event ${\sf error}_{n,z}$ depends only on the Poisson marks in a fixed finite neighborhood of $z$ on the coarse lattice during the time interval $(t_n,t_{n+1}]$. The corresponding dependency graph on coarse spacetime therefore has a chromatic number $K$ bounded independently of $H$. For any finite set $A$ of coarse spacetime points, we may color this graph with $K$ colors, and let $A_k$ be the largest color class. Using the pigeonhole principle, it immediately follows that $|A_k|\geq |A|/K$. The events within $A_k$ depend on disjoint sets of Poisson marks and are therefore independent, which gives
\begin{align}
\PP\bigg(\bigcap_{v\in A}{\sf error}_v\bigg)
&\leq
\PP\bigg(\bigcap_{v\in A_k}{\sf error}_v\bigg) \leq
p_H(\eps)^{|A|/K}.
\label{eq:sparse_error_proof}
\end{align}
Thus, the effective error field is sparse~\cite{GacsTorma2022}, with a sparsity parameter of $q_H(\eps)=p_H(\eps)^{1/K}$, which can be made arbitrarily small by first increasing $H$ and then decreasing $\eps$ at that fixed $H$.

Finally, the condition $\sum_i\alpha_i\delta_i>0$ ensures that $\Phi$ is a geometric eroder. To apply Toom's theorem, introduce an auxiliary coarse process driven by the same effective error events:
\begin{align}
Y_{n+1}(z) &= \Phi[Y_n](z)\lor \mathbf{1}_{{\sf err},n}(z)
\end{align}
with $Y_0 = \bar{\tt 0}$. Since $X_0=Y_0=\bar{\tt 0}$, and the map $\Phi$ is monotone, the pathwise bound in \eqnref{eq:domination} gives
\begin{align}
X_{n+1}(z)
&\leq \Phi[X_n](z)\lor\mathbf{1}_{{\sf err},n}(z)\nonumber\\
&\leq \Phi[Y_n](z)\lor\mathbf{1}_{{\sf err},n}(z)
=Y_{n+1}(z)
\end{align}
whenever $X_n\leq Y_n$. Thus, by induction, $X_n\leq Y_n$ holds pathwise for every $n$. 
\end{proof}

By \eqnref{eq:sparse_error_proof}, the error field driving $Y_n$ is sparse, with parameter $q_H(\eps)$. Toom's theorem~\cite{Toom1980,GacsToomProof,GacsTorma2022} therefore bounds $\PP(Y_n(z)={\tt 1})$ uniformly in $n$ by a quantity that vanishes as $q_H(\eps)\to0$ (see Lemma~\ref{lem:main-toom}). For every microscopic site $x\in Q_H(z)$, we have $\eta_{t_n}(x)\leq X_n(z)\leq Y_n(z)$, implying that this bound transfers to the microscopic process at the block times. Choosing $H$ sufficiently large and then $\eps$ sufficiently small makes the bound vanishingly small. To extend the bound from discrete block times to all times, we use the temporal monotonicity of the noisy process, i.e., starting from $\bar{\tt 0}$ a given site is at least as likely to contain a defect at a later time (see \appref{sec:sm-monotone}). This concludes the proof of stability in the sense of \eqnref{eq:stability_maintext}.

\section{Applications}
\label{sec:applications}

As a consistency check, we establish that the stability condition in Gray's theorem~\cite{Gray1999} for exclusive birth-death processes is a special case of our criterion; see \appref{sec:gray}. 
Next, we focus on three continuous-time processes with competing microscopic update rules that violate birth-death exclusivity: ($i$) the {\sf R} process~\cite{LakeRo2025}, ($ii$) the 2D CG process, and ($iii$) the 3D CG process.

In these applications, the central task is to construct a coarse-graining that satisfies both the geometric and dynamical requirements of Theorem~\ref{thm:maintext}. 
The local update rules determine the available stable directions.
The difficult part is to choose the predecessor regions $\{ S_i \}$, block dimensions $H$, and coarse timestep $T_H$. 
These choices must ensure that $\Phi$ erodes every finite defect set and that, whenever it prescribes a clean target block, the noiseless microscopic dynamics leave that block clean at time $T_H$ with block-failure probability vanishing as $H\to\infty$. 
For the {\sf R} and 2D CG processes, we also allow unequal update rates to examine whether bistability persists under an imbalance between the competing updates.

\subsection{Continuous-time {\sf R} Process}

Here we normalize the AND update rate to $1$ and denote the OR update rate $\mu$.  
For the $\bar{\tt 0}$ configuration of the {\sf R} process, the directions $\pm e_2^*$ form a balanced family of stable directions. 
Horizontal OR updates act within individual rows, while vertical AND updates preserve clean sites and can clear defects adjacent to a clean horizontal half-space, providing a mechanism for its advance. 
Thus, the geometric requirements of Theorem~\ref{thm:maintext} are readily verified. 
We use this geometry to construct a coarse eroder and choose the block parameters so that the corresponding block-failure probabilities $r_{H,i}$ vanish as $H\to\infty$.

To understand the asynchronous erosion mechanism of the {\sf R} process, consider the dynamics of a row immediately below an infinite clean row. The clean row remains clean under the noiseless dynamics, so the exposed row evolves autonomously: each defective site is reset to ${\tt 0}$ at rate $1$, independently of the configuration below, while each clean site becomes defective at rate $\mu$ whenever at least one horizontal neighbor is defective, as illustrated in \figref{fig:R-coarse}(a). Thus, at a site with $n\in\{0,1,2\}$ defective horizontal neighbors, the birth rate is $\mu\mathbf{1}_{\{n\geq1\}}$ and the death rate is $1$.

The exposed-row process described above is stochastically dominated by the standard one-dimensional contact process with a death rate of $1$ and a birth rate of $\mu$ per defective neighbor. Indeed, when a site has $n$ defective neighbors, the birth rate in the latter process is $\mu n\geq\mu\mathbf{1}_{\{n\geq1\}}$, while the death rates coincide. 
It is known that for $\mu<\lambda^{\rm 1D}_c\simeq 1.649$, the contact process is subcritical~\cite{ZieoldGrillenberger1988}. 
Standard subcritical decay estimates then imply that, even when the exposed row is initially fully defective, the probability that any given site is defective at time $t$ is bounded by $Ce^{-\gamma t}$ for some $C,\gamma>0$.

We now turn the exposed-row decay into the block estimates required by Theorem~\ref{thm:maintext}. 
To this end, we begin by coarse graining the lattice into anisotropic boxes $Q_H(z)$ of height $H$ and width $L_H$, whose $H$-dependence will be set later. The predecessor regions associated with the advancing directions $\pm e_2^*$ are $S_1=\{u e_1+e_2:|u|\leq K_0\}$ and $S_2=\{u e_1-e_2:|u|\leq K_0\}$, where $K_0$ is a constant fixed by $\mu$ (via corresponding contact process's decay rate). The buffer $B_H(z)$ consists of all coarse cells within a distance $K_0+1$ of $z$, and its exterior is frozen at ${\tt 1}$. By reflection symmetry about the horizontal axis, it suffices to consider only $S_1$. In the corresponding noiseless block process, the source strip immediately above the core is clean, while the rest of the buffer is initially defective. We show that this clean strip clears all $H$ rows of the core by time $T_H$ with probability tending to one as $H\to\infty$.

\begin{figure}[!t]
    \centering
    \includegraphics[width=1\linewidth]{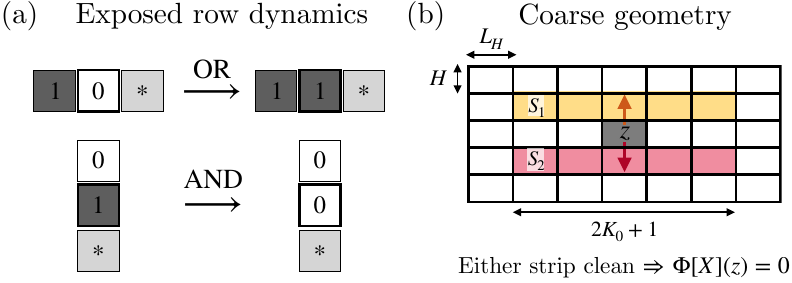}
    \caption{ 
{\bf (a)} Horizontal OR updates create defects at rate $\mu$ whenever at least one horizontal neighbor is defective, while vertical AND updates (below a clean row) remove defects at rate $1$. $*$ denotes either state.
{\bf (b)} Coarse cells have width $L_H$ and height $H$. The colored strips are alternative predecessor regions for the target cell $z$; if either is clean, $\Phi$ prescribes a clean target. Each block test starts with one chosen strip clean and all remaining sites defective, with the buffer exterior fixed at $1$.}
    \label{fig:R-coarse}
\end{figure}

\begin{proposition}
\label{prop:R_maintext}
    For every fixed $0<\mu<\lambda^{\rm 1D}_c$, there exists $K_0\in\mathbb N$, independent of $H$, such that the anisotropic coarse-graining described above with $L_H=H\lceil\log H\rceil$, $T_H=\mathcal{O}(H\log H)$, and buffer radius $K_0+1$ (on the coarse-grained lattice) satisfies
\begin{equation}
\lim_{H\to\infty}\max_{i\in\{1,2\}}r_{H,i}=0,
\end{equation}
where $r_{H,i}$ are the noiseless block-failure probabilities defined in \eqnref{eq:block_failure}.
\end{proposition}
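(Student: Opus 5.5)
The plan is to analyze the block process for $S_1$ directly; the case of $S_2$ follows by the reflection symmetry noted above. The block process has the following data:
\begin{itemize}
\item the clean source strip $Q_H(z+S_1)$, of width $(2K_0+1)L_H$ and height $H$, sitting directly above the core;
\item a core $Q_H(z)$ of $H$ rows and width $L_H$;
\item every other buffer site defective, with the exterior frozen to ${\tt 1}$.
\end{itemize}

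Two structural facts drive the argument.

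\textbf{Vertical persistence.} A clean site can become defective only through a horizontal OR update. Vertical AND updates never create defects. Hence a clean horizontal segment can be eroded only from its two ends.

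\textbf{Exposed-row domination.} Suppose row $k-1$ is clean on an interval $I$ throughout a time window. Then, as in the exposed-row discussion, row $k$ restricted to $I$ evolves with death rate $1$ and birth rate $\mu\mathbf 1_{\{n\ge1\}}$. By monotonicity this is dominated by the one-dimensional contact process on $I$ with rate $\mu<\lambda_c^{\rm 1D}$, started fully defective and with defective sites pinned outside $I$.

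I will run a row-by-row induction with a shrinking clean trapezoid. Let $W_0=(2K_0+1)L_H$ be the half-width of the strip around the core's centre. Set $s=(3/\gamma)\log H$, where $C,\gamma$ are the subcritical decay constants. For $k=1,\dots,H$, define the event $G_k$: at time $t_k=ks$, row $k$ of the core column is clean on the interval $I_k$ of half-width $W_k=W_0-k\,c_0 s$. Here row $0$ is the bottom row of the strip, and $c_0$ is the speed constant of Lemma~\ref{lem:speed}.

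Two bad events spoil the induction.
\begin{itemize}
\item \emph{Lateral intrusion.} Defects from the ends of $I_{k-1}$, or from the ends of the strip rows, reach a site of $I_k$ or of the strip's interior during the relevant window. Lemma~\ref{lem:speed} bounds this by $e^{-c\,s}$ per site and window, once the margin $c_0 s$ exceeds the speed constant times $s$. Since $c\,s\propto\log H$, the union bound gives a polynomial bound.
\item \emph{Interior survival.} Given that row $k-1$ stays clean on $I_{k-1}$ during $[t_{k-1},\infty)$, row $k$ must clear. The contact-process domination gives that each site of $I_k$ is defective at time $t_k$ with probability at most $Ce^{-\gamma s}=CH^{-3}$.
\end{itemize}

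Once $I_k$ is entirely clean, vertical persistence implies it stays clean except by lateral intrusion, which is already accounted for. Together these ensure that $G_{k-1}$ propagates to $G_k$ up to the failure probability just bounded.

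Summing over the $H$ rows and the at most $W_0=O(H\log H)$ sites per row gives
\[
r_{H,1}\le H\cdot O(H\log H)\cdot CH^{-3}+\text{(speed-lemma terms)}\to0 .
\]
The total time is $T_H=Hs=O(H\log H)$, as claimed.

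Finally, I need $W_H\ge L_H/2$ so that the trapezoid still covers the core after $H$ steps. This requires $W_0-Hc_0s\ge L_H/2$, and since $Hs=(3/\gamma)H\log H\lesssim(3/\gamma)L_H$, it fixes $K_0\gtrsim 3c_0/\gamma$. This is exactly why $K_0$ depends on $\mu$ through $\gamma$ and not on $H$, and why $L_H$ must carry the factor $\lceil\log H\rceil$.

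The main obstacle is the interplay between the finite interval and the inductive conditioning. The contact-process decay estimate is for the infinite line, whereas here $I_{k-1}$ is finite and only clean from a random time onward. I would handle this in two steps. First, couple all rows to a single Poisson-mark realization. Second, bound the bad events directly, rather than conditioning: on the intersection of the no-intrusion events, the restricted process is pathwise dominated by a contact process on $\mathbb Z$ started fully defective. The latter's single-site decay $Ce^{-\gamma t}$ then applies without conditioning.
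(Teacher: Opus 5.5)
Your proposal is correct and follows essentially the same route as the paper's proof. Both arguments use a row-by-row induction over stages of length $O(\log H)$, with nested intervals shrinking by $O(\log H)$ per stage, and dominate the exposed row on the no-penetration event by an infinite-line subcritical contact process started fully defective, with AND marks acting as resets. Both then close with a union bound over stages, rows and sites, using $L_H=H\lceil\log H\rceil$ to absorb the $O(H\log H)$ total lateral loss so that $K_0\propto 1/\gamma$ is independent of $H$. The differences are cosmetic and only affect constants. You bound lateral penetration with the generic finite-speed Lemma~\ref{lem:speed}, where the paper uses an explicit Poisson tail for the OR-only front. Also, the strip's half-width is $(K_0+\tfrac12)L_H$, not $(2K_0+1)L_H$.
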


\begin{proof}
If the clean region were horizontally infinite, the result would follow immediately by applying the exposed-row estimate successively to each row. The source strip, however, has a finite width and can therefore be invaded by defects from its lateral boundaries due to OR updates. We accommodate the effect of this invasion through a nested sequence of intervals $I_0\supset I_1\supset\cdots\supset I_H$. The interval $I_0$ lies in the clean row immediately above $Q_H(z)$ and is wider than the horizontal extent of the core, whereas $I_H$ coincides with the horizontal extent of $Q_H(z)$. At the $k$th stage of the block process, the interval $I_k$ acts as a clean ceiling for the row below. While that row is being cleared, defects may penetrate a controlled distance into $I_k$ from either end (see~\appref{sec:sm-R-proof}). Sacrificing these boundary regions leaves the smaller interval $I_{k+1}$ clean with exponentially suppressed failure (see \appref{sec:sm-R-block}). The row below is then cleared on this smaller interval $I_{k+1}$ by the subcritical exposed-row dynamics, and the resulting clean interval becomes a clean boundary for the next stage. 
The same penetration bound also ensures that previously cleared rows remain clean on $I_{k+1}$ with high probability.
Iterating through this step clears the core row by row.

To make the above argument more precise, let $\alpha$ be the time allotted to clear each row such that $T_H = H \alpha$, and let $l_{\rm lat}$ be the lateral margin sacrificed at each end in $\alpha$ duration. The exponential decay of the exposed-row process allows us to take $\alpha=\mathcal{O}(\log(HL_H))$. Within this duration, it can be shown that the probability that the defects invading into the row from its lateral ends travel farther than a finite (but sufficiently large) multiple of $\alpha$ is exponentially suppressed in $\alpha$. This allows us to take $l_{\rm lat} = \mathcal{O}(\alpha)$. 
After $H$ stages, the total lateral loss is therefore $H l_{\rm lat}=\mathcal{O}(H\log (L_H H))$, which can be accommodated within a finite, scale-independent number of coarse cells by choosing $L_H=H\lceil\log H\rceil$ and taking $K_0$ sufficiently large but independent of $H$. Since $T_H=H\alpha$, it can be shown to be $\mathcal{O}(H \log H)$. 

Finally, each stage must both clear the newly exposed row and protect the rows cleared earlier from defects entering from the sides.
We bound the probability that either requirement fails and sum these bounds over all stages and relevant sites. 
Taking sufficiently large (but scale independent) constants in the choices of $\alpha$ and $l_{\rm lat}$ ensures that the strong suppression of these failures compensates for the growing number of events being counted. The resulting union bound gives $r_{H,1}=\mathcal{O}((HL_H)^{-2})\to0$
(see~\appref{sec:sm-R-proof} for a detailed proof and explicit choices of constants). Invoking the reflection symmetry immediately yields the same estimate for $r_{H,2}$, fulfilling the dynamical requirements of Theorem~\ref{thm:maintext}. 
\end{proof}

Accounting for hostile errors as in \eqnref{eq:block_error_bound}, together with Prop.~\ref{prop:R_maintext}, completes the stability proof of $\bar{\tt 0}$ under the continuous-time {\sf R} process. Finally, using the fact that at equal update rates, the {\sf R} automaton is symmetric under the combined action of a global spin flip and a $\pi/2$-rotation proves the stability of the $\bar{\tt 1}$ state.

Ref.~\cite{LakeRo2025} numerically observed the stability of the asynchronous {\sf R} process with equal rates for OR and AND updates; our contact-process comparison rigorously  predicts bistability beyond the equal rate case. In particular, our treatment establishes the stability of the $\bar{\tt 0}$ configuration even when OR updates are more frequent (and thus more adversarial) than AND updates, provided $\mu<\lambda_c^{\rm 1D}$. Applying the same argument to the opposite configuration gives the following sufficient condition for bistability:
\begin{align}
    (\lambda^{\rm 1D}_c)^{-1} < \mu < \lambda^{\rm 1D}_c.
\end{align}
The bistability result for the 2D {\sf R} process raises the question of whether such robustness persists in its higher-dimensional generalizations. Interestingly, the three-dimensional generalization of the {\sf R} process, although noise robust under synchronous updates, is not bistable under asynchronous updates~\cite{LakeRo2025}. Below, we study the case of the 3D {\sf R} process in some detail. 

The 3D {\sf R} process has two OR update types, $\lor_1$ and $\lor_2$, and one AND update type, $\land_3$, each occurring at a unit rate. Here, $\lor_i$ ($\land_i$) updates the state at $x$ to the OR (AND) of the spins at $x-e_i$, $x$, and $x+e_i$. To understand its instability, consider the dynamics on an exposed two-dimensional face of a defect set. We again compare the exposed-face dynamics with a contact process, now on $\Z^2$, but the direction of stochastic domination is reversed. The OR updates produce defects at a rate of at least $1/2$ per defective nearest neighbor, while the AND updates remove defects at a rate of at most one. Thus, the exposed-face dynamics stochastically dominate a contact process with a  birth rate of $1/2$ per neighbor and a death rate of one. The critical birth rate (per defective neighbor) of the 2D contact process has been numerically estimated to be $\lambda^{\rm 2D}_c\approx 0.41$~\cite{2D_CP_Vojta}. Provided that $\lambda^{\rm 2D}_c<1/2$, the 2D contact process with a birth rate per neighbor of $1/2$ is supercritical. By stochastic domination, defects on an exposed face of a defect set have a positive probability of persisting indefinitely under noiseless 3D {\sf R} dynamics, even when initially confined to a finite set.
Consequently, even though the 3D $\mathsf{R}$ process satisfies the geometric conditions for constructing a coarse eroder, it fails the dynamical requirements of Theorem~\ref{thm:maintext}: its block-failure probabilities do not vanish as $H\to\infty$ (see~\appref{sec:sm-3dR-proof} for a detailed proof).

\begin{figure*}[!t]
    \centering
    \includegraphics[width=\linewidth]{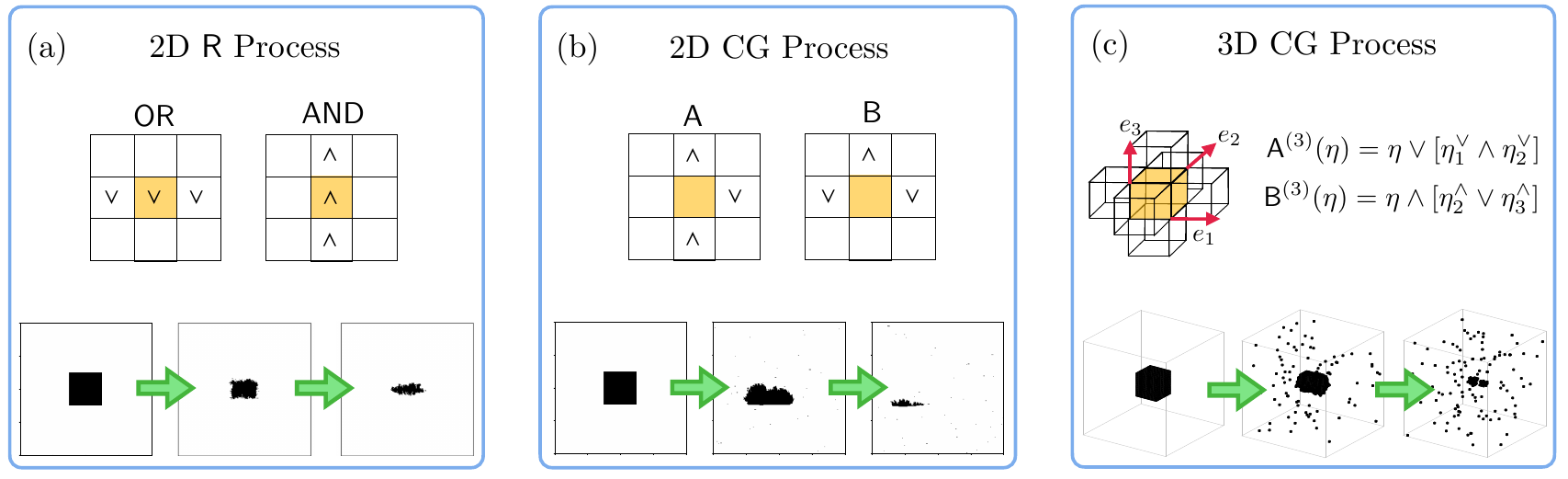}
    \caption{\textbf{Representative robust continuous-time PCAs.}
\textbf{(a)} The {\sf R} process, with horizontal OR and vertical AND updates.
\textbf{(b)} The 2D CG process, consisting of the two
local update rules ${\sf A}$ and ${\sf B}$ shown above.
\textbf{(c)} The 3D CG process, whose update rules
${\sf A}^{(3)}$ and ${\sf B}^{(3)}$ couple the $e_1e_2$ and $e_2e_3$
coordinate planes, respectively. Here $\eta=\eta(x)$, and
$\eta_i^\lor$ ($\eta_i^\land$) denotes the OR (AND) of the spins at $x\pm e_i$.
Yellow cells indicate the sites being updated.
The lower snapshots illustrate the erosion of initially square
[(a),(b)] or cubic [(c)] defect droplets under asynchronous dynamics. The small black (white) points denote ${\tt 1}$ ($\tt 0$).
}
    \label{fig:overview}
\end{figure*}

\subsection{2D Cross-Gated Process}
\label{sec:main-2dCG}
Unlike the ${\sf R}$ process, whose update rules involve either only OR or only AND operations, we now introduce the two-dimensional cross-gated (CG) process in which each rule combines both operations. The process is defined on $\Z^2$ with binary variables placed on every site. It involves two update rules, ${\sf A}$ and ${\sf B}$, defined for every $x \in \Z^2$ and $\eta \in \{{\tt 0}, {\tt 1}\}^{\Z^2}$, as follows:
\begin{align}
    {\sf A}_x&: \eta(x) \,\mapsto\, \eta({x+e_1}) \lor \left[\eta({x+e_2}) \land \eta({x-e_2})\right],\\
     {\sf B}_x&: \eta(x) \,\mapsto\, \eta({x+e_2}) \land \left[\eta({x+e_1}) \lor \eta({x-e_1})\right],
\end{align}
where $a$ and $b$ are the rates for the $A_x$ and $B_x$ updates, respectively. Fig.~\ref{fig:overview}(b) illustrates the update rules {\sf A} and {\sf B}, and the erosion mechanism of the continuous-time CG process. 

It is easy to check that for the $\bar{\tt 0}$ ($\bar{\tt 1}$) configuration, the two stable directions $\ell_1 = e^{\ast}_2$ ($e^{\ast}_1$) and $\ell_2 = - e^{\ast}_2$ ($-e^{\ast}_1$) form a balanced family, with the former being an advancing direction and the latter static (see~\appref{sec:sm-2dCG-proof} for details of the identification of stable directions). Here, we discuss the stability of only the $\bar{\tt 0}$ configuration and argue that the stability of the $\bar{\tt 1}$ configuration can be proven analogously. 

We coarse-grain $\Z^2$ using isotropic boxes $Q_H(z)$ ($z \in \Z^2_{\rm c.g.}$) of linear size $H$, and we take the block time as $T_H = \tau H$, where $\tau$ is a scale-independent positive constant to be fixed subsequently. For a finite positive integer $R$ (also to be fixed later), we define the buffer $B$ as the union of coarse displacements $s \in \Z^2_{\rm c.g.}$ such that $\|s\|_{\infty} \leq R$. The set of microscopic sites contained in the buffer of a cell $z$ is given by: $\cB_H(z) = \bigcup_{\|s\|_{\infty} \leq R} Q_H(z+s)$ whose cardinality $|\cB_H(z)|$ is $ (2R+1)^2 H^2$. We take the predecessor regions $S_{1,2}$ corresponding to the stable directions $\ell_{1,2}$ as follows:
\begin{align}
    S_1 &= \{s \in \Z^2_{\rm c.g.}: \|s\|_{\infty} \leq R, \, s_2 \geq 1\},\label{eq:main-2dCG-S1}\\
    S_2 &= \{s \in \Z^2_{\rm c.g.}: \|s\|_{\infty} \leq R, \, s_2 \leq 0\}. \label{eq:main-2dCG-S2}
\end{align}
Equivalently, the above choice corresponds to choosing the coarse displacements $\delta_i$ ($i \in \{1,2\}$) as $\delta_1 = 1$ and $\delta_2 = 0$. 

Next, we run, for each stable direction $i = 1,2$, the adversarial finite-volume block process $\zeta^i$ over $\cB_H(z)$ (for some fixed $z \in \Z^2_{\rm c.g.}$), as defined in Sec.~\ref{sec:main-block-failure}. First, note that since all rates are bounded and the interactions are finite ranged, taking $R$ to be large enough, it is possible to suppress the probability that an exterior site $x\in \cB^c_H(z)$ causally affects some site in the core cell (see~\appref{sec:sm-finite-speed}). For an explicit choice of the constant $R$, refer to~\appref{sec:sm-2dCG-proof}. For the sake of this discussion, it suffices to assume that $R$ is large enough so that the probability of the exterior influence events described above is suppressed exponentially in the block scale. Thus, we now work out the probability that after block time $T_H$, there still exists a defect within the core cell $Q_H(z)$, assuming that no exterior influence event occurs. We can include the contribution of the exterior influence events to the block failure probability at the end. 

\begin{proposition}
    For the coarse-graining scheme described above, the block failure probabilities for the noiseless CG process, $r_{H,i}$ for $i \in \{1,2\}$ (defined as in \eqnref{eq:block_failure}), vanish as $H \to \infty$. 
\end{proposition}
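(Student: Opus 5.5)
The plan is to treat the two block tests separately, using that for $\bar{\tt 0}$ the direction $\ell_1=e_2^*$ is advancing and $\ell_2=-e_2^*$ is static. In both tests I will control defects in the core by a backward (dual) tracing argument on the Poisson marks, and absorb influence from the frozen exterior into an exponentially small term using the finite-speed bound (Lemma~\ref{lem:speed}).

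\emph{Static direction ($i=2$).} The predecessor region $Q_H(z+S_2)$ contains the core and every buffer row at or below it, and it starts clean. I will first check the following local facts directly from the rules.
\begin{itemize}
\item Take a site whose lower neighbor $x-e_2$ is clean. Then ${\sf A}$ outputs $\eta(x+e_1)$, because the vertical AND vanishes. Likewise ${\sf B}$ outputs ${\tt 1}$ only if $x+e_2$ and a horizontal neighbor are defective.
\item Consequently a defect can enter a row of the clean lower region only from a horizontally adjacent defect in the same row.
\item Defects never propagate downward into a region whose lower rows are clean.
\end{itemize}
By induction on rows, every defect that appears in the core by time $T_H=\tau H$ must be linked to the left or right edge of the buffer by a horizontal chain of causal updates. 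Since the buffer extends $RH$ sites on either side of the core, the finite-speed lemma bounds this probability by $\lesssim H^2 e^{-cRH}$ once $R$ exceeds a fixed multiple of $\tau\max(a,b)$. Hence $r_{H,2}\to0$.

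\emph{Advancing direction ($i=1$).} The strip of coarse rows above the core, within horizontal distance $RH$, is clean, and everything else is defective. I will use a backward path. Suppose $x$ is defective just after an update at time $s$.
\begin{itemize}
\item If the update was ${\sf A}$, then either $x+e_1$ or both $x\pm e_2$ were defective just before $s$.
\item If the update was ${\sf B}$, then $x+e_2$ was defective just before $s$.
\end{itemize}
So from any defective space-time point I can always step backward to a defective point at $x+e_1$ or $x+e_2$. The rule is to step up at every ${\sf B}$ mark, and at an ${\sf A}$ mark to step right if $x+e_1$ was defective and up otherwise.

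This gives an up/right monotone path, running backward from $(x,T_H)$ to time $0$, along which every visited point is defective. Its endpoint must lie outside the clean strip. Since $x$ is at most $H$ rows below the strip, the path can take fewer than $H$ up-steps. Its horizontal displacement is at most the number of ${\sf A}$ marks it meets. Two requirements follow.
\begin{itemize}
\item The path must encounter fewer than $H$ ${\sf B}$ marks during a duration $\tau H$.
\item Otherwise it must travel more than $(R-1)H$ sites to the right and leave the strip laterally.
\end{itemize}
The path is adapted to the backward filtration of the marks, and distinct update types have independent clocks. By the strong Markov property of Poisson processes, the ${\sf B}$ marks met along the path therefore form a rate-$b$ Poisson process, and the ${\sf A}$ marks a rate-$a$ Poisson process. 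Choose $\tau$ with $b\tau\ge 2$ and $R\ge 2a\tau+2$. Chernoff bounds then make each of the two events $e^{-cH}$. A union bound over the $H^2$ core sites, plus the exterior-influence term, gives $r_{H,1}\le C H^2 e^{-cH}\to0$.

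I expect the main obstacle to be making the adaptive backward path rigorous. The choice of direction depends on the defect configuration at earlier forward times, which is itself a function of marks at other sites. I would first try to phrase it as a stopping-time argument in reversed time. If that becomes messy, the fallback is cruder: take a union bound over all up/right lattice paths with fewer than $H$ up-steps and at most $2a\tau H$ right-steps, together with the time-ordered mark sequences. The number of such combinatorial paths grows like $e^{O(H)}$, so it can be beaten by taking $b\tau$ large enough that the probability of avoiding $H$ ${\sf B}$ marks decays faster than this entropy.
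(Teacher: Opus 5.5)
Your static case ($i=2$) is the paper's argument. The core starts clean inside the invariant half-plane, so a defect can only arrive along a causal path from outside the buffer, and Lemma~\ref{lem:speed} suppresses that. Your description of this path as a horizontal chain from the left or right edge is inaccurate. An ${\sf A}$ update with both vertical neighbours defective moves defects vertically, and along the frozen bottom edge ${\sf A}$ reduces to $\eta(x+e_1)\lor\eta(x+e_2)$. This does not matter, because every edge of the buffer is at distance at least $RH$ from the core. The cleanest formulation is Lemma~\ref{lem:static_direction}, where the mismatch set is just $\cB_H^c(z)$.

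For $i=1$, the up/right backward path is correct; it encodes ${\sf A}_x(\eta)\le\eta(x+e_1)\lor\eta(x+e_2)$ and ${\sf B}_x(\eta)\le\eta(x+e_2)$. The step you flagged does not go through, however. At an ${\sf A}$ mark at $(x,s)$ the path's direction depends on $\eta_{s^-}(x+e_1)$. That value is a function of marks in $(0,s)$, which the reversed exploration has not yet revealed. So the path is not a stopping-time exploration, and the strong Markov property says nothing about the marks it meets. Those marks are selected by the configuration. After a right step onto $x+e_1$, the next mark met there is the one that made $x+e_1$ defective, and that can be a ${\sf B}$ mark only if $x+e_1+e_2$ was defective. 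The same objection removes the rate-$a$ claim you used to control the lateral excursion.

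Your fallback is the right repair and gives a genuinely different proof.

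\begin{itemize}
\item Fix an up/right lattice path $\pi$. The first-mark exploration along $\pi$ is adapted and visits distinct sites, so it sees a Poisson stream of rate $\lambda=a+b$ with i.i.d.\ types.
\item The event that the actual path equals $\pi$ and ends in the buffer at time $0$ forces exactly $|\pi|$ marks in $(0,T_H]$, and an ${\sf A}$ mark at every right step.
\item Summing over paths with $U$ up-steps gives $\sum_{m\ge0}\binom{U+m}{U}(a/\lambda)^m e^{-\lambda T_H}(\lambda T_H)^{U+m}/(U+m)!=e^{-bT_H}(\lambda T_H)^U/U!$.
\end{itemize}

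So no cap on right steps or entropy balancing is needed. At depth $k$ the bound is $e^{-bT_H}\sum_{U<k}(\lambda T_H)^U/U!$, which is exactly the solution of the paper's recursion $\dot p_k=-bp_k+\lambda p_{k-1}$ with $p_k(0)=1$. It is exponentially small once $b\tau>1+\log(\lambda\tau)$.

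Your route is thus the dual, path-counting form of the paper's argument. The paper uses translation invariance of an infinite half-plane comparison process to close a differential inequality for layer densities, which avoids combinatorics. Yours works directly in the finite buffer and makes the oriented structure explicit.

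Two adjustments remain. Paths can also leave through the top edge of the buffer, a case you omitted. Paths reaching the exterior through the right or top edge need at least $RH$ marks in time $\tau H$, and only the total rate controls them. So $R$ must be of order $(a+b)\tau$, as in Lemma~\ref{lem:speed}, not $2a\tau+2$.
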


\begin{proof} We first bound the block failure probability corresponding to the static direction $i = 2$. For $i = 2$, since $\delta_2 = 0$, the core cell starts off clean initially. Since $\ell_2$ is a stable direction, the corresponding clean half-space remains invariant under the noiseless dynamics. Thus, a defect can appear in the core only through an exterior influence event, whose probability is exponentially suppressed in $H$. Consequently, $r_{H,2}\to 0$ as $H\to\infty$.

We now discuss the more challenging case of the advancing direction $\ell_1$. For $i=1$, proving that the block failure probability vanishes as $H \to \infty$ is slightly more nontrivial because, in this case, the core cell starts as fully defective. Consider an infinite-volume process $\hat{\xi}^1$ initialized as {\tt 0} in the upper half-space immediately above $Q_H(z)$ and {\tt 1} everywhere else, using the same update marks as $\zeta^1$. The two processes have identical initial conditions inside the buffer, so they can differ in the core only if an exterior influence event occurs. Thus, it suffices to estimate the probability that a defect remains in the core under $\hat{\xi}^1$ at time $T_H$, since we are working under the assumption that no exterior influence event occurs.

Let $k$ denote the depth into the defective half-plane starting from the boundary of the clean half-plane. For example, the core occupies rows corresponding to $k = 1, \cdots, H$. Each site of the core cell can now be represented as $(j,k)$, where $j$ is the coordinate transverse to $k$, i.e., simply the $x$ coordinate. We denote by $p_k(t)$ the density of defects at depth $k$, which can be expressed as the probability that $\hat{\xi}^1_t(0,k) = {\tt 1}$, due to translational invariance. Note that the update rules ${\sf A}_x$ and ${\sf B}_x$ restricted to sites $x$ of row $k =1$ simply become ${\sf A}_x(\hat{\xi}^1) = \hat{\xi}^1(x+e_1)$ and ${\sf B}_x(\hat{\xi}^1) = {\tt 0}$. By translational invariance, the east neighbor has the same defect density as the updated site, so its contribution cancels in the evolution equation of $p_k(t)$. For deeper layers $k>1$, we can obtain a simple recursion inequality by noting that for any configuration $\eta$, the following relations hold: ${\sf A}_x(\eta) \leq \eta(x+e_1) + \eta(x+e_2)$ and ${\sf B}_x(\eta) \leq \eta(x+e_2)$. Using these relations together with translational invariance along each layer yields the following:
\begin{equation}
\frac{d}{dt}p_k(t)
\leq -b p_k(t)+(a+b)p_{k-1}(t)
\label{eq:main-2dCG-recursion}
\end{equation}
with $p_0(t) = 0$, where the first term represents defect removal due to ${\sf B}$ updates, while the second term bounds defect propagation from the preceding layer (see \appref{sec:sm-2dCG-proof} for more details). Using the exponential supersolution $u_k(t)=\exp(\theta k-\gamma t/2)$ of \eqnref{eq:main-2dCG-recursion}, we obtain:
\begin{equation}
\label{eq:CG_pk}
p_k(t)\leq\exp\left(\theta k-\frac{\gamma t}{2}\right),
\end{equation}
where the constants $\gamma, \,\theta$ are defined by $\gamma=\min(a,b)$ and $\theta=\log\left(2(a+b)/\gamma\right)$ (see~\appref{sec:sm-2dCG-proof} for more details). 

Take the constant $\tau$ that appears in the definition of $T_H$ to be larger than $2(\theta+1)/\gamma$. Setting $t = T_H$ in \eqnref{eq:CG_pk} and taking a union bound over all $H^2$ sites in $Q_H(z)$, we finally obtain:
\begin{equation}
    \PP(\exists \,x\in Q_H(z): \hat{\xi}^1_{T_H} (x) = {\tt 1}) \leq H^2 e^{-H},
    \label{eq:2dCG-half-plane-bound}
\end{equation}
which vanishes as $H \to \infty$. Combining \eqnref{eq:2dCG-half-plane-bound} with the exponentially suppressed probability of exterior influence events, we find that $r_{H,1} \to 0$ as $H \to \infty$.
\end{proof} 

Since the buffer size is fixed in coarse-lattice units, the vanishing block-failure probabilities satisfy the dynamical requirements of Theorem~\ref{thm:maintext}, proving the stability of the $\bar{\tt 0}$ configuration under sufficiently weak noise. Furthermore, note that a spin flip combined with the swap $e_1\leftrightarrow e_2$ exchanges the two update rules and their rates. Since our stability proof above holds for arbitrary $a,b>0$, the $\bar{\tt 1}$ configuration is also stable. Thus, we have established that the 2D CG process is bistable under sufficiently weak noise of arbitrary bias.

\subsection{3D Cross-Gated Process}
\label{sec:3d-CG}

We now introduce a three-dimensional generalization of the CG process, with a single binary variable $\eta(x)\in\{{\tt 0},{\tt 1}\}$ at each site $x\in\Z^3$. Similarly to the 2D CG process, it involves two update rules: ${\sf A}^{(3)}$ and ${\sf B}^{(3)}$. 
The two update rules couple intersecting coordinate planes rather than acting on independent two-dimensional layers. 
Defining
\begin{align}
\eta_a^{\lor}(x)
&=\eta(x+e_a)\lor\eta(x-e_a),
\\
\eta_a^{\land}(x)
&=\eta(x+e_a)\land\eta(x-e_a),
\end{align}
for $a = 1,2,3$, the update rules are given by:
\begin{align}
{\sf A}^{(3)}_x:
\eta(x)&\mapsto
\eta(x)\lor[\eta_1^{\lor}(x)\land\eta_2^{\lor}(x)],
\label{eq:3d-CG-A}
\\
{\sf B}^{(3)}_x:
\eta(x)&\mapsto
\eta(x)\land[\eta_2^{\land}(x)\lor\eta_3^{\land}(x)].
\label{eq:3d-CG-B}
\end{align}
At every site, two independent Poisson clocks apply these rules at equal rates; we set their common rate to unity by a rescaling of time.
The first rule creates a ${\tt 1}$ when there is a neighboring ${\tt 1}$ along each of the $e_1$ and $e_2$ axes. 
The second rule removes a ${\tt 1}$ when there is a ${\tt 0}$-neighbor along each of the $e_2$ and $e_3$ axes. Both homogeneous configurations, $\bar{\tt 0}$ and $\bar{\tt 1}$, are absorbing states. 
Moreover, spin flip combined with the coordinate exchange $e_1\leftrightarrow e_3$ exchanges the two rules.
It is therefore sufficient to establish the stability of the $\bar{\tt 0}$ configuration. As a numerical demonstration of stability, we show a finite-size noisy simulation of the erosion of a defect set in Fig.~\ref{fig:overview}(c) [bottom panel]. 

For the $\bar{\tt 0}$ configuration, we first identify the stable directions as follows:
\begin{align}
\ell_1&=e_1^*, & \ell_2&=-e_1^*,
\nonumber\\
\ell_3&=e_2^*+e_3^*, & \ell_4&=e_2^*-e_3^*,
\nonumber\\
\ell_5&=-e_2^*+e_3^*, & \ell_6&=-e_2^*-e_3^*.
\label{eq:3d-CG-stable-directions}
\end{align}
Their sum vanishes, implying that they form a balanced family. To see that they are indeed stable, note that for
$\ell_3,\ldots,\ell_6$, both $e_1$ neighbors of a site in a clean half-space remain in that half-space, preventing creation by ${\sf A}^{(3)}$.
Similarly, for $\ell_1,\ell_2$, the same argument uses the $e_2$ neighbors. Since the ${\sf B}^{(3)}$ update cannot create defects, all six half-spaces (corresponding to $\ell_{1,\cdots,6}$) are invariant. We use $\ell_1,\ell_2$ with zero displacement ($\delta_{1,2} = 0$) and attach positive coarse displacements to the remaining stable directions in the balanced family, i.e., $\delta_i>0$ for $i = 3,\cdots,6$.

As in the 2D CG process, we coarse-grain the lattice using isotropic boxes
\begin{equation}
Q_H(z)=Hz+\{0,\ldots,H-1\}^3,
\end{equation}
and take the block time as $T_H=\tau H$. 
Let $B_R=\{s\in\Z^3_{\rm c.g.}:\|s\|_\infty\le R\}$ and $\cB_H(z)=\bigcup_{s\in B_R}Q_H(z+s)$. We define the predecessor regions as follows:
\begin{align}
S_i&=\{s\in B_R:\ell_i(s)\ge0\},
&&i=1,2,
\label{eq:3d-CG-static-zeros}
\\
S_i&=\{s\in B_R:\ell_i(s)\ge1\},
&&i=3,\ldots,6,
\label{eq:3d-CG-advancing-zeros}
\end{align}
which corresponds to the following choices: $\delta_1=\delta_2=0$ and $\delta_3=\cdots=\delta_6=1$. Both $\tau$ and $R$ are fixed independently of $H$.

\begin{proposition}
\label{prop:3d-CG-block}
For sufficiently large fixed $\tau$, and a sufficiently large fixed $R$, the noiseless block failure probabilities satisfy
\begin{equation}
r_{H,i}\le CH^3e^{-\alpha H},
\quad i=1,\ldots,6,
\end{equation}
for constants $C,\alpha>0$ independent of $H$.
\end{proposition}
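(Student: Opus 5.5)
For the static directions $\ell_1,\ell_2$ I would argue exactly as for the static direction of the 2D CG process. Since $\delta_1=\delta_2=0$, the predecessor region $S_i$ contains the displacement $s=0$. The block process therefore starts with the whole half-space $\{\ell_i\ge 0\}\cap B_R$ clean, including the core. By the stability of $\ell_{1,2}$ established above, this clean half-space is invariant under the noiseless dynamics in infinite volume. A defect can reach $Q_H(z)$ only through an exterior influence event: a causal chain from $\cB_H(z)^c$ (frozen to ${\tt 1}$) to the core within time $\tau H$. Choosing $R$ larger than a fixed multiple of $\tau$, the finite-speed bound (Lemma~\ref{lem:speed}) makes this probability at most $CH^3e^{-\alpha H}$. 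The same exterior-influence estimate is used for all six directions. This reduces the advancing directions to an infinite-volume process $\hat\xi$ that is clean on the half-space $\{\ell_i\ge \ell_i(z)\!+\!1\}$ (in coarse units) and defective elsewhere.

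For $\ell_3,\dots,\ell_6$ I first reduce to $\ell_3=e_2^*+e_3^*$. The reflections $x_2\mapsto -x_2$ and $x_3\mapsto -x_3$ leave ${\sf A}^{(3)}$ and ${\sf B}^{(3)}$ invariant, because each rule involves only the symmetric combinations $\eta_a^{\lor},\eta_a^{\land}$. Let $k$ denote the depth below the clean half-space, $k=c-(x_2+x_3)$. Translations by $e_1$ and by $e_2-e_3$ preserve depth, so $p_k(t)=\PP(\hat\xi_t(x)={\tt 1})$ depends only on $k$.

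A naive mean-field recursion of the type \eqnref{eq:main-2dCG-recursion} fails here. The creation term of ${\sf A}^{(3)}$ couples to the deeper layer $k+1$ through $x-e_2$ and to the same layer through $x\pm e_1$. Any linear bound then has positive growth rate. I would therefore keep the contact-process structure, as in the ${\sf R}$ proof.

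Fix depth $k$ and assume the worst case for layer $k+1$, namely that it is fully defective. A site $x$ at depth $k$ is then subject to two mechanisms. First, it is killed by ${\sf B}^{(3)}$ at rate $1$ unless $x+e_2$ or $x+e_3$ (both at depth $k-1$) is defective. Second, it is created by ${\sf A}^{(3)}$ at rate at most $1$ only if an $e_1$-neighbor is defective. Hence, by monotone coupling, layer $k$ is dominated by a family of one-dimensional contact processes along $e_1$-lines with birth rate $1$ per defective neighbor and death rate $1$. Since $1<\lambda_c^{\rm 1D}$, these processes are subcritical. They are supplemented by \emph{forced sources}: a site is held defective during the times when one of its two depth-$(k-1)$ neighbors is defective.

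The contact process is additive, so the defect set of layer $k$ is pathwise contained in the union of the descendants of the initial defects and of the forced sources. Subcritical decay bounds the expected descendant density of a unit-time source after time $u$ by $Ce^{-\gamma u}$. This gives the convolution inequality
\begin{equation*}
p_k(t)\le Ce^{-\gamma t}+2C\int_0^t p_{k-1}(s)\,e^{-\gamma(t-s)}\,ds,\qquad p_0\equiv 0 .
\end{equation*}
By induction, an exponential supersolution $p_k(t)\le e^{\theta k-\gamma t/2}$ holds with $\theta=\log(4C/\gamma)+O(1)$, just as in the 2D CG case. I would then take $\tau>2(\theta\sqrt2\,+1)/\gamma$, where the factor $\sqrt2$ is there because core depths are at most of order $2H$. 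Setting $t=\tau H$ and taking a union bound over the $H^3$ core sites gives $CH^3e^{-\alpha H}$. Adding the exterior-influence term completes the bound for every $i$.

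The main obstacle is the additivity/source step. The forced sources are correlated with the layer-$k$ process itself, and the domination must hold pathwise rather than only in expectation. I would handle this with a graphical construction. Each ${\sf B}^{(3)}$ mark at $x$ acts as a death mark unless it is blocked by a defect at $x+e_2$ or $x+e_3$; blocked marks are treated as ``source'' marks. The dominating process is then a contact process on the fixed graphical representation, run with the source set determined by layer $k-1$. Monotonicity of the rules then gives the containment, and linearity of expectation over additive descendants yields the convolution bound.
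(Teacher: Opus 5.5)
Your architecture is the paper's. You handle the static directions by the exterior-influence bound. For $\ell_\sigma$ you build a triangular dominating process: your worst-case replacement of layer $k+1$, together with treating blocked ${\sf B}^{(3)}$ marks as insertions, is exactly the paper's pair $\widetilde{\sf A}_x,\widetilde{\sf B}_x$. The rest also matches: additive $e_1$-line dynamics, the convolution inequality, an exponential supersolution, and a union bound over the core.

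The genuine difference is how you obtain decay of the line process $\chi$. The paper proves $\EE_C|C_t|\le K|C|e^{-ct}$ from scratch, using the Lyapunov function $V=M-\tfrac12 Q$ with $\mathcal L V=-G$ and a finite-time contraction event of probability $\kappa$. You instead observe that $\chi$ is precisely the ${\sf R}$ exposed-row process at $\mu=1$, and import subcritical decay of the 1D contact process from $\lambda_c^{\rm 1D}>1$. Your route is shorter and ties the 3D CG estimate to the ${\sf R}$ mechanism. The paper's route is elementary and self-contained, gives explicit constants, and bounds directly the quantity the recursion needs.

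That last point is where you should add a line. Summing over insertion sites requires decay of the expected size $\EE|\chi^{\{0\}}_u|=\sum_j g_u(j)$, not just of the survival probability as in \eqnref{eq:row_extinction}. This does follow, e.g.\ from survival decay, the at most linear spread of a single seed, and Cauchy--Schwarz.

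Also, the standard contact process cannot be run pathwise on the single rate-one ${\sf A}^{(3)}$ clock per site, since its birth rate can reach $2$. Keep the threshold-one process $\chi$, which is itself additive, as your pathwise dominating object. Use the contact-process comparison only in law, for the single-seed estimate.

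Two small corrections.

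First, the comparison half-space must be microscopic: $\{\ell_\sigma\ge h\}$ with $h=1+\max_{Q_H(0)}\ell_\sigma$, so that it lies inside $Q_H(S_\sigma)$ within the buffer. The union of coarse cells with $\ell_\sigma(s)\ge 1$ is a staircase, and using it would destroy the layer translation invariance you invoke.

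Second, your threshold for $\tau$ is off. With the integer depth $k=c-(x_2+x_3)$ that you defined, core depths reach $2H-1$, so you need $\gamma\tau/2>2\theta$, e.g.\ $\tau\ge 2(2\theta+1)/\gamma$ as in the paper. With $\sqrt2\,\theta$ in place of $2\theta$, the exponent is $((2-\sqrt2)\theta-1)H$, which is positive once $\theta>1/(2-\sqrt2)$, and nothing in your construction keeps $\theta$ that small. Since the statement only asserts some sufficiently large $\tau$, this is a slip rather than a gap.
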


\begin{proof}
Here we explain the mechanism underlying the advancing directions and summarize the results while deferring a rigorous proof to~\appref{sec:sm-3dCG-proof}.
By translational invariance, we can restrict ourselves to the core $Q_H(0)$ without any loss of generality. For each stable direction $i$, the block process starts clean on $\bigcup_{s\in S_i}Q_H(s)$ and defective elsewhere in the buffer, with the exterior permanently defective (frozen to {\tt 1}).
Let $\mathcal{E}_H$ represent the event that some site of $\cB^{c}_H(0)$ causally influences the core by time $T_H$. Once $\tau$ is fixed, finite-speed propagation of causal influence allows us to choose $R$ large enough so that $\PP(\mathcal{E}_H)\le C_0H^3e^{-c_0H}$, where $C_0$ and $c_0$ are finite positive constants independent of the block scale $H$. 
For $i=1,2$, the core is clean initially since $\delta_i = 0$. Thus, the only way a core site might end up being defective by the block time $T_H$ is through an exterior influence event $\mathcal{E}_H$, implying that $r_{H,i}\le C_0H^3e^{-c_0H}$.

For an advancing direction, we write $\ell_\sigma=\sigma_2e_2^*+\sigma_3e_3^*$, with $\sigma=(\sigma_2,\sigma_3)\in\{\pm1\}^2$. For an integer $h$, let $\hat\xi$ be an infinite-volume process initialized as ${\tt 0}$ on $\{x:\ell_\sigma(x)\ge h\}$ and ${\tt 1}$ elsewhere. 
The depth into the defective region is measured by $k=h-\ell_\sigma(x)$. The following bounds
\begin{align}
{\sf A}^{(3)}_x(\eta)
&\le\eta(x)\lor\eta_1^{\lor}(x),
\label{eq:3d-CG-A-bound}
\\
{\sf B}^{(3)}_x(\eta)
&\le\eta(x+\sigma_2e_2)\lor\eta(x+\sigma_3e_3)
\label{eq:3d-CG-B-bound}
\end{align}
enable us to define a monotone dominating process $\widetilde\xi$: it uses the right-hand sides as its update values, with the same clocks and initial condition as $\hat{\xi}$.
Hence, we have $\hat\xi_t\le\widetilde\xi_t$. 
For more details on the monotone coupling used here, refer to~\appref{sec:sm-monotone}.

Both parent sites in \eqnref{eq:3d-CG-B-bound} lie at depth $k-1$. 
In the first defective layer they remain clean, so every such update resets its site to zero. 
Let us fix a site $x_0$ in this layer and consider the line $\{x_0+je_1:j\in\Z\}$ parallel to $e_1$.

On this line, the dominating process $\widetilde \xi$ evolves as a one-dimensional process $\chi_t(j)$, initially fully defective, with two independent unit-rate updates at every site:
\begin{align}
\label{eq:maintext-3dCG-aux}
\chi(j)&\mapsto\chi(j-1)\lor\chi(j)\lor\chi(j+1),
\nonumber\\
\chi(j)&\mapsto{\tt 0}.
\end{align}
Despite the possibility of lateral spreading, the defect density can be shown to decay exponentially:
\begin{equation}
\PP\bigl(\chi_t(0)={\tt 1}\bigr)\le Ke^{-ct}.
\label{eq:3d-CG-line-decay}
\end{equation}
This estimate follows from the finite-time contraction of a suitable Lyapunov function that bounds the expected number of descendants of a single defect (see \appref{sec:sm-3dCG-proof} for a concrete definition of the Lyapunov function used). For deeper layers, we must also account for defects introduced from layer $k-1$. 
A defect introduced at time $s$ generates at most $Ke^{-c(t-s)}$ descendants on average at time $t$, while the rate of such arrivals is controlled by the defect density in the preceding layer. 

The evolution of the defect set under the auxiliary line dynamics is additive~\footnote{Let $X_t^A$ denote the state of a set-valued Markov process at time $t$ when the process starts from the set $A$. The process is called \emph{additive} if several copies started from different initial conditions can be run with shared randomness so that $X_t^{A\cup B}=X_t^A\cup X_t^B$ for all $A,B$ and $t\ge0$.} allowing us to bound the combined contribution of defects arriving from the preceding layer by summing the single-defect estimates over arrival sites and times.
Together with the contribution from initial defects, this gives a recursion between successive layers. 
Iterating this recursion starting from the clean half-space yields
\begin{equation}
p_k(t):=\PP\bigl(\hat\xi_t(x)={\tt 1}\bigr)
\le K e^{\theta k-ct/2},
\label{eq:3d-CG-pk}
\end{equation}
where $x$ lies at depth $k$ and $K,\, c,\,\theta>0$ are constants independent of $k$, $h$, and $H$. 

To apply the defect-density bound of \eqnref{eq:3d-CG-pk} to the core $Q_H(0)$, we choose
\begin{equation}
h=1+\max_{x\in Q_H(0)}\ell_\sigma(x).
\end{equation}
The clean half-space then intersects the buffer only within the corresponding predecessor region $Q_H(S_i)$, and every core site lies at a depth $1\le k\le 2H-1$.
Outside the exterior influence event $\mathcal E_H$, the finite-buffer process is dominated on the core by the infinite-volume process $\hat\xi$.
Thus, \eqnref{eq:3d-CG-pk} together with a union bound gives
\begin{align}
r_{H,i}
&\le KH^3 e^{2\theta H-c\tau H/2}
    +\PP(\mathcal E_H)
\nonumber\\
&\le KH^3e^{-H}+\PP(\mathcal E_H),
\quad i=3,\ldots,6,
\label{eq:3d-CG-core-failure}
\end{align}
provided that $c\tau/2\ge 2\theta+1$.
\end{proof}

The coarse geometry and block estimates satisfy the requirements of Theorem~\ref{thm:maintext}, establishing the stability of the $\bar{\tt 0}$ configuration under sufficiently weak one-sided noise. Spin flip combined with $e_1\leftrightarrow e_3$ implies the stability of the $\bar{\tt 1}$ configuration. The three-dimensional CG process is therefore bistable under sufficiently weak, arbitrarily biased noise. 

\section{Discussion and outlook}
\label{sec:disc_and_outlook}

In this work, we have established a sufficient stability criterion for finite-range, translation-invariant, monotone continuous-time processes in arbitrary dimensions. 
The criterion reveals robustness as an emergent, coarse-grained property: suitable interface geometry and probabilistic control of the microscopic dynamics together produce reliable erosion at large scales.
Therefore, memory can survive temporal disorder without global synchronization, provided that the collective dynamics suppress the large-scale propagation of defects. 
Because this mechanism does not rely on an exclusive birth-death structure, our result extends beyond the setting treated by Gray, as demonstrated by the continuous-time {\sf R} process~\cite{LakeRo2025}, and the 2D and 3D cross-gated processes introduced in this work.

Our result suggests a design principle for autonomous error correction. Robustness need not be visible in every microscopic update rule: defects may spread temporarily, interfaces may roughen, and droplets may expand along some directions. What matters instead is that these local excursions fail to develop into sustained large-scale defect propagation, so that finite defect regions are ultimately driven toward extinction. 
From this perspective, designing nonequilibrium memories under temporal disorders amounts to identifying simple local rules whose competing actions collectively stabilize information.


More broadly, the question of how noisy local dynamics can preserve global order arises throughout nonequilibrium physics. 
It is particularly relevant in biophysical and active-matter systems, where macroscopic organization is maintained by local stochastic interactions rather than equilibrium free-energy minimization~\cite{Ramaswamy2010,Marchetti2013}. 
Our results suggest that coarse-grained suppression of defect propagation may provide a useful perspective on such systems beyond the error-correction setting considered here. 
Important examples remain outside the present criterion, including eroders that lack a balanced family of stable directions and whose defect droplets drift as they erode. 
We believe that relaxing the geometric conditions and extending the framework beyond binary memory could reveal more general mechanisms for stabilizing information.
Ultimately, spacetime coarse-graining may serve both as a method for proving stability and as a constructive guide to discovering robust nonequilibrium phases and fluctuation-stabilized memories.

\acknowledgements

We thank Ethan Lake for sharing his earlier results for Ref.~\cite{LakeRo2025} that inspired us back in 2024. 
S.C. and J.Y.L. are supported by the faculty startup grant at the University of Illinois, Urbana-Champaign. 

\bibliography{refs}

\clearpage 
\newpage

\appendix 
\onecolumngrid 
 
\appendixtableofcontents

\section{Conventions}
\label{sec:app_A}
In this section, we clarify the notation and conventions used in the main text. We denote the space of configurations on $\mathbb{Z}^d$ as $\Omega=\{{\tt 0},{\tt 1}\}^{\Z^d}.$ For $\eta,\, \xi\in\Omega$, we write $\eta\leq \xi$ if $\eta(x)\le \xi(x)$ for every $x\in\Z^d$. Throughout the appendix, we will study the stability of the $\bar{\tt 0}$ state, unless specified otherwise. We will use the notation $\mathbf{1}_{E}$, for any event $E$, to denote the following:
\begin{equation}
    \mathbf{1}_{E} = \begin{cases}
        1 &\text{if }E \text{ occurs},\\
        0 &\text{otherwise}.
    \end{cases}
\end{equation}

\subsection{Local dynamics and structural requirements}
\label{sec:sm-requirements}
For each update type $a\in\cA$, let $f_a:\{{\tt 0},{\tt 1}\}^{\cN_a}\to\{{\tt 0},{\tt 1}\}$ be a Boolean map. Each pair $(x,a)$ has an independent Poisson clock of rate $\lambda_a>0$. When it rings, $\eta(x)$ is replaced by $f_a((\eta(x+\delta))_{\delta\in\cN_a})$, with all other sites unchanged. We call the event $(t,x,a)$ an \emph{update mark}.
We impose the following assumptions.

\begin{assumption}[Finite range and translation invariance]
\label{ass:finite-range}
The set $\cA$ and all neighborhoods $\cN_a$ are finite.
The maps, neighborhoods, and finite rates $\lambda_a$ are the same at every site.
\end{assumption}

\begin{assumption}[Two traps]
\label{ass:two-traps}
Both homogeneous configurations are absorbing: $f_a({\tt 0},\ldots,{\tt 0})={\tt 0}$ and $f_a({\tt 1},\ldots,{\tt 1})={\tt 1}$ for every $a\in\cA$.
\end{assumption}

\begin{assumption}[Boolean monotonicity]
\label{ass:monotone}
Each $f_a$ is nondecreasing in every argument:
$\xi\le\zeta$ on $\cN_a$ implies $f_a(\xi)\le f_a(\zeta)$.
\end{assumption}

We add independent rate-$\eps$ hostile marks at every site, each setting the local state to ${\tt 1}$, and take $0\le\eps\le1$. Define
\begin{equation}
r:=\max_{a\in\cA}\max_{\delta\in\cN_a}\|\delta\|_\infty,
\qquad
\lambda:=1+\sum_{a\in\cA}\lambda_a.
\end{equation}
Thus, each update reads sites within distance $r$, and $\lambda$ bounds the total mark rate at each site. We disregard the trivial cases where $r = 0$, and hereafter assume that $r\geq 1$. Additional friendly ${\tt 1}\to{\tt 0}$ noise can only reduce defects by monotonicity.

For stability of the selected $\bar{\tt 0}$ configuration, only the
zero-trap condition is needed. If the noiseless dynamics,
including the rates, are invariant under $\mathcal X\circ g$,
where $\mathcal X$ flips all spins and $g\in GL(d,\Z)$,
stability of one configuration implies stability of the other.
Otherwise, the two configurations can be analyzed separately.

\begin{proposition}[Well-posedness]\label{prop:harris}
Almost surely, for every finite $D\subset\Z^d$ and $T<\infty$,
the trajectory $(\eta_t|_D)_{0\le t\le T}$ is determined by
finitely many marks and initial spins. The resulting process
is Markovian, and its graphical flow is monotone in the
initial configuration.
\end{proposition}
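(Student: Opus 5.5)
The plan is the standard Harris graphical construction, with one twist: the number of sites can be infinite, but the total rate per site is bounded by $\lambda$ and the range is finite.

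\textbf{Step 1: the graphical representation.} Realize all update marks $(t,x,a)$ and hostile marks as independent Poisson processes on $[0,\infty)$ with rates $\lambda_a$ and $\eps$. Draw a directed edge from $x+\delta$ to $x$ at each mark time for every $\delta\in\cN_a$.

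\textbf{Step 2: finite dependence in short time windows.} Fix a small deterministic $t_0>0$ with $\lambda t_0(2r+1)^d<1$, say $t_0=1/(2\lambda(2r+1)^d)$. On $[kt_0,(k+1)t_0]$, declare two sites linked if they are within distance $r$ and one of them carries a mark in the window. The linked clusters are then dominated by a subcritical site percolation: each site is marked with probability at most $\lambda t_0$, and the number of neighbors is bounded by $(2r+1)^d$. Hence almost surely every cluster is finite. This is the key step, and I expect it to be the only real obstacle. The point is to get a uniform-in-space statement: simultaneously for all sites, the set of sites that can influence $x$ during the window is finite. I would obtain it by a Borel--Cantelli/Peierls bound. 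The probability that $x$ lies in a cluster of size at least $n$ decays like $(C\lambda t_0)^{n}$ by counting lattice animals. A countable union over $x$ and over $k$ then still gives full probability.

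\textbf{Step 3: evolving inside each cluster.} Within each finite cluster the marks are finitely many, so they can be ordered in time. Applying $f_a$ or the hostile reset in that order defines the configuration at time $(k+1)t_0$ from the one at $kt_0$. Sites outside the cluster act only as fixed inputs, or are not read at all, so the construction is consistent.

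\textbf{Step 4: concatenation over $[0,T]$.} Iterate over the $\lceil T/t_0\rceil$ windows. For finite $D$, trace backward: the sites that influence $D$ at time $T$ form a finite union of clusters in the last window. Their own influencing sets in the previous window are again finite, and so on. This yields finitely many marks and finitely many initial spins determining $(\eta_t|_D)_{t\le T}$, as claimed.

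\textbf{Step 5: the Markov property.} This follows from the independent-increment property of Poisson processes. The future marks after time $s$ are independent of $\mathcal F_s$, and the flow from $s$ to $t$ is the same measurable map of those marks applied to $\eta_s$, so $\eta_t$ depends on the past only through $\eta_s$. One also checks that the generator on local functions is the stated one, by computing over a short time $h$: with probability $1-O(h^2)$ at most one mark touches the finite support.

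\textbf{Step 6: monotonicity.} Each elementary move is itself monotone. An update mark replaces $\eta(x)$ by $f_a$, which is nondecreasing by Assumption~\ref{ass:monotone}, and leaves the other sites unchanged. A hostile mark sets $\eta(x)={\tt 1}$, which is monotone too. The flow over a finite cluster is a finite composition of monotone maps, hence monotone. Composing across windows preserves this, so $\eta_0\le\xi_0$ implies $\eta_t\le\xi_t$ pathwise under the shared marks.
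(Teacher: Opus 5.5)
Your proposal is correct, but it follows a different route from the paper. You use Harris's original time-sliced percolation construction; the paper never slices time.

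\textbf{The paper's route.} It makes a single first-moment count of backward dependency chains over the whole interval $(0,T]$. A chain of $n$ marks ending in $D$ has at most $|D|(2r+1)^{dn}$ spatial sequences. Time ordering limits the expected number of realizations of each sequence to $(\lambda T)^n/n!$. So the expected total number of chains is at most $|D|\sum_{n\ge1}(K\lambda T)^n/n!<\infty$ for every $T$, and almost surely only finitely many marks and initial spins are reached.

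\textbf{Comparison.} The paper's route is shorter. It needs no choice of $t_0$, no lattice-animal or Peierls estimate, and no concatenation over windows. The same series is also reused for the finite-speed bound of Lemma~\ref{lem:speed}. Your route costs a percolation estimate but yields a cleaner global object. Within each window, the infinite system splits, simultaneously for all sites, into independent finite systems driven by the same marks. This makes the infinite-volume flow, its consistency, the Markov property and monotonicity essentially immediate. The paper instead appeals to a ``countable exhaustion of space and time'' to assemble the flow from the finite-$D$ statements.

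\textbf{A detail to fix in Step~2.} Under your linking rule, two marked sites at $\ell^\infty$-distance up to $2r$ are linked through an unmarked intermediate site. The relevant site percolation therefore has degree about $(4r+1)^d$, so the stated condition $\lambda t_0(2r+1)^d<1$ does not by itself guarantee subcriticality. Also, a cluster of size $n$ only forces about $n/(2r+1)^d$ marked sites. Simply choose $t_0$ small enough that your lattice-animal series converges; the resulting tail is still exponential in $n$, which is all you need for almost-sure finiteness of every cluster.
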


\begin{proof}
Let us set $K=(2r+1)^d$. A backward dependency chain of $n$ marks ending in $D$ has at most $|D|K^n$ possible spatial sequences.
For each sequence, the expected number of chronologically
ordered realizations in $(0,T]$ is at most $(\lambda T)^n/n!$. Hence, the expected total number of chains
is bounded by the convergent series
$ |D|\sum_{n\ge1}\frac{(K\lambda T)^n}{n!}$.
Almost surely, the backward exploration therefore encounters
only finitely many marks and initial spins. Applying the
relevant updates in temporal order defines the trajectory;
a countable exhaustion of space and time gives the
infinite-volume flow. Each update and hostile reset preserves
order, so the flow is monotone. Independence of future
Poisson increments gives the Markov property.
\end{proof}

\subsection{Monotone coupling}
\label{sec:sm-monotone}
All comparisons below use shared update and hostile-noise marks.

\begin{lemma}[Monotone coupling]\label{lem:couplings}
\begin{enumerate}[label=\textup{(M\arabic*)},leftmargin=3em]
\item\label{M1} \textup{(Initial condition)}
If $\eta_0\le\eta'_0$, then $\eta_t\le\eta'_t$ for all $t\ge0$.

\item\label{M2} \textup{(Frozen maximal boundary)}
Let $\zeta$ evolve only in a finite set $B$, with every site
outside $B$ frozen at ${\tt 1}$.
Then $\zeta_t$ depends only on $\zeta_0$ and the marks in
$B\times(0,t]$. If $\xi$ uses the same update maps and marks
inside $B$, with arbitrary exterior evolution, then
$\xi_0\le\zeta_0$ on $B$ implies
$\xi_t\le\zeta_t$ on $B$ for all $t\ge0$.

\item\label{M3} \textup{(Time monotonicity)}
At fixed $\eps$, let $\mu_t$ be the law of the process
started from $\bar{\tt 0}$. Then $\mu_s\preceq\mu_t$ whenever
$s\le t$. In particular, the defect probability at any fixed
site is nondecreasing in time, so its supremum over $t\ge0$
equals its limit along any sequence $t_n\to\infty$.
\end{enumerate}
\end{lemma}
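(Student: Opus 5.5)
The three parts rest on one fact, which I would prove first: a single update or hostile reset preserves the partial order. If $\eta\le\eta'$ and the same mark $(t,x,a)$ is applied to both, then $f_a$ is monotone (Assumption~\ref{ass:monotone}), so the new value at $x$ is ordered and all other sites are unchanged. A hostile reset sets both values to ${\tt 1}$ and so also preserves order. By Proposition~\ref{prop:harris}, the state at any finite $D$ up to time $T$ is obtained by applying finitely many marks in temporal order. Induction over this ordered list of marks then gives \ref{M1} on every finite window, and therefore everywhere.

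For \ref{M2}, I would first observe that $\zeta$ sees only marks at sites of $B$. Each update at $x\in B$ reads sites in $x+\cN_a$, and those outside $B$ are constantly ${\tt 1}$. Hence $\zeta$ on $[0,t]$ is a finite-state evolution driven by the almost surely finitely many marks in $B\times(0,t]$. For the comparison, I would induct over these marks in chronological order. The inductive claim is that $\xi_s\le\zeta_s$ on $B$ just before each mark. At a mark $(s,x,a)$ with $x\in B$, the inputs on $B$ are ordered by hypothesis, and those outside $B$ satisfy $\xi\le{\tt 1}=\zeta$ trivially. Monotonicity of $f_a$ then orders the outputs. Between marks, nothing changes on $B$. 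The only subtlety is that $\xi$ is the infinite-volume process, but the claim concerns only its values on $B$, and these change only at marks in $B$. So the argument goes through without controlling the exterior of $\xi$.

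For \ref{M3}, I would use time-homogeneity and the Markov property. Let $\eta_t$ start from $\bar{\tt 0}$, and for $h\ge0$ set $\eta'_t:=\eta_{t+h}$. Then $\eta'$ is the process started from $\eta_h$, driven by the time-shifted marks, which have the same law. Now couple $\eta'$ with a copy $\eta''$ started from $\bar{\tt 0}$ using the same shifted marks. Since $\bar{\tt 0}\le\eta_h$, \ref{M1} gives $\eta''_s\le\eta'_s=\eta_{s+h}$ pathwise. Because $\eta''_s\sim\mu_s$ and $\eta_{s+h}\sim\mu_{s+h}$, this yields the stochastic ordering $\mu_s\preceq\mu_{s+h}$. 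Evaluating the increasing event $\{\eta(x)={\tt 1}\}$ shows that $t\mapsto\PP(\eta_t(x)={\tt 1})$ is nondecreasing. A bounded monotone function has its supremum equal to its limit along any $t_n\to\infty$.

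The main obstacle is justifying the infinite-volume steps, namely that pathwise induction over marks is legitimate when infinitely many marks occur in any time interval. I would handle this by restricting to finite backward dependency windows via Proposition~\ref{prop:harris}, and by noting that the strong Markov/shift property holds because Poisson marks on disjoint time intervals are independent.
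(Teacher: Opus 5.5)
Your proposal is correct and follows essentially the same route as the paper: \ref{M1} comes from the monotone graphical construction of Proposition~\ref{prop:harris}; \ref{M2} holds because the frozen exterior supplies maximal inputs, so monotonicity of each $f_a$ propagates the order mark by mark; and \ref{M3} couples a fresh copy started from $\bar{\tt 0}$ with the time-shifted process and invokes \ref{M1}. Your version spells out two points the paper leaves implicit: the chronological induction over the finitely many relevant marks, and the time-shift invariance of the Poisson marks.
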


\begin{proof}
\ref{M1} follows directly from Proposition~\ref{prop:harris}. For \ref{M2}, the key observation is that $\zeta$
always sees the largest possible value ${\tt 1}$
outside $B$. 
Thus, if $\xi\le\zeta$ initially inside $B$, every update of $\zeta$ receives inputs that are at least as large as those of $\xi$. 
Since the update maps are monotone, this ordering
is preserved at all times. 
Moreover, because the exterior of $\zeta$ is frozen, its evolution depends only on the marks inside $B$.

For \ref{M3}, consider two processes evolving under the same update marks for a duration $t$. 
The first starts from $\bar{\tt 0}$, while the second starts from the configuration obtained by evolving $\bar{\tt 0}$ for an additional time $s$.
Since $\bar{\tt 0}$ is the smallest possible configuration, the second process initially dominates the first. By \ref{M1}, this ordering persists throughout the evolution. The resulting laws are $\mu_t$ and $\mu_{s+t}$, respectively, giving $\mu_t\preceq\mu_{s+t}$ for all $s,t\ge0$.
\end{proof}

\subsection{Finite speed of propagation}
\label{sec:sm-finite-speed}

For any local process with bounded rates, the causal influence of any spacetime point propagates at a bounded speed. We make this statement more precise below. 

\begin{lemma}[finite speed of causal influence]\label{lem:speed}
Define a \emph{causal path} from a space-time set $D\times(0,t]$ to $(x,t)$ as a finite sequence of Poisson marks $(x_1,t_1),\dots,(x_m,t_m)$ with $0<t_1<\dots<t_m\le t$, $x_1\in D+[-r,r]^d$, $\|x_{i+1}-x_i\|_\infty\le r$ for each $i$ and $\|x-x_m\|_\infty\le r$. There exist finite positive constants $v_\ast$ and $c$, depending only on $d,r,\lambda$, such that for all $\rho\ge v_\ast t$, the following holds:
\begin{equation}
\PP\bigl(\exists\ \text{causal path from }\{y:\|y-x\|_\infty\ge\rho\}
\times(0,t]\ \text{to}\ (x,t)\bigr)\ \le\ e^{-c\rho}.
\end{equation}
\end{lemma}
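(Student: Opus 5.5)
The plan is a union bound over the possible backward causal chains, combining the counting already used in Proposition~\ref{prop:harris} with a Chernoff-type optimization.

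\emph{Step 1: reduction to chains.} Suppose a causal path $(x_1,t_1),\dots,(x_m,t_m)$ exists with $\|x_1-y\|_\infty\le r$ for some $y$ satisfying $\|y-x\|_\infty\ge\rho$. Each step moves at most $r$ in $\ell^\infty$, and there are $m+1$ steps in total, including the last step to $x$ and the initial offset from $D$. Hence
\[
(m+1)r\ge \rho-r, \qquad\text{so}\qquad m\ge m_0:=\lceil \rho/r\rceil-2 .
\]
For each $m\ge m_0$, I count the backward spatial sequences starting from $x$. There are at most $K^m$ of them, with $K=(2r+1)^d$, since each successive site lies in an $\ell^\infty$ ball of radius $r$.

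\emph{Step 2: probability of a fixed sequence.} Fix a sequence of sites $x_m,\dots,x_1$. Marks occur at each site as a Poisson process of total rate at most $\lambda$ (update plus hostile marks). The expected number of increasing time tuples $t_1<\dots<t_m$ in $(0,t]$ with a mark at $(x_i,t_i)$ is at most $(\lambda t)^m/m!$. Repeated sites cause no problem here, because we count ordered distinct points of a Poisson process and use the factorial moment bound. By Markov's inequality, the probability that such a chain exists is at most
\[
\sum_{m\ge m_0}\frac{(K\lambda t)^m}{m!}.
\]

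\emph{Step 3: tail estimate.} Using $m!\ge (m/e)^m$, each term is at most $(eK\lambda t/m)^m$. Choose $v_\ast$ so that for $\rho\ge v_\ast t$ we have $eK\lambda t/m_0\le e^{-2}$ whenever $m\ge m_0$. For example, $v_\ast=2e^3K\lambda r$ works, provided $\rho$ is large enough that $m_0\ge\rho/(2r)$. Then the series is bounded geometrically by $2e^{-2m_0}\le C e^{-\rho/r}$.

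Small $\rho$ (those with $m_0<\rho/(2r)$, that is, $\rho<4r$ roughly) need separate treatment, since there the claimed bound $e^{-c\rho}$ is close to $1$. I will handle them by shrinking $c$ so that the bound is trivial in that range. Alternatively, one can exploit the factor $(\lambda t)^{m_0}/m_0!$ together with the fact that $t\le\rho/v_\ast$ is small. Finally, I absorb the constant $C$ by reducing $c$, which gives $e^{-c\rho}$ with $c$ depending only on $d,r,\lambda$.

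\emph{Main obstacle.} The only delicate point is bookkeeping: making the constants uniform, so that the prefactor and the small-$\rho$ regime are absorbed into a single exponent $c$ with no residual dependence on $t$. The regime $\rho\ge v_\ast t$ is exactly what makes the ratio $K\lambda t/m$ uniformly small, so once $v_\ast$ is chosen as above, everything else is routine.
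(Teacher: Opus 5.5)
Your proposal follows the paper's proof essentially step for step: a union bound over at most $(2r+1)^{dm}$ backward chains, each with expected count $(\lambda t)^m/m!$, then $m!\ge(m/e)^m$ to get a geometric tail once $\rho\ge v_\ast t$ and $\rho\ge 4r$, with the regime $\rho<4r$ treated separately. The one thing to fix is that last regime: shrinking $c$ alone cannot make the bound trivial, because $e^{-c\rho}<1$ for every $c>0$. You therefore need your alternative, which is what the paper does. There $\rho\ge v_\ast t$ forces $t<4r/v_\ast$, so the probability of even one mark within distance $r$ of $x$ is at most $(2r+1)^d\lambda t<4r(2r+1)^d\lambda/v_\ast$. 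This is a constant strictly below $1$ (equal to $2e^{-3}$ with your $v_\ast$), and $c$ is then chosen so that $e^{-4cr}$ exceeds it.
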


\begin{proof}
Let $E_{\rho,t}$ denote the event in the statement. If a causal path with $m$ marks connects the complement of the ball of radius $\rho$ to $(x,t)$, then there exists $y\in \Z^d$ satisfying $\|y-x\|_\infty\geq\rho$ such that $\|y - x_1\|_\infty\leq r$. Consequently,
\begin{equation}
\rho
\leq \|y-x_1\|_\infty
+\sum_{i=1}^{m-1}\|x_{i+1}-x_i\|_\infty
+\|x_m-x\|_\infty
\leq (m+1)r.
\end{equation}
Thus, the path contains at least $M:=\max\left\{1,\left\lceil\frac{\rho}{r}\right\rceil-1\right\}$ marks.

Let us define $K = (2r+1)^d\lambda$. Working backward from $x$, there are at most $(2r+1)^{dm}$ possible spatial sequences of length $m$, while the expected number of chronologically ordered $m$-tuples of marks along any fixed sequence is $(\lambda t)^m/m!$. A union bound, therefore, gives:
\begin{equation}
\mathbb{P}(E_{\rho,t})
\leq \sum_{m\geq M}\frac{(Kt)^m}{m!}.
\label{eq:causal_path_sum}
\end{equation}
Using $m!\geq(m/e)^m$, we obtain---whenever $M\geq4eKt$---the following:
\begin{align}
\sum_{m\geq M}\frac{(Kt)^m}{m!}
&\leq \sum_{m\geq M}\left(\frac{eKt}{m}\right)^m
\leq \sum_{m\geq M}4^{-m}
\leq 2^{-M}.
\label{eq:causal_path_tail}
\end{align}

Let us set $v_\ast=8eKr$, and $c=\frac{\log 2}{4r}$, and suppose that $\rho\geq4r$ and $\rho\geq v_\ast t$. Then $M\geq\frac{\rho}{2r}
\geq4eKt$, so that Eqs.~\eqref{eq:causal_path_sum} and \eqref{eq:causal_path_tail} imply:
\begin{equation}
\mathbb{P}(E_{\rho,t})
\leq2^{-M}
\leq\exp\left(-\frac{\log 2}{2r}\rho\right)
\leq e^{-c\rho}.
\end{equation}

It remains to consider $0<\rho<4r$. Any causal path reaching $(x,t)$ contains at least one mark within a distance $r$ of $x$. Hence,
\begin{equation}
\mathbb{P}(E_{\rho,t})
\leq 1-e^{-Kt}
\leq Kt
\leq\frac{K\rho}{v_\ast}
<\frac{1}{2},
\end{equation}
whereas $e^{-c\rho}>e^{-4cr}=1/2$. The desired inequality therefore holds in this regime as well.
\end{proof}

\section{Proof of the Stability Criterion}
\label{sec:app_B}
Here we present a rigorous construction and proof of our stability criterion. We begin by introducing some geometric ideas that will be crucial for the formulation of the stability criterion. 

\subsection{Stable Directions}
For a linear form $\ell \in (\Z^d)^{*}\backslash \{\boldsymbol{0}^{\ast}\}$, we define a half-space $H_{\ell}(c)$ and a set of neighbors $\mathcal{N}^{+}_a(\ell)$ as follows:
\begin{align}
    H_\ell(c)&=\{x:\ell(x)\ge c\}\\
    \mathcal{N}_a^+(\ell)&=\{u\in \mathcal{N}_a:\ \ell(u)\ge0\}.
\end{align}

\begin{definition}\label{def:stable}
The direction corresponding to $\ell$ is said to be \emph{stable} if $\cN_a^+(\ell)\in\cZ_a$ for every $a\in\cA$, where $\cZ_a$ denotes the set of all zero-sets corresponding to the update-type $a$.
The set of stable directions is denoted by $\cS$.
\end{definition}

\begin{lemma}[half-space invariance]\label{lem:halfspace}
Let $\ell\in\cS$ and suppose $\eta_0=0$ on $H_\ell(c)$, with arbitrary
exterior. Then, under the noiseless dynamics, $\eta_t={\tt 0}$ on $H_\ell(c)$ for
all $t\ge0$, deterministically in the marks.
\end{lemma}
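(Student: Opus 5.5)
The plan is to argue pathwise, for a fixed realization of the update marks, by induction over the marks that can affect the half-space. The key local fact comes straight from Definition~\ref{def:stable}. Take a site $x\in H_\ell(c)$ and a neighbor offset $u\in\cN_a^+(\ell)$. Then $\ell(x+u)=\ell(x)+\ell(u)\geq c$, so $x+u\in H_\ell(c)$. Hence, if the current configuration vanishes on $H_\ell(c)$, all inputs of $f_a$ at $x$ indexed by $\cN_a^+(\ell)$ are ${\tt 0}$. Because $\cN_a^+(\ell)$ is a zero set, \eqnref{eq:zero_set} forces $f_a=\tt 0$, whatever the inputs outside $H_\ell(c)$. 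So every single update at a site of $H_\ell(c)$, applied to a configuration that is clean on $H_\ell(c)$, outputs ${\tt 0}$. Updates at sites outside $H_\ell(c)$ do not change sites inside it, and the noiseless dynamics has no hostile marks.

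The difficulty is that $H_\ell(c)$ is infinite, so the marks affecting it have no global temporal ordering and a naive induction over them is unavailable. I would resolve this with the well-posedness construction of Proposition~\ref{prop:harris}. Fix $x\in H_\ell(c)$ and $T<\infty$. Almost surely, $\eta_t(x)$ for $t\le T$ is determined by a finite backward dependency cluster of marks and initial spins. I would list the marks in this cluster located at sites of $H_\ell(c)$ in increasing time order, $t_1<\dots<t_n$; they are almost surely distinct. I would then prove, by induction on $j$, the claim that every site of $H_\ell(c)$ in the cluster is ${\tt 0}$ just after time $t_j$.

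For the inductive step, consider the update at time $t_j$ at a site $y\in H_\ell(c)$. The relevant inputs at $y+\cN_a^+(\ell)$ lie in $H_\ell(c)$. Their current values are either initial spins, which are ${\tt 0}$ by hypothesis, or outputs of earlier updates inside $H_\ell(c)$, which are ${\tt 0}$ by the induction hypothesis. These earlier updates belong to the backward cluster, since the cluster is closed under taking dependencies. Therefore the zero-set property gives output ${\tt 0}$. Applying this at every $x$ and $T$, and taking a countable union of null sets, gives $\eta_t={\tt 0}$ on $H_\ell(c)$ for all $t$ almost surely. On that full-measure set of mark configurations the conclusion is deterministic.

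The main obstacle is the bookkeeping in the second step: each input to an update inside the half-space must be traced to an earlier step of the same finite backward cluster, so that the induction is well-founded. Once finiteness from Proposition~\ref{prop:harris} is in hand, the rest is the one-line zero-set computation above.
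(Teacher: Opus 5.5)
Your proof is correct and follows the paper's route. The paper also reduces everything to your zero-set computation, showing that each local update maps configurations clean on $H_\ell(c)$ to configurations clean on $H_\ell(c)$, and then uses Proposition~\ref{prop:harris} to write the flow on bounded regions as a.s.\ finite compositions of such updates; your time-ordered induction over the backward cluster spells out the bookkeeping the paper leaves implicit.

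One cosmetic fix: state the induction hypothesis as ``the updates at $t_1,\dots,t_j$ all output ${\tt 0}$'', which is what your inductive step proves. As written, it refers to all cluster sites at time $t_j$, and their values need not be determined by the cluster alone.
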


\begin{proof}
Let $F_{\ell}(c)=\{\eta:\eta|_{H_\ell(c)}\equiv {\tt 0}\}$ be the set of configurations that are defect-free on $H_{\ell}(c)$. By
Proposition~\ref{prop:harris} the flow on any bounded region is an a.s.\
finite composition of local maps, and the infinite-volume flow is the
increasing limit of finite-volume flows. Hence, it suffices to show that each map $f_a$ sends $F_{\ell}(c)$ into $F_{\ell}(c)$. Fix $\eta\in F_{\ell}(c)$ and an
update of type $a$ at $x$. If $x\notin H_\ell(c)$, the half-space is
untouched. If $x\in H_\ell(c)$, then every $u\in \mathcal{N}_a^+(\ell)$ satisfies
$\ell(x+u)=\ell(x)+\ell(u)\ge c$, so $x+u\in H_\ell(c)$ and $\eta(x+u)=0$.
Since $\cN_a^+(\ell)\in\cZ_a$ (set of zero sets as defined in the main text), the output is ${\tt 0}$.
\end{proof}

Thus, if $\ell$ is stable, then $H_{\ell}(c)$ is impenetrable by defects under noiseless dynamics.

\subsection{Balanced Families}

\begin{definition}\label{def:balanced}
A \emph{balanced family} is a finite list $\ell_1,\dots,\ell_m\in\cS$
together with weights $\alpha_i>0$ such that $\sum_i\alpha_i\ell_i=0$.
\end{definition}

Intuitively, the directions $\ell_i$ are inward normals of clean half-spaces that defects cannot cross, and a droplet of defects is caught only if such half-spaces can bound it in at least one direction. This translates to the positive linear dependence $\sum_i\alpha_i\ell_i=0$.

\subsection{Spacetime Coarse-Graining}

To begin with, we fix a scale $H$ and partition $\Z^d$ into congruent coarse cells $Q_H(z)$, where $z \in \Z^d$ labels the coarse blocks. These blocks are free to be anisotropic, and thus, for generality, we take the side length along axis-$j$ as $L^{(j)}_H$. 

Independently of the block scale $H$, we fix the following objects once and for all:
\begin{enumerate}
    \item a positive integer $m$ and finite nonempty sets  (predecessor regions) of displacements on the coarse-grained lattice, $S_1, \, S_2,\, \cdots, S_m \subset \Z^d$ (zero sets of the coarse-grained model); 
    \item linear forms $\widetilde{\ell}_1, \, \widetilde{\ell}_2, \cdots, \widetilde{\ell}_m$: $\Z^d_{\rm c.g.} \to \Z$ acting on the coarse-grained lattice $\Z^d_{\rm c.g.}$, each corresponding to a stable direction scaled to the coarse-grained lattice;
    \item displacements $\delta_1, \delta_2,\cdots, \delta_m \geq 0$. 
\end{enumerate}
We further impose the following condition:
\begin{equation}
\label{eq:support}
    \widetilde{\ell}_i(s) \geq \delta_i
\end{equation}
for every $s \in S_i$. With these entities in hand, we now define a state variable $X(z) = {\tt 0},\, {\tt 1}$ for $z \in \Z^d_{\rm c.g.}$ that tells us the state of a coarse cell $z$. It is ${\tt 1}$ if there is even a single defect inside $Q_H(z)$, and {\tt 0} otherwise. We also define a map $\Phi$ that acts on configurations of $X$ on $\mathbb{Z}^d_{\rm c.g.}$ as follows:
\begin{equation}
    \Phi[X](z)\ =\ \bigwedge_{i=1}^{m}\ \Bigl(\bigvee_{s\in S_i}X(z+s)\Bigr),
    \label{eq:Phi}
\end{equation}
The above definition of $\Phi$ asserts that if all cells $z+s$ with $s \in S_i$ are clean (i.e., $X(z+s) = {\tt 0}$ for every $s \in S_i$), then an application of $\Phi$ clears the cell $z$, i.e., $\Phi[X](z) = {\tt 0}$. 

\subsection{The Spacetime Block Estimate}

\begin{definition}[block process]\label{def:clause}
For each scale $H$, choose a finite set of coarse displacements $B\subset\Z^d_{\rm c.g.}$ such that $\{0\}\cup S_i\subseteq B$ for every $i = 1,\cdots, m$. For a coarse cell $z$, we denote by
\begin{equation}
\cB_H(z):=\bigcup_{s\in B}Q_H(z+s)
\end{equation}
the corresponding set of microscopic sites in its buffer. We write the \emph{predecessor region} associated with direction $i$ as
\begin{equation}
Q_H(z+S_i)\equiv\bigcup_{s\in S_i}Q_H(z+s)
\subseteq \cB_H(z).
\end{equation}

Let $\zeta=\zeta^{H,i,z}$ be the \emph{noiseless} process run \textit{only} on $\cB_H(z)$, with all sites of $\cB_H(z)^c$ frozen
permanently at {\tt 1}, starting from
\begin{equation}
\zeta_0(x)=
\begin{cases}
{\tt 0},&x\in Q_H(z+S_i),\\
{\tt 1},&x\in \cB_H(z)\setminus Q_H(z+S_i).
\end{cases}
\end{equation}
Let $r_{H,i}$ be the probability that some site in $Q_H(z)$ is defective at $t=T_H$ under the $\zeta$ dynamics. For stability, we require the following to hold for each $i$:
\begin{equation}\label{eq:block_failure_supp}
r_{H,i}:= \PP\bigl(\exists\,x\in Q_H(z): \zeta_{T_H}(x)={\tt 1}\bigr) \longrightarrow 0
\qquad(\text{as }H\to\infty),
\end{equation}
independently of $z$ by translational invariance.
\end{definition}

A few caveats about the above definition follow. (i) $\zeta$ depends on the update marks in $\cB_H(z)\times(0,T_H]$ alone; (ii) $\zeta$ dominates every process with a smaller-or-equal initial configuration on $\cB_H(z)$, because the exterior region is frozen at {\tt 1} in its definition.

Another important requirement we set is that the coarse-graining scheme and the choice of buffer should be such that the buffer spans only a bounded number of coarse cells, independent of $H$.

\begin{definition}[scale-independent locality]
There exists a finite positive constant $R_\ast$, independent of $H$,
such that
\begin{equation}\label{eq:locality}
B\subseteq \{u\in\Z^d_{\rm c.g.}:\|u\|_\infty\le R_\ast\}.
\end{equation}
Equivalently, $\cB_H(z)\subseteq \bigcup_{\|u\|_\infty\le R_\ast}Q_H(z+u)$.
\end{definition}

The point of \eqref{eq:locality} is that it is a bound on the coarse-grained lattice, not in Euclidean distance on the microscopic lattice. This distinction is important because, with anisotropic cells, the two are different. We will elaborate on this remark further when we apply the stability criterion to the continuous-time {\sf R} process.

\subsection{The Coarse Map is an Eroder}
\label{sec:sm-coarse-eroder}
\begin{proposition}[geometric eroder]\label{prop:eroder}
Suppose there are $\alpha_i>0$ with
\begin{equation}\label{eq:E}
\sum_{i=1}^m\alpha_i\widetilde{\ell}_i=0
\quad\text{and}\quad
\Delta:=\sum_{i=1}^m\alpha_i\delta_i>0 .
\end{equation}
Then $\Phi$ of \eqref{eq:Phi} is an eroder: for every finite
$D\subseteq\Z^d$ there is $n_0(D)$ with $\Phi^{n}(D)=\varnothing$ for
$n\ge n_0(D)$.
\end{proposition}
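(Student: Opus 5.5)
The plan is to track, for each direction $i$, the ``clean half-space level'' of the current defect set and to show that the weighted sum of these levels increases by a fixed positive amount at every step while being bounded as long as the set is nonempty. Identify a coarse configuration with its defect set $D\subseteq\Z^d_{\rm c.g.}$. For a finite nonempty $D$ set
\begin{equation*}
a_i(D):=1+\max_{y\in D}\widetilde{\ell}_i(y),
\end{equation*}
so that $D$ contains no cell of the half-space $\{\widetilde{\ell}_i\ge a_i(D)\}$.

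The first step is the one-step advance already sketched around \figref{fig:geometric_conditions}(c). Suppose $D$ is clean on $\{\widetilde{\ell}_i\ge a\}$, and let $z$ satisfy $\widetilde{\ell}_i(z)\ge a-\delta_i$. By the support condition \eqref{eq:support}, every $s\in S_i$ gives $\widetilde{\ell}_i(z+s)\ge a$. Hence $z+s$ is clean for all $s\in S_i$, so the inner OR in \eqref{eq:Phi} vanishes and $\Phi[D](z)={\tt 0}$. Therefore $\Phi(D)$ is clean on $\{\widetilde{\ell}_i\ge a-\delta_i\}$, and this holds for each $i$ separately. Since $\Phi$ is monotone and $\Phi(\varnothing)=\varnothing$, we can iterate. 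Induction on $n$ then shows that $D_n:=\Phi^n(D)$ is clean on $\{\widetilde{\ell}_i\ge a_i(D)-n\delta_i\}$ for every $i$ simultaneously. In other words, every $y\in D_n$ obeys $\widetilde{\ell}_i(y)\le a_i(D)-n\delta_i-1$ for every $i$.

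The second step uses the balance relation. Fix $y\in D_n$, multiply the $i$-th inequality by $\alpha_i>0$ and sum over $i$. The left side is $\sum_i\alpha_i\widetilde{\ell}_i(y)=0$, so
\begin{equation*}
0\le \sum_i\alpha_i\bigl(a_i(D)-1\bigr)-n\Delta .
\end{equation*}
Since $\Delta>0$, this fails once $n>n_0(D):=\Delta^{-1}\sum_i\alpha_i\bigl(a_i(D)-1\bigr)$. Hence $D_n=\varnothing$ for all larger $n$, which is the eroder property. One should also check that $n_0(D)$ is finite, which holds because $D$ is finite and so each $a_i(D)$ is finite.

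The main subtlety, though not a serious obstacle, is a hidden assumption in the first step: it requires the clean half-space bound for direction $i$ to hold on all of $\Z^d_{\rm c.g.}$, including cells that are not predecessors of any target. Because $\Phi$ is defined globally by \eqref{eq:Phi}, no buffer or boundary effect enters here. The statement writes $D\subseteq\Z^d$, which should be read as a finite subset of the coarse lattice. Indices $i$ with $\delta_i=0$ are harmless: their half-space is merely preserved and contributes nothing to $\Delta$, and positivity of $\Delta$ comes only from the advancing directions.
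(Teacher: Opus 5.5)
Your proof is correct and is essentially the paper's argument in contrapositive form: the paper shows directly that any $z\in\Phi(D)$ has, for each $i$, a predecessor $z+s_i\in D$, which gives $h_i(\Phi^n(D))\le h_i(D)-n\delta_i$ with $h_i(D)=\max_{x\in D}\widetilde{\ell}_i(x)=a_i(D)-1$, and it then sums with the weights $\alpha_i$ and invokes balance exactly as you do. One small point: going from the strict bound $\widetilde{\ell}_i(y)<a_i(D)-n\delta_i$ to $\widetilde{\ell}_i(y)\le a_i(D)-n\delta_i-1$ uses integrality of $\delta_i$, which is harmless, since the strict inequality already yields the contradiction, or you may replace $\delta_i$ by $\lceil\delta_i\rceil$ because $\widetilde{\ell}_i$ is integer-valued.
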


\begin{proof}
For a nonempty finite set $D$, we set $h_i(D)=\max_{x\in D}\widetilde{\ell}_i(x)$. If $z\in\Phi(D)$, then for each $i$, there is some $s_i\in S_i$ with $z+s_i\in D$, such that
\begin{equation}
\widetilde{\ell}_i(z)=\widetilde{\ell}_i(z+s_i)-\widetilde{\ell}_i(s_i)\ \le\ h_i(D)-\delta_i,
\end{equation}
since $\widetilde{\ell}_i(s_i) \geq \delta_i$ (\eqnref{eq:support}). Hence, $h_i(\Phi^n(D))\le h_i(D)-n\delta_i$. If $z\in\Phi^n(D)$, then, using
$\sum_i\alpha_i\widetilde{\ell}_i=0$, we obtain
\begin{equation}
0=\sum_i\alpha_i\widetilde{\ell}_i(z)\ \le\ \sum_i\alpha_i h_i(D)-n\Delta,
\end{equation}
which is impossible for large $n$. Since all $\delta_i\ge 0$, condition \eqref{eq:E} requires that the balanced family contains at least one advancing direction.
\end{proof}

\subsection{Block Process and Sparse Errors}

We now discuss the consequences of $\Phi$ being an eroder. Fix a spatial scale $H$, and a temporal coarse-graining scale $T_H$, i.e., the coarse-grained time variable takes values $t_n = n T_H$ with $n \in \mathbb{Z}_{\geq 0}$. Define a variable $X$ on the coarse-grained spacetime as follows:
\begin{equation}
    X_n(z) = \begin{cases}
        {\tt 1} &\text{if } Q_H(z) \text{ contains a defect at time $t_n$},\\
        {\tt 0} &\text{otherwise}.
    \end{cases}
\end{equation}
For each block $(n,z)$ and face $i$, build the block process
$\zeta^{H,i,z}$ introduced in Definition~\ref{def:clause} from the true \emph{noiseless
update marks} in $\cB_H(z)\times(t_n,t_{n+1}]$. Let ${\sf fail}_{(n,z,i)}$ be its failure event from \eqref{eq:block_failure_supp}, and let ${\sf hostile}_{(n,z)}$ be the event that at least one \emph{hostile} error mark falls in $\bigcup_i\cB_H(z)\times(t_n,t_{n+1}]$. We define the error event as:
\begin{equation}
    {\sf error}_{n,z} = {\sf hostile}_{n,z}\ \cup \ \bigcup_{i = 1}^{m} {\sf fail}_{n,z,i}
\end{equation}

\begin{lemma}[block inequality]\label{lem:block}
Pathwise, the following holds:
\begin{equation}
X_{n+1}(z) \le \Phi(X_n)(z) \vee \mathbf{1}_{{\sf err},n}(z).
\end{equation}
\end{lemma}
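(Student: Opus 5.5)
The inequality is trivial unless the right-hand side is ${\tt 0}$. So I will assume $\mathbf{1}_{{\sf err},n}(z)=0$ and $\Phi(X_n)(z)={\tt 0}$, and show that $X_{n+1}(z)={\tt 0}$, i.e.\ that $Q_H(z)$ is clean at time $t_{n+1}$. By the definition \eqref{eq:Phi}, $\Phi(X_n)(z)={\tt 0}$ gives an index $i$ with $X_n(z+s)={\tt 0}$ for every $s\in S_i$. This means the true configuration $\eta_{t_n}$ vanishes on $Q_H(z+S_i)$. The absence of the error means that no hostile mark falls in $\cB_H(z)\times(t_n,t_{n+1}]$, and that the block process $\zeta=\zeta^{H,i,z}$ built from the true marks in this window does not fail, so $\zeta_{T_H}\equiv{\tt 0}$ on $Q_H(z)$.

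The plan is to compare the true process with $\zeta$ on the buffer using \ref{M2}. At time $t_n$, $\zeta_0$ equals ${\tt 0}$ on $Q_H(z+S_i)$, where $\eta_{t_n}$ is also ${\tt 0}$, and equals ${\tt 1}$ on the rest of $\cB_H(z)$. Hence $\eta_{t_n}\le\zeta_0$ on $\cB_H(z)$, for any configuration outside the buffer. During $(t_n,t_{n+1}]$ the true process inside $\cB_H(z)$ is driven only by the noiseless update marks, because there are no hostile marks there. Outside the buffer its evolution is arbitrary, which is exactly the setting of \ref{M2}. The Markov property and time-homogeneity of the Poisson marks (Proposition~\ref{prop:harris}) let me shift time by $t_n$, so $\zeta$ is literally the frozen-boundary process driven by the same marks. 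Lemma~\ref{lem:couplings}\ref{M2} then gives $\eta_{t_n+s}\le\zeta_s$ on $\cB_H(z)$ for $0\le s\le T_H$. At $s=T_H$ this yields $\eta_{t_{n+1}}\equiv{\tt 0}$ on $Q_H(z)\subseteq\cB_H(z)$, so $X_{n+1}(z)={\tt 0}$.

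The one point that needs care is that \ref{M2} is stated for processes using the same update maps and marks inside $B$. Hostile marks outside the buffer are allowed, since they are part of the arbitrary exterior evolution. Inside the buffer, the no-hostile condition is what makes the true dynamics coincide with the noiseless maps. I would also check the induction step behind \ref{M2}: at each mark in the buffer, the exterior inputs of $\zeta$ are ${\tt 1}$, which is maximal, and the interior inputs are ordered by hypothesis. Monotonicity of $f_a$ (Assumption~\ref{ass:monotone}) then preserves the order, and the a.s.\ finiteness of the relevant marks (Proposition~\ref{prop:harris}) makes this induction over marks in temporal order legitimate. Since the argument holds for every realization of the marks, the inequality holds pathwise, as claimed.
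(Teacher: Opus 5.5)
Your proposal is correct and follows the paper's proof essentially step by step. You choose the face $i$ whose predecessor region is clean, observe that $\eta_{t_n}\le\zeta^{H,i,z}_0$ on $\cB_H(z)$, use the absence of hostile marks in the buffer to apply the frozen-boundary domination (M2) over $(t_n,t_{n+1}]$, and conclude from the non-failure of that block test; your remarks on hostile marks outside the buffer and on the mark-by-mark induction are accurate refinements, though the appeal to the Markov property is unnecessary, since the pathwise claim only uses that $\zeta^{H,i,z}$ is by construction driven by the (time-shifted) true marks in that window.
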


\begin{proof}
Suppose $\Phi(X_n)(z)=0$ and ${\sf error}_{n,z}$ fails. We must show that $X_{n+1}(z)=0$.
By \eqref{eq:Phi} there is a face $i$ such that $X_n(z+s)=0$ for all $s\in S_i$. This means that the true configuration at time $t_n$ vanishes on $Q_H(z+S_i)$ and is $\le1$ elsewhere in the buffer. Hence, $\eta_{t_n}\le\zeta^{H,i,z}_0$ on $\cB_H(z)$. On ${\sf hostile}_{n,z}^{\,c}$, the true process inside the buffer during $(t_n,t_{n+1}]$ is driven by noiseless marks only. Thus, by the frozen-boundary domination \ref{M2}, we obtain $\eta_{t_{n+1}}\le\zeta^{H,i,z}_{T_H}$ on $\cB_H(z)$. Given that ${\sf fail}_{(n,z,i)}$ does not occur, the right side vanishes on $Q_H(z)$.
\end{proof}

Next, we bound the probability of ${\sf hostile}_{(n,z)}$. Here, we invoke the fact that the noise strength is weak, or in other words, the noise rate $\eps$ is arbitrarily small. Let $b_H=\bigl|\bigcup_i\cB_H(0)\bigr|$. Since each $r_{H,i}$ is controlled by \eqref{eq:block_failure_supp} and hostile marks occur independently at a rate of $\eps$ at each site,
\begin{equation}\label{eq:sparsity}
\PP({\sf error}_{(n,z)})\ \le\ \sum_{i=1}^m r_{H,i}
+\bigl(1-e^{-\varepsilon b_HT_H}\bigr) \ \le\ \underbrace{\sum_i r_{H,i}+\varepsilon\,b_HT_H}_{=:p_H(\varepsilon)}.
\end{equation}

By construction, ${\sf error}_{n,z}$ is measurable with respect to the marks in
$\cB_H(z)\times(t_n,t_{n+1}]$. By uniform locality, \eqref{eq:locality}, the space-time regions attached to distinct sites $(n,z)$ of the coarse-grained spacetime form an intersection graph of degree bounded by some constant depending only on $d$, $R_\ast$, and $m$. Let $K$ be a proper coloring number (chromatic number) of that graph. Any finite set $A$ of sites of the coarse-grained spacetime splits as $A=A_1\cup\dots\cup A_K$ with mutual independence within each class.

\begin{lemma}[product bound via colouring]\label{lem:hoelder}
For every finite set $A$ of coarse space-time sites,
\begin{equation}\label{eq:product}
\PP\left(\bigcap_{v\in A}{\sf error}_v\right)\ \le\ p_H(\varepsilon)^{|A|/K}.
\end{equation}
\end{lemma}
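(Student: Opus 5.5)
The plan is to combine a greedy selection of a large independent subfamily of $A$ with the product structure of independent Poisson marks on disjoint spacetime regions.

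First, I record the measurability and disjointness facts already established. Each event ${\sf error}_v$, with $v=(n,z)$, is measurable with respect to the update and hostile marks in the spacetime region
\begin{equation*}
R_v:=\Bigl(\bigcup_i\cB_H(z)\Bigr)\times(t_n,t_{n+1}].
\end{equation*}
This holds because the block processes $\zeta^{H,i,z}$ depend only on marks in $\cB_H(z)\times(t_n,t_{n+1}]$, by \ref{M2}. The hostile event is defined through the same region. Regions with different time indices $n\neq n'$ are disjoint, since the intervals $(t_n,t_{n+1}]$ are. For equal time indices, $R_v\cap R_{v'}\neq\varnothing$ forces $\|z-z'\|_\infty\le 2R_\ast$ by the locality condition \eqref{eq:locality}. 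Hence the intersection graph $G$ on coarse spacetime has maximal degree at most $(4R_\ast+1)^d-1$, which is independent of $H$. By the greedy coloring bound, $G$ admits a proper coloring with $K\le (4R_\ast+1)^d$ colors.

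Next, I fix a finite set $A$ and restrict this coloring to $A$, giving color classes $A_1,\dots,A_K$. By pigeonhole, some class $A_k$ has $|A_k|\ge |A|/K$. The regions $\{R_v\}_{v\in A_k}$ are pairwise disjoint. The Poisson marks on disjoint spacetime regions are independent, since the marks form independent Poisson processes for each site and type. Therefore the events $\{{\sf error}_v\}_{v\in A_k}$ are mutually independent, and
\begin{equation*}
\PP\Bigl(\bigcap_{v\in A}{\sf error}_v\Bigr)\le \PP\Bigl(\bigcap_{v\in A_k}{\sf error}_v\Bigr)=\prod_{v\in A_k}\PP({\sf error}_v)\le p_H(\eps)^{|A_k|}.
\end{equation*}
The last inequality uses \eqref{eq:sparsity} together with translation invariance.

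Finally, if $p_H(\eps)\le 1$, then $|A_k|\ge|A|/K$ gives $p_H(\eps)^{|A_k|}\le p_H(\eps)^{|A|/K}$. If instead $p_H(\eps)>1$, the bound holds trivially because the right-hand side exceeds $1$.

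The only delicate point is the independence claim. The error event must depend on marks in $R_v$ alone, and not on the configuration $\eta_{t_n}$. This is exactly why the block processes were defined with worst-case initial data and a frozen exterior, rather than from the true state. I will check that the definition of ${\sf fail}_{(n,z,i)}$ uses only the marks in $\cB_H(z)\times(t_n,t_{n+1}]$, so that independence follows from the independent-increments property of Poisson processes.
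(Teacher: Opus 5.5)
Your proposal is correct and follows essentially the same route as the paper's proof. Both arguments restrict a bounded-degree proper colouring of the buffer-intersection graph to $A$, take the largest colour class (of size at least $|A|/K$ by pigeonhole), factor the probability using independence of Poisson marks on disjoint spacetime regions, and dispose of the case $p_H(\varepsilon)>1$ trivially; your explicit bound $K\le(4R_\ast+1)^d$ via greedy colouring is a harmless sharpening of the paper's unspecified constant.
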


\begin{proof}
If $p_H(\varepsilon)>1$, the bound is trivial. Otherwise, split $A=A_1\cup\dots\cup A_K$ into color classes and let $j$ index the largest, so $|A_j|\ge|A|/K$ by pigeonhole. Dropping the constraints outside class $j$ can only raise the probability, the events inside a class are independent, and $p_H(\varepsilon)\le1$:
\begin{equation}
\PP\Bigl(\bigcap_{v\in A}{\sf error}_v\Bigr)\ \le\
\PP\Bigl(\bigcap_{v\in A_j}{\sf error}_v\Bigr)\ =\
\prod_{v\in A_j}\PP({\sf error}_v)\ \le\ p_H(\varepsilon)^{|A_j|}\ \le\
p_H(\varepsilon)^{|A|/K} .
\end{equation}
\end{proof}

Interestingly, for static directions, the block failure probability $r_{H,i}$ can be straightforwardly shown to vanish as the block scale approaches infinity. 

\begin{lemma}[static directions]\label{lem:static_direction}
Let $\ell_i\in\cS$ with $\delta_i=0$, and let $l^{(i)}_{\rm min}$ be the minimal
$\ell_i$-height of the core $Q=Q_H(z)$. Define $\cH_i=\{x:\ell_i(x)\ge l^{(i)}_{\rm min}\}$ so that $Q\subseteq\cH_i$. Suppose the mismatch set $F$, given by
\begin{equation}
F\ :=\ \bigl((\cH_i\cap \cB_{H}(z))\setminus Q_H(z+S_i)\bigr)\ \cup\ \cB_H^c(z)
\end{equation}
satisfies the condition $\dist_\infty(Q,F)\ge W\ge v_\ast T_H$. Then
$r_{H,i}\le|Q|\,e^{-cW}$.
\end{lemma}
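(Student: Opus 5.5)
The plan is to compare the block process $\zeta=\zeta^{H,i,z}$ with an auxiliary infinite-volume noiseless process $\hat\xi$. We run $\hat\xi$ with the same update marks as $\zeta$ inside $\cB_H(z)$. We initialize $\hat\xi_0={\tt 0}$ on $\cH_i$ and ${\tt 1}$ elsewhere.

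By Lemma~\ref{lem:halfspace}, since $\ell_i\in\cS$, the half-space $\cH_i$ stays clean under $\hat\xi$ deterministically. Because $Q\subseteq\cH_i$, this gives $\hat\xi_{T_H}\equiv{\tt 0}$ on $Q$. It therefore suffices to show that, outside a rare event, $\zeta_{T_H}$ and $\hat\xi_{T_H}$ agree on $Q$.

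The key observation is that $\zeta_0$ and $\hat\xi_0$ differ (as configurations on $\Z^d$, with $\zeta$ frozen at ${\tt 1}$ outside the buffer) only on a subset of the mismatch set $F$. There are two cases to check.
\begin{enumerate}
\item On $\cB_H(z)\cap\cH_i$, the process $\zeta_0$ is ${\tt 0}$ exactly on $Q_H(z+S_i)$. So a disagreement there lies in $(\cH_i\cap\cB_H(z))\setminus Q_H(z+S_i)$.
\item On $\cB_H(z)\setminus\cH_i$, both processes start at ${\tt 1}$. Outside $\cB_H(z)$, the frozen exterior of $\zeta$ may disagree with $\hat\xi$; this is the $\cB_H^c(z)$ part of $F$.
\end{enumerate}
Inside $\cB_H(z)$ both processes use identical marks. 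Outside, $\zeta$ is frozen while $\hat\xi$ may evolve, but those sites lie in $F$ anyway.

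Next we use the graphical construction (Proposition~\ref{prop:harris}). The value of either process at $(x,T_H)$ is a deterministic function of the marks and initial spins reachable from $(x,T_H)$ by backward causal paths. If no causal path connects $F\times(0,T_H]$, or an initial spin in $F$, to $(x,T_H)$, then the backward dependency cones of $\zeta$ and $\hat\xi$ at $x$ see identical data. Hence $\zeta_{T_H}(x)=\hat\xi_{T_H}(x)={\tt 0}$.

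I expect the main care to be needed here. The frozen exterior of $\zeta$ acts like a time-constant input, and we must argue that influence from it also requires a causal path starting within distance $r$ of $F$. This is precisely the convention of Lemma~\ref{lem:speed}, where $x_1\in D+[-r,r]^d$.

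For each $x\in Q$, the set $F$ lies in $\{y:\|y-x\|_\infty\ge W\}$ because $\dist_\infty(Q,F)\ge W$. Lemma~\ref{lem:speed} with $\rho=W\ge v_\ast T_H$ then bounds the probability of such a causal path by $e^{-cW}$.

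Finally, a union bound over $x\in Q$ gives $r_{H,i}=\PP(\exists x\in Q:\zeta_{T_H}(x)={\tt 1})\le|Q|e^{-cW}$.
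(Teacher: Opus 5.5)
You follow the paper's route: compare $\zeta$ with the infinite-volume process $\hat\xi$ started clean on $\cH_i$, use Lemma~\ref{lem:halfspace} to keep $\hat\xi$ clean on $Q$, and charge any discrepancy at $x\in Q$ to a causal path from $F$, then apply Lemma~\ref{lem:speed} and a union bound.

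\textbf{The flaw.} Your case analysis of the initial data contains an error, and it breaks your exact-agreement step. In case~2 you assert that both processes start at ${\tt 1}$ on $\cB_H(z)\setminus\cH_i$. But $\zeta_0={\tt 0}$ on all of $Q_H(z+S_i)$, including any part of $Q_H(z+S_i)$ lying outside $\cH_i$, and there $\zeta_0={\tt 0}<{\tt 1}=\hat\xi_0$. The lemma's hypotheses do not exclude this set: they impose only $\widetilde\ell_i(s)\ge0$ on the coarse lattice, while $\cH_i$ is cut at the core's minimal microscopic height. The paper's own proof explicitly allows for it (``if any''). This set is disjoint from $F$, so the hypothesis $\dist_\infty(Q,F)\ge W$ gives no control over its distance from $Q$. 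The backward cone of $(x,T_H)$ can therefore read initial spins on which the two processes disagree even when no causal path from $F$ exists. Your conclusion that $\zeta$ and $\hat\xi$ ``see identical data'' does not follow.

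\textbf{The repair.} Use a one-sided comparison via monotonicity. This is how the paper phrases it: only the set $\{\zeta_0>\hat\xi_0\}$, together with the exterior, needs to lie in $F$, and disagreements with $\zeta_0\le\hat\xi_0$ are harmless.

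Concretely, let $\psi$ be the frozen-exterior block process on $\cB_H(z)$ with the same marks and initial condition $\zeta_0\vee\hat\xi_0$. By \ref{M2}, $\zeta_{T_H}\le\psi_{T_H}$ on the buffer. Inside the buffer, $\psi_0$ and $\hat\xi_0$ differ only on $(\cH_i\cap\cB_H(z))\setminus Q_H(z+S_i)\subseteq F$. Your identical-data argument now applies verbatim to $\psi$ and $\hat\xi$. Off the causal event it gives $\zeta_{T_H}(x)\le\psi_{T_H}(x)=\hat\xi_{T_H}(x)={\tt 0}$, and the rest of your proof goes through unchanged.

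In the concrete applications, the blocks are lattice translates and $\widetilde\ell_i(s)\ge0$ forces $Q_H(z+s)\subseteq\cH_i$. There the problematic set is empty and your original argument is already correct.
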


\begin{proof}
Let $\hat\xi$ be the infinite-volume noiseless process, on the same marks, from {\tt 0} on $\cH_i$ and {\tt 1} elsewhere. By Lemma~\ref{lem:halfspace}, it vanishes on $\cH_i\supseteq Q$ at all times. Every site with $\zeta_0>\hat\xi_0$ lies in $F$. To see this more clearly, note that on $Q_{H}(z+S_i)$, $\zeta_0={\tt 0}$, whereas on $(\cB_H(z)\setminus Q_H(z+S_i))\setminus\cH_i$, both are {\tt 1}. Furthermore, $(\cH_i\cap \cB_H(z))\setminus Q_H(z+S_i)$ --- the only place inside $\cB_H(z)$ with $\zeta_0={\tt 1}>{\tt 0}=\hat\xi_0$ --- is in $F$ by definition. Sites of $Q_{H}(z+S_i)$ outside $\cH_i$, if any, have $\zeta_0=0\le\hat\xi_0$, and that is completely harmless. Thus, using Lemma~\ref{lem:speed}, we obtain the following: 
\begin{equation}
\PP(\zeta_{T_H}(x)={\tt 1})\le\PP(F\text{ causally affects }(x,T_H))\le e^{-cW},\qquad x\in Q.
\end{equation}
Finally, a union bound over the sites of $Q$ completes the proof.
\end{proof}

\subsection{Completion of the proof of Theorem 1}

Having developed all the necessary machinery, we now present the proof of Theorem~\ref{thm:maintext}. To begin with, note that by Proposition~\ref{prop:eroder}, $\Phi$ is an eroder. Let us set $Y_0=\bar{\tt 0}$ and $Y_{n+1}(z)=\Phi(Y_n)(z)\vee\mathbf{1}_{{\sf err},n}(z)$. Using Lemma~\ref{lem:block} and the monotonicity of $\Phi$, we obtain $X_n\le Y_n$ pathwise by induction. Using Eqs.~\eqref{eq:sparsity}--\eqref{eq:product}, the error field of the coarse-grained model satisfies the sparsity condition of Lemma~\ref{lem:main-toom} with $q=q_H(\varepsilon)=p_H(\varepsilon)^{1/K}$. We write $q_H(0)=\bigl(\sum_ir_{H,i}\bigr)^{1/K}$ for its noiseless part. At this point, it is important to fix the order of limits, as this is where the scale-independence of $\Phi$ and of the exponent $K$ becomes crucial. Let $\delta >0$ be some fixed finite constant. Since $q_H(0)\to0$ by \eqnref{eq:assumptionKey} and $\rho_\Phi(q)\to0^+$ as $q\to 0^+$,  we can choose $H_{\delta}$ such that for $H > H_\delta$, $\rho_\Phi\bigl(2q_H(0)\bigr)<\delta$. Next, we choose $\varepsilon_0\equiv\varepsilon_0(\delta,H)>0$ so small that $\varepsilon\,b_HT_H\le\sum_ir_{H,i}$ for all $\varepsilon\le\varepsilon_0$, whence
$q_H(\varepsilon)\le2^{1/K}q_H(0)\le2q_H(0)$. Lemma~\ref{lem:main-toom} then gives, for all $n$ and all $\varepsilon\le\varepsilon_0$, the following
\begin{equation}
\PP_{\bar{\tt 0}}^\varepsilon\bigl(\eta_{nT_H}(\boldsymbol{0})={\tt 1}\bigr)\ \le\
\PP\bigl(X_n(\boldsymbol{0})={\tt 1}\bigr)\ \le\
\rho_\Phi\bigl(q_H(\varepsilon)\bigr)\ <\ \delta,
\end{equation}
using $Q_H(0)\ni 0$ for the first inequality. By the time monotonicity \ref{M3}, $\sup_{t\ge0}\PP_{\bar{\tt 0}}^\varepsilon(\eta_t(\boldsymbol{0})={\tt 1})$ equals the limit along the block times $t_n \to \infty$. Hence, it is smaller than $\delta$ for all $\eps\le\eps_0$. As $\delta>0$ was arbitrary, $\limsup_{\varepsilon\to 0^+}\sup_{t\ge0} \PP_{\bar{\tt 0}}^\varepsilon(\eta_t(\boldsymbol{0})={\tt 1})={0}$, which is precisely the claim~\eqref{eq:stability_maintext}.

\subsection{Relation to Gray's stability theorem}
\label{sec:gray}

We show that Gray's shrinker condition~\cite{Gray1999} implies our stability criterion. We write $A\subseteq\mathbb Z^d$ for the set of clean sites, and let $\beta_x(A)$ and $\mu_x(A)$ be the rates at which $x$ is added to and removed from $A$, respectively. Assume that the rates are translation invariant, finite range, and weakly attractive: positive-birth events are increasing in $A$, while positive-death events are decreasing. Assume also that both homogeneous states are absorbing and that the rates are exclusive, i.e., 
\begin{equation}
    \beta_x(A\setminus\{x\})\,
    \mu_x(A\cup\{x\})=0.
\end{equation}
The associated Toom operator~\cite{Gray1999} is
\begin{equation}
    T(A)= \{x\in A:\mu_x(A)=0\} \cup \{x\notin A:\beta_x(A)>0\}.
    \label{eq:gray-operator}
\end{equation}
These assumptions make $T$ monotone, translation invariant, and finite range, with $T(\varnothing)=\varnothing$ and $T(\mathbb Z^d)=\mathbb Z^d$.

\begin{proposition}
\label{prop:gray-inclusion}
Suppose $T$ is a shrinker. There is an attractive process dominated by the original process when ordered with respect to clean sites that satisfies our stability criterion using a coarse-graining scheme involving congruent boxes $Q_H(z)$ ($z\in \Z^d_{\rm c.g.}$) of spatial block scale $H$ and block time $T_H = \tau H$, where $\tau$ is a finite positive constant independent of $H$. For attractive original rates, the same block estimates hold directly for the original process.
\end{proposition}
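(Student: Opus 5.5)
Our plan is to reduce Gray's setting to the framework of Theorem~\ref{thm:maintext}. The first step is to build a monotone process from the Toom operator $T$. In the language of the paper, defects are the complement of $A$. Exclusivity says that, for each local configuration around $x$, at most one of a birth or a death is possible. We therefore define a single update type at each site with rate $\lambda_{\min}>0$, the minimum positive rate among the finitely many local configurations. When its clock rings, $x$ is set to its value under $T$, i.e.\ $x$ becomes clean iff $x\in T(A)$. For attractive original rates we can instead couple directly: whenever $T$ prescribes a change at $x$, the original process performs that change at rate at least $\lambda_{\min}$. Thinning its clocks and using a basic (Harris-type) coupling, as in \ref{M1}--\ref{M2}, the thinned process dominates the original process in the defect order. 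Weak attractivity together with the formula \eqref{eq:gray-operator} makes $x\mapsto \mathbf 1_{x\notin T(A)}$ a monotone Boolean map $f$ of the finite-range neighborhood. The absorbing states give Assumption~\ref{ass:two-traps}. Stability of the auxiliary process therefore implies stability of the original.

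The second step is geometric. We would use Gray's characterization of shrinkers for monotone finite-range operators, namely Toom's eroder criterion. It states that $T$ erodes iff there are stable directions $\ell_1,\dots,\ell_m$ with zero-set neighborhoods $\cN^+(\ell_i)$ whose positive hull condition $\sum_i\alpha_i\ell_i=0$ holds, and that some direction strictly advances under iteration of $T$. Equivalently, for each $i$ there are a zero set $Z_i$ of $f$ and a $\delta_i'\ge0$ with $\ell_i(u)\ge\delta_i'$ on $Z_i$, and $\sum\alpha_i\delta_i'>0$. By Lemma~\ref{lem:halfspace} these $\ell_i$ are stable directions in the sense of Definition~\ref{def:stable}. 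We then take isotropic boxes of side $H$ and set $\widetilde\ell_i=\ell_i$ on the coarse lattice. With a constant $\kappa$ to be chosen, we put $\delta_i=\lfloor\kappa\delta_i'\rfloor$ and $S_i=\{s\in B_R:\widetilde\ell_i(s)\ge\delta_i\}$, using buffer radius $R\gg\kappa$. Proposition~\ref{prop:eroder} then gives a scale-independent eroder $\Phi$.

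The third step, which is the main obstacle, is the dynamical block estimate $r_{H,i}\to0$ with $T_H=\tau H$. Static directions are handled by Lemma~\ref{lem:static_direction} once $R$ is large relative to $v_\ast\tau$. For an advancing direction, we compare with the infinite-volume process $\hat\xi$ started clean on $\{\ell_i\ge h\}$, as in the CG proofs. The clean front moves by $\delta_i'$ whenever every site in a layer has been updated after its zero set became clean. We would run a first-passage argument along the layer: the front position at depth $k$ is cleared by a time bounded by a sum of at most $O(k)$ independent $\mathrm{Exp}(\lambda_{\min})$ waiting times along backward dependency chains. This is a last-passage percolation bound over chains of length $\le Ck$ with branching $|Z_i|$. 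The same union bound over chains used in Lemma~\ref{lem:speed} gives
\[
\PP(\hat\xi_t(x)={\tt 1})\le e^{\theta k-ct}
\]
at depth $k$. Choosing $\tau$ larger than $(2\theta\,\mathrm{diam}+1)/c$, with $\kappa$ matched to the resulting advance $\sim \tau H$ converted to coarse units, and then $R$ large by finite speed, a union bound over the $H^d$ core sites gives $r_{H,i}\le CH^de^{-H}\to0$. The delicate point is controlling the branching combinatorics uniformly in $H$, so that the constants $\theta$ and $c$ do not depend on the scale.
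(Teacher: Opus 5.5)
Your first-passage estimate for the advancing direction is essentially the paper's argument. The branching is harmless, since a fixed zero set of size $p$ gives at most $p^q$ chains of length $q$. Two other steps, however, fail as written.

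\textbf{The comparison process.} A single clock of rate $\lambda_{\min}$ used for both births and deaths does not put the auxiliary process below the original in the clean-site order. Lowering births to $\lambda_{\min}$ goes the right way. Lowering deaths to $\lambda_{\min}\le\mu_x(A)$ goes the wrong way, so your auxiliary process can keep clean sites that the original kills. The same objection applies to thinning the original clocks in the attractive case: domination requires \emph{adding} death events, not removing them. The paper uses births at rate $b$ (the minimal positive birth rate) whenever $x\in T(A)$, and deaths at rate $D\ge\sup_A\mu_0(A)$ whenever $x\notin T(A)$. Monotonicity of $T$ then gives $\hat\beta_x(A)\le\beta_x(B)$ and $\hat\mu_x(A)\ge\mu_x(B)$ for $A\subseteq B$. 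The attractive clause of the statement only means that the same block estimates can be run on the original process itself.

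\textbf{The coarse geometry.} This gap is more serious. Isotropic cubes with $\widetilde\ell_i=\ell_i$ break the block test for static directions ($\delta_i=0$) that are not coordinate covectors, and enlarging $R$ does not help. Coarse cells with $\widetilde\ell_i(s)=-1$ next to the core contain sites with $\ell_i\ge\min_{Q_H(0)}\ell_i$. So the mismatch set in Lemma~\ref{lem:static_direction} is adjacent to the core, not at the buffer boundary.

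Concretely, shear NEC by $e_2\mapsto e_1+e_2$, giving continuous-time majority on $\{0,e_1,e_1+e_2\}$. This process is exclusive, attractive and a shrinker. Its balanced family is $-e_1^*+e_2^*$, $-e_2^*$ (static) and $e_1^*$ (advancing). For $\ell=-e_1^*+e_2^*$, your predecessor set excludes the cell $s=(1,0)$. Then in each row $y\in[0,H-1]$, all sites with $x_1\ge H$ start defective out to the frozen exterior. None of them can ever be cleared, because clearing requires the right neighbour in the same row to be clean first, and the rightmost one sees the frozen exterior. Each core site $(H-1,y)$ with $0\le y\le H-2$ has both non-self inputs $(H,y)$ and $(H,y+1)$ in that region. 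It therefore turns defective at its first update and never recovers, so $r_{H,i}\to1$.

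The missing idea is to align the tiling with the static directions. Take a minimal balanced relation, so that $m\le d+1$ and every proper subfamily is linearly independent. Extend $\ell_1,\dots,\ell_{m-1}$ to a nonsingular integer matrix $M$, and tile by the parallelepipeds $HV(z+[0,1)^d)\cap\Z^d$, with $V$ a positive integer multiple of $M^{-1}$ having integer entries. Then $\widetilde\ell_j(z)=z_j$ exactly for $j<m$, so $H_{\ell_j}(0)\cap\cB_H(0)\subseteq Q_H(S_j)$ and the core lies inside the clean half-space. Only the advancing form $\widetilde\ell_m=-\sum_{j<m}z_j$ is misaligned with the cells, and its offset $\delta_m=1$ absorbs this misalignment.
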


\begin{proof}
Finite range and translation invariance imply that $b:=\min\{\beta_0(A)>0:0\notin A\}>0$, $\sup_A\mu_0(A)=\text{finite}$. Let us choose $D>0$ with $D\ge\sup_A\mu_0(A)$, and define the following rates
\begin{equation}
\begin{split}
    \hat\beta_x(A) &=b\,\mathbf1_{\{x\in T(A)\}}, \quad x\notin A,
    \\
    \hat\mu_x(A) &=D\,\mathbf1_{\{x\notin T(A)\}}, \quad x\in A.
\end{split}
    \label{eq:gray-comparison-rates}
\end{equation}
They define an attractive process $\widehat A_t$, driven by independent birth and death clocks of rates $b$ and $D$.
For $A\subseteq B$, monotonicity of $T$ gives $\widehat\beta_x(A)\le\beta_x(B)$ for $x\notin B$, and $\widehat\mu_x(A)\ge\mu_x(B)$ for $x \in A$.

The order-preserving coupling, therefore, gives $\widehat A_t\subseteq A_t$. When the two processes share the same hostile marks, the inclusion $\widehat A_t\subseteq A_t$ is preserved, implying that it suffices to establish the block criterion for $\widehat A_t$.

\medskip
\noindent\emph{Stable and advancing directions.---}
For a nonzero rational linear form $\ell$, we write 
\begin{equation}
    H_\ell(h)=\{x\in\mathbb Z^d:\ell(x)\ge h\}.
\end{equation}
Gray's half-space formula~\cite{Gray1999} implies that there is a number $a(\ell)$ such that
\begin{equation}
    T\bigl(H_\ell(h)\bigr)=H_\ell\bigl(h-a(\ell)\bigr).
    \label{eq:gray-halfspace}
\end{equation}
Thus, $a(\ell)\ge 0$ means that the half-space does not retreat, while $a(\ell)>0$ indicates that it advances.

Since $T$ is a shrinker, there are finitely many rational directions $\ell_1,\ldots,\ell_m$ with $a_i\equiv a(\ell_i)\ge0$ such that the corresponding shifted half-spaces cover all of $\Z^d$, or equivalently
\begin{equation}
    \bigcap_{i=1}^m \{x\in\mathbb R^d:\ell_i(x)\le-a_i\} =\varnothing.
    \label{eq:gray-shrinker}
\end{equation}
Geometrically, this means that the normals $\ell_i$ surround the origin. Consequently, there are positive coefficients $c_i$ such that the following holds:
\begin{equation}
    \sum_{i=1}^m c_i\ell_i=0, \quad \sum_{i=1}^m c_i a_i>0.
    \label{eq:gray-balance-pre}
\end{equation}
Now, we choose such a relation with the smallest possible number of terms. Then we have $m\le d+1$, and every proper subset of $\{\ell_i\}_{i = 1}^{m}$ is linearly independent. Since the directions are rational, the $c_i$ may also be chosen to be rational. Rescaling $\ell_i\mapsto c_i\ell_i$ and clearing denominators, we may therefore take the linear forms $\ell_i$ to be integer-valued such that they satisfy the following:
\begin{equation}
    \sum_{i=1}^m\ell_i=0, \quad a(\ell_i)\ge0, \quad a(\ell_m)>0.
    \label{eq:gray-balanced}
\end{equation}
Here, we have relabeled the directions so that the strictly advancing one is $\ell_m$.

Finally, we note that each $H_{\ell_i}(h)$ is preserved by the noiseless continuous-time dynamics. Indeed, if $A\supseteq H_{\ell_i}(h)$, then monotonicity and \eqref{eq:gray-halfspace} imply
\begin{equation}
    T(A)\supseteq T\bigl(H_{\ell_i}(h)\bigr)
    \supseteq H_{\ell_i}(h),
\end{equation}
so no clean site in the half-space is allowed to die.

\medskip
\noindent\emph{Advance of a clean half-space.---}
Let us set $a_m=a(\ell_m)>0$. By \eqnref{eq:gray-halfspace}, the origin is made clean when the clean set is $H_{\ell_m}(a_m)$. Since the dynamics have a finite range, there is therefore a finite set $F$ in the update neighborhood such that
\begin{equation}
    \ell_m(u)\ge1
    \qquad (u\in F),
    \label{eq:gray-forward-parents}
\end{equation}
and such that, whenever all sites $x+u$, $u\in F$, are clean, a birth update at $x$ makes $x$ clean. By exclusiveness, the same condition also prevents a subsequent noiseless death at $x$. For any site $x$, we will refer to the sites $x+u$ ($u\in F$) as \emph{predecessor sites}. 

We now start the noiseless comparison process from the clean half-space $H_{\ell_m}(h)$. First, we define a smaller growing clean set $G_t$ as follows.
All sites of $H_{\ell_m}(h)$ belong to $G_0$. A site $x$ outside this half-space is added to $G_t$ once all sites $x+u$ ($u\in F$) have already been added, and the next birth clock at $x$ rings. Note that once added, it is never removed. The preceding observation gives
\begin{equation}
    G_t\subseteq \widehat A_t
    \quad\text{for all }t\ge0.
\end{equation}

Let $\tau_x$ be the time at which $x$ enters $G_t$. Then, we can write the following recursion relation for $\tau_x$:
\begin{equation}
    \tau_x=
    \begin{cases}
        0, & \ell_m(x)\ge h,\\[1mm]
        W_x+\displaystyle\max_{u\in F}\tau_{x+u},
            & \ell_m(x)<h,
    \end{cases}
    \quad
    W_x\overset{\mathrm{iid}}{\sim}\operatorname{Exp}(b),
    \label{eq:gray-arrival}
\end{equation}
where ${\rm Exp}(b)$ denotes an exponential distribution with a rate of $b$. Indeed, once all sites in $x+F$ are clean, $x$ only has to wait for its next birth clock, whose waiting time is an independent exponentially distributed variable of rate $b$.

Next, let us set $p=|F|$. If $q=h-\ell_m(x)>0,$ then every step from $x$ to one of its predecessor sites $x+u$ increases
$\ell_m$ by at least one. Hence, any chain of predecessor sites reaches the initial half-space in at most $q$ steps, and there are at most $p^q$ such
chains. Along any one chain, the total waiting time is a sum of at most $q$ independent rate-$b$ exponentials. Since $\mathbb E e^{bW_x/2}=2$, the exponential Markov inequality gives a probability of at most $2^q e^{-bt/2}$ for any fixed chain to take longer than $t$. A union bound over the (at most) $p^q$ chains, therefore, yields the following bound:
\begin{equation}
    \mathbb P(\tau_x>t)
       \le (2p)^q e^{-bt/2}.
    \label{eq:gray-passage-tail}
\end{equation}

\medskip
\noindent\emph{Block construction.---}
We now translate the half-space estimates derived above to coarse-grained dynamics. Since every proper subset of $\{\ell_i\}_{i=1}^m$
is linearly independent, we may extend
$\ell_1,\ldots,\ell_{m-1}$ to the rows of a nonsingular integer matrix $M$.
Let us choose a positive integer $D$ such that $V=DM^{-1}$ has integer entries.
For integer $H$, we define the coarse cells as:
\begin{equation}
    Q_H(z)=
    \bigl(HV(z+[0,1)^d)\bigr)\cap\Z^d,
    \qquad z\in\Z^d.
\end{equation}
These parallelepiped blocks partition $\Z^d$
into congruent lattice translates.
By construction,
\begin{equation}
    \ell_j(Vz)=Dz_j\quad(j<m),
    \qquad
    \ell_m(Vz)=-D\sum_{j<m}z_j.
\end{equation}
We may therefore define the coarse linear forms as
\begin{equation}
    \widetilde\ell_j(z)=z_j\quad(j<m),
    \qquad
    \widetilde\ell_m(z)=-\sum_{j<m}z_j.
    \label{eq:gray_coarse_forms}
\end{equation}
Thus, the sum of $\widetilde{\ell}_i$ vanishes: $\sum_i \widetilde{\ell}_i=0$. Let us further define $\cB_H(z)=\bigcup_{\|s\|_\infty\le R}Q_H(z+s)$ to be a fixed-width buffer around the target cell, where $R$ will be chosen large enough below.

For $i<m$, let $S_i$ be the collection of cells in the buffer lying on the ${\widetilde{\ell}}_i\ge 0$ side of the corresponding coarse half-space. For the advancing direction, let $S_m$ consist of those lying on the $\widetilde{\ell}_m \ge 1$ side. Writing $x=HV(s+u)$ with $u\in[0,1)^d$, we have
$\ell_j(x)=DH(s_j+u_j)$ for $j<m$ and
$\ell_m(x)=-DH\sum_{j<m}(s_j+u_j)$.
Consequently, the predecessor regions contain the
corresponding microscopic clean half-spaces within
the buffer:
\begin{equation}
\begin{aligned}
H_{\ell_i}(0)\cap\cB_H(0)
&\subseteq Q_H(S_i), && i<m,\\
H_{\ell_m}(DH)\cap\cB_H(0)
&\subseteq Q_H(S_m).
\end{aligned}
\end{equation}

Let us define
\begin{equation}
    \Phi(X)(z)
    =
    \bigwedge_{i=1}^{m}
    \bigvee_{s\in S_i}X(z+s).
    \label{eq:gray-coarse-map}
\end{equation}
The first $m-1$ directions are therefore static, while the last one advances
by one coarse cell. Since we have the following:
\begin{equation}
    \sum_{i=1}^m {\widetilde{\ell}}_i=0,
    \qquad
    \sum_{i=1}^m\delta_i=1,
    \qquad
    \delta_i=
    \begin{cases}
        0,&i<m,\\
        1,&i=m,
    \end{cases}
\end{equation}
using Prop.~\ref{prop:eroder} one can show that $\Phi$ is a monotone eroder.

We next verify the corresponding block estimates.
For a static direction $i<m$, the infinite-volume
process initialized with the clean half-space
$H_{\ell_i}(0)$ remains clean throughout that
half-space, and hence on the core $Q_H(0)$.

For the advancing direction, consider instead the
infinite-volume process initialized with the clean
half-space $H_{\ell_m}(DH)$.
Every site of the core lies at a depth of at most $DmH$
behind this half-space.
By \eqnref{eq:gray-passage-tail}, we may choose a sufficiently large constant $\tau>0$ such that, with $T_H=\tau H$, the probability that any given
core site has not become clean by time $T_H$ is at most $e^{-cH}$.

To obtain the finite-volume block estimate, let us run the same dynamics only inside $\cB_H(0)$, with the exterior held defective. The finite and infinite-half-space processes ($\zeta$ and $\hat \xi$, respectively) are related on the core $Q_H(0)$ as $\zeta_t \leq \hat{\xi}_t$ unless causal influence travels from the boundary of the buffer to $Q_H(0)$ before time $T_H$.
The shortest distance between the exterior region and the core is of the order of $RH$, while $T_H$ is $\tau H$. Hence, by Lemma~\ref{lem:speed}, choosing the fixed buffer width (in the coarse-grained lattice) $R$ sufficiently large gives
\begin{equation}
    \mathbb P\!\left(
        \text{outside influence reaches }Q_H(0)
        \text{ before }T_H
    \right)
    \le C H^d e^{-cH}.
\end{equation}
Combining the above estimates, the block failure probability $r_{H,i}$ for each stable direction $i$ in the balanced family is upper-bounded as follows:
\begin{equation}
    r_{H,i}\le C H^d e^{-cH}\quad \forall i\in \{1,\cdots,m\}.
    \label{eq:gray-block-estimate}
\end{equation}

Finally, we add hostile noise of rate $\varepsilon$. An ``error'' in the coarse-grained process can occur either because the noiseless block process fails or because a hostile mark occurs somewhere in its space-time buffer. Since
$|\cB_H|=\mathcal{O}(H^d)$ and $T_H=\mathcal{O}(H)$,
\begin{equation}
    p_H(\varepsilon)
    \le
    C_1H^d e^{-cH}
    +
    C_2\varepsilon H^{d+1}.
    \label{eq:gray-block-error}
\end{equation}
for finite positive constants $C_1$, $C_2$, and $c$. The buffer contains only a fixed number of neighboring coarse cells, so
the block errors have a dependence range that is uniformly bounded across block scales. Moreover, by construction,
\begin{equation}
    X_{n+1}(z)
    \le
    \Phi(X_n)(z)\vee \mathbf{1}_{{\sf err},n}(z).
\end{equation}
Thus, all the requirements of our stability criterion are satisfied. Finally, choosing
$H$ large and then $\varepsilon$ small gives
\begin{equation}
    \lim_{\varepsilon\to 0^{+}}
    \sup_{t\ge0}
    \mathbb P^{\bar{\tt 0}}_\varepsilon
    \bigl(\eta_t(0)={\tt 1}\bigr)=0,
\end{equation}
which proves stability.
\end{proof}

\section{Examples}
\label{sec:app_C}
Here, we apply our stability criterion to several processes as illustrations. Additionally, we show that our stability criterion does not apply to the {\sf T} automaton and the 3D {\sf R} automaton, which are observed to be unstable under asynchronous updates from numerics~\cite{LakeRo2025}.

\subsection{Continuous-time {\sf R} process}
\label{sec:sm-R-proof}
In Boolean variables, the {\sf R} automaton is characterized by:
\[
\cA = \{a_{\lor}, a_{\land}\}, \ \mathcal{N}_{\lor} = \{0, \pm e_1\},\ \mathcal{N}_{\land} = \{0, \pm e_2\}, \ f_{\lor}(\xi) = \bigvee_{\delta \in \mathcal{N}_{\lor}}\xi(\delta), \ f_{\land}(\xi) = \bigwedge_{\delta \in \mathcal{N}_{\land}}\xi(\delta).
\]
Let the rates of the $a_\lor$ and $a_\land$ updates be $\mu$ and $1$,
respectively. The situation discussed in Ref.~\cite{LakeRo2025}
corresponds to $\mu = 1$, i.e., equal rates of $a_{\lor}$ and $a_{\land}$
updates. Below, we show that the continuous-time {\sf R} process is
stable not only at the equal-rate point $\mu = 1$, but also on a finite
interval around it.

We begin by identifying the zero sets of the process. An $a_{\lor}$
update can output {\tt 0} only if all sites in $\cN_{\lor}$ are {\tt 0},
implying that $\cZ^{\rm min}_{\lor} = \{\cN_\lor\}$. On the other hand,
an $a_\land$ update outputs {\tt 0} if at least one site of
$\cN_{\land}$ is {\tt 0}, implying that
$\cZ^{\rm min}_{\land} = \{\{-e_2\}, \{0\}, \{e_2\}\}$.

The minimal zero sets allow us to identify the stable directions of the process. Let $\ell$ be a stable direction. Then, for $\cN^{+}_{\lor}(\ell) = \{u \in \cN_\lor: \ell(u) \geq 0\}$ to be a zero set, the conditions $\ell(e_1) \geq 0$ and $\ell(-e_1) \geq 0$ must be simultaneously satisfied. This gives a constraint: $\ell(e_1) = 0$. On the other hand, $\cN^{+}_{\land}(\ell)$ is trivially a zero set for any $\ell$ because $\{0\}$ is a minimal zero set and $\ell(0) = 0$ ($\geq 0$). There are only two directions that satisfy the above constraint: $\ell_{1,2} = \pm e^{\ast}_2$ (up to positive scalars). The fact that these stable directions $\ell_{1,2}$ form a balanced family is also immediately obvious since $\alpha_1 = \alpha_2 = 1$ gives $\alpha_1 \ell_1 +\alpha_2 \ell_2 = 0$. Under noiseless {\sf R}, both of these stable directions are advancing. This can be easily checked by observing the interface dynamics of a horizontal domain wall between a completely clean and a completely defective half-space.

Having verified all the geometric conditions, we now coarse grain the spacetime. 

\subsubsection{Coarse-graining and the eroder map}

For the row-by-row construction below, we use anisotropic boxes and a block time of order $H\log H$. Each stage lasts $\mathcal{O}(\log H)$ and sacrifices a lateral margin of the same order. Over $H$ stages, the accumulated margin is therefore $\mathcal{O}(H\log H)$, which can be accommodated within a fixed number of coarse cells by taking $L_H=H\lceil\log H\rceil$.

We take the spatial boxes $Q_H(z)$ to be:
\begin{equation}
\label{eq:R_box}
Q_H(z)=\bigl(z_1L_H,\;z_2H\bigr)
+\{0,\dots,L_H{-}1\}\times\{0,\dots,H{-}1\},
\quad L_H:=H\lceil\log H\rceil.
\end{equation}
We fix, once and for all, an integer $K_0 \equiv K_0(\mu)$ (pinned down below) that depends only on the rate $\mu$. The predecessor regions are chosen to be:
\begin{equation}
S_1=\{\,ue_1+e_2:\ |u|\le K_0\,\},\qquad
S_2=\{\,ue_1-e_2:\ |u|\le K_0\,\},
\end{equation}
and the coarse displacements $\delta_1$ and $\delta_2$ are both taken to be $1$. This choice is consistent with the condition~\eqref{eq:support} because $\ell_i(s) \geq \delta_i$ for every $s \in S_i$. The set $S_1$ ($S_2$) is simply a row of $2K_0+1$ coarse cells directly above (below) the core $z$.

With these assignments in hand, we can write the coarse map (as defined in \eqnref{eq:Phi}) for the {\sf R} process $\Phi_{\sf R}$, as follows:
\begin{equation}
\Phi_{\sf R}[X](z)=
\Bigl(\bigvee_{|u|\le K_0}X(z{+}ue_1{+}e_2)\Bigr)\wedge
\Bigl(\bigvee_{|u|\le K_0}X(z{+}ue_1{-}e_2)\Bigr).
\end{equation}

Since the directions $\pm e^{\ast}_2$ form a balanced family and $\Delta=\sum_i\alpha_i\delta_i=2>0$, Proposition~\ref{prop:eroder} implies that the coarse map $\Phi_{\sf R}$ is a geometric eroder.

\begin{figure}
    \centering
    \includegraphics[width=0.7\linewidth]{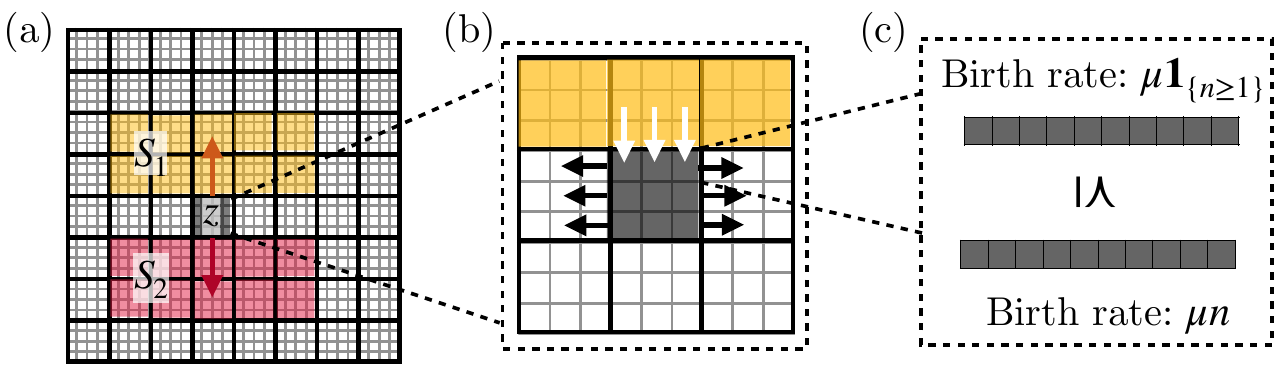}
    \caption{\textbf{Stages of coarse-graining the {\sf R} process.}
    (a) schematically illustrates the coarse neighborhoods $S_{1,2}$
    corresponding to the stable directions $\pm e^{\ast}_2$.
    Since both stable directions are advancing, we take
    $\delta_{1,2} > 0$, and the core cell starts off as defective.
    The dynamics must clear the defects in $Q_H(z)$ within block time.
    (b) and (c) pictorially depict that the dynamics of an exposed row
    of defects is stochastically dominated by a subcritical 1D contact
    process (provided $\mu < \lambda^{\rm 1D}_c$).}
    \label{fig:supp-R_summary}
\end{figure}

\subsubsection{Block estimates}
\label{sec:sm-R-block}
Here we show that the block failure probability for the {\sf R} process vanishes as $H \to \infty$ for the advancing direction $e^{\ast}_2$. 
The same estimate holds for $-e^{\ast}_2$ due to the reflection symmetry of the update rules about the horizontal axis. 
We assume that the normalized rate of OR updates is $0<\mu<\lambda^{\rm 1D}_c$, where $\lambda^{\rm 1D}_c$ is the critical per-neighbor birth rate of the standard contact process with a death rate of $1$.
A completely clean infinite row remains clean under noiseless {\sf R} dynamics: horizontal OR updates cannot create a defect in that row, and vertical AND updates cannot turn a zero into a one. Consequently, when an infinite row is initially clean (or is held fixed at zero), the row immediately below evolves autonomously as the threshold-$1$ contact process, with a death rate of $1$ and a birth rate of $\mu$ whenever at least one horizontal neighbor is defective.

Let $\Xi^{\mathrm{thr},\mathbf 1}_{\mu,t}$ denote the aforementioned threshold-$1$ contact process started from the all-one configuration. Its birth rates are bounded by those of the standard contact process ${\rm CP}_\mu$, whose birth rate is $\mu$ times the number of defective neighbors. By stochastic domination, self-duality of the standard contact process, and its subcritical exponential decay, there are $C,\gamma\in(0,\infty)$, depending only on $\mu$, such that
\begin{equation}
\label{eq:row_extinction}
 \mathbb P\bigl(\Xi^{\mathrm{thr},\mathbf 1}_{\mu,t}(x)={\tt 1}\bigr)
 \leq \mathbb P\bigl(x\in {\rm CP}^{\mathbb Z}_{\mu,t}\bigr)
 =\mathbb P\bigl({\rm CP}^{\{x\}}_{\mu,t}\neq\varnothing\bigr)
 \leq C e^{-\gamma t}.
\end{equation}
Here, the survival probability corresponds to a process started from a \emph{single} defective site, not from the all-one configuration; see, for example, Ref.~\cite{Liggett1999}. For every finite interval $D\subset\mathbb Z$, a union bound gives
\begin{equation}
\label{eq:R_finite_window_decay}
 \mathbb P\bigl(\exists x\in D:
       \Xi^{\mathrm{thr},\mathbf 1}_{\mu,t}(x)=1\bigr)
 \leq C|D|e^{-\gamma t}.
\end{equation}
In the following, we will apply this estimate to an auxiliary contact process while controlling the possibility that the neighboring clean row becomes defective. 

For an integer interval $D=[a,b]\cap\mathbb Z$ and an integer $n\geq1$, write $D\ominus n=[a+n,b-n]\cap\mathbb Z$. Consider the OR-only process $U$, which is (i) initially {\tt 0} on $D$ and {\tt 1} on $D^c$, (ii) uses rate-$\mu$ horizontal OR marks, and (iii) ignores all AND marks. This process is nondecreasing in time. On a given side, the penetration of defects up to a depth $n$ requires ordered OR marks at the first $n$ successive sites of $D$.
The waiting times for the corresponding pure-birth front are independent $\mathrm{Exp}(\mu)$ variables. Thus, for either fixed side,
\begin{equation}
 \mathbb P(\text{penetration reaches depth }n\text{ within time }u)
 \leq \mathbb P\bigl(\mathrm{Poisson}(\mu u)\geq n\bigr)
 \leq \frac{(\mu u)^n}{n!}
 \leq \left(\frac{e\mu u}{n}\right)^n.
\end{equation}
Taking a union bound over the two sides (left and right) gives
\begin{equation}
\label{eq:R_transverse_bound}
 \mathbb P\bigl(\exists s\in[0,u],\ x\in D\ominus n: U_s(x)=1\bigr) \leq 2\left(\frac{e\mu u}{n}\right)^n.
\end{equation}
Whenever an actual row is clean on the interval $D$ at the initial time, it is dominated by $U$ under the same OR update marks, regardless of the other rows, because an AND update never creates a defect. For any deterministic restart time $s$, the independence of future Poisson marks ensures that the penetration bound~\eqref{eq:R_transverse_bound} remains valid conditionally on the history up to $s$.

By translational invariance, we may take $z=0$. The set of microscopic sites in the core block is  $Q_H(0)=\{0,\ldots,L_H-1\}\times\{0,\ldots,H-1\}$.
Let $\zeta$ be the maximal-boundary noiseless comparison process in $B_H(0)=\bigcup_{\|u\|_\infty\leq K_0+1}Q_H(u)$ with its exterior frozen at {\tt 1}. Initially, it is {\tt 0} on $\bigcup_{s\in S_1}Q_H(s)$ and {\tt 1} elsewhere in the buffer. Write $\zeta_t^{(r)}(x)=\zeta_t(x,H-r)$ for $r=0,\ldots,H$ the row index; row $0$ is immediately above the core, while rows $1,\ldots,H$ are the core rows.
As discussed in the main text, we define cascaded intervals $\{I_k\}_{k = 0}^{H}$ where $I_0$ is wider than the horizontal extent of the core cell, while $I_H$ coincides with it. We divide the block time into $H$ shorter durations of length $\alpha$, which we take to be $\left\lceil\frac{3} {\gamma}\log(HL_H)\right\rceil$ with $\gamma$ the decay rate that appears in \eqnref{eq:R_finite_window_decay}. Set $t_k=k\alpha$ for $k=0,\ldots,H$, so that stage $k$ runs from $t_k$ to $t_{k+1}$ for $k=0,\ldots,H-1$.
We define the intervals $I_k$ concretely as follows:
\begin{equation}
\label{eq:R_shrinking_intervals}
 I_k=[-(H-k)m,\ L_H-1+(H-k)m]\cap\mathbb Z,
 \qquad k=0,\ldots,H,
\end{equation}
which satisfies $I_{k+1}=I_k\ominus m,$ with $I_H=\{0,\ldots,L_H-1\},$ and $|I_0|=L_H+2Hm$. Here, $m$ is the lateral margin that is sacrificed at each stage (of duration $\alpha$) of the block process. Concretely, we take $m$ to be $2\left\lceil{C_2\alpha}/{2}\right\rceil$, where $C_2$ is a constant depending on $\mu$ and $\gamma$ that we fix subsequently. 

The block time, by construction, is $T_H = H\alpha$, which can be expressed as $\tilde{\tau}(H) H \log H$ where the function $\tilde{\tau}$ remains uniformly bounded as $H \to \infty$. More precisely, it is straightforward to see that $\lim_{H \rightarrow \infty} \tilde{\tau}(H) = 6/\gamma$, and thus $T_H = \mathcal{O}(H \log H)$.  

The horizontal projection of the initially clean coarse row is $[-K_0L_H,(K_0+1)L_H-1]\cap\mathbb Z$, so $I_0\times\{H\}$ lies in the initial clean set whenever $Hm\leq K_0L_H$. 
Since $Hm/L_H = m/\lceil\log H\rceil$ approaches $6C_2/\gamma$ as $H \to \infty$,  we may fix $K_0$ to be any integer larger than $6C_2/\gamma+1$, so that $K_0 L_H > Hm$ holds for every $H>H_0(\mu)$, where 
$H_0(\mu)$ is a sufficiently large lower cutoff for the block scale. Considering block scales larger than $H_0(\mu)$ ensures that $I_0\times \{H \}$ lies entirely within the initially clean predecessor region. Consequently, we have 
\begin{equation}
\label{eq:R_width_bound}
 |I_0|\leq(2K_0+1)L_H.
\end{equation}

Let $\mathcal F_t$ denote the history of the process up to time $t$, including the initial configuration and all update marks up to time $t$.
Let us define
\begin{equation}
\label{eq:R_induction_events}
 \begin{split}
 {\sf B}_k&=\{\zeta_{t_k}^{(r)}(x)={\tt 0}
       \text{ for all }0\leq r\leq k\text{ and }x\in I_k\},\\
 {\sf A}_k&=\bigcap_{j=0}^k {\sf B}_j.
 \end{split}
\end{equation}
Here ${\sf B}_k$ requires the first $k+1$ rows to be clean on the interval $I_k$ at time $t_k$, while ${\sf A}_k$ records the success of all stages up to $k$. The events ${\sf A}_k$ are nested and measurable with respect to $\mathcal F_{t_k}$. By construction, ${\sf A}_0$ holds surely, and ${\sf A}_H$ ensures that the entire core is clean at time $T_H$.

Let us fix $k\in\{0,\ldots,H-1\}$ and define $J_k=I_k\ominus q$ and $I_{k+1}=J_k\ominus q$, where $I\ominus q$ denotes the interval obtained from $I$ by removing $q$ sites from each end, and $q \equiv m/2$ (taking the even rounding of $m$ without any loss of generality). In particular, taking $C_2=2e^2\mu+2\gamma+2$ ensures that 
\begin{equation}
\label{eq:R_margin_bounds}
 q\geq(e^2\mu+\gamma+1)\alpha,
 \quad
 2\left(\frac{e\mu\alpha}{q}\right)^q\leq2e^{-q},
 \quad
 e^{-q}\leq e^{-\gamma\alpha}\leq(HL_H)^{-3}.
\end{equation} 
Together with \eqnref{eq:R_transverse_bound}, these inequalities [~\eqnref{eq:R_margin_bounds}] show that defects penetrate a distance $q$ into a clean interval from either end during time $\alpha$ with a probability of at most $2(HL_H)^{-3}$. Thus, our choice of $C_2$ makes lateral penetration sufficiently unlikely that summing over all rows and stages still gives a bound that vanishes as $H\to\infty$.

For each previous row $r=0,\ldots,k$, start an OR-only process $U^{r,k}$ at time $t_k$ that is initially {\tt 0} on $I_k$ and {\tt 1} outside, which uses the OR marks on that row. Let ${\sf G}_k$ be the event that all these processes are zero on $J_k$ at $t_{k+1}$. 
Since they are nondecreasing, on the event ${\sf A}_k\cap{\sf G}_k$ every old row remains clean on $J_k$ throughout the interval $[t_k,t_{k+1}]$.
Eqs.~\eqref{eq:R_transverse_bound} and~\eqref{eq:R_margin_bounds} imply the following:
\begin{equation}
\label{eq:R_old_rows}
 \mathbb P({\sf G}_k^c\mid\mathcal F_{t_k})
 \leq 2(k+1)e^{-q}.
\end{equation}

At time $t_k$, introduce two auxiliary processes on row $k+1$. The first, $\xi^k$, is a threshold-1 contact process on the entire row, started from the all-one configuration. It uses the same horizontal OR marks as the actual process, but treats every AND mark as an unconditional reset to ${\tt 0}$, regardless of the neighboring rows. Its clocks are extended independently beyond the finite buffer. The second, ${\rm Pen}^k$, starts from zero on $J_k$ and one on $J_k^c$, uses the same OR marks, and ignores all AND marks. While $\xi^k$ controls defect survival within the exposed row, ${\rm Pen}^k$ bounds defect penetration from its lateral boundaries.

On the event ${\sf A}_k\cap{\sf G}_k$, row $k$ remains clean on $J_k$ throughout $[t_k,t_{k+1}]$. We claim that the following pathwise comparison holds for every $t$ in this interval:
\begin{equation}
\label{eq:R_aux_process}
\zeta_t^{(k+1)}\leq\xi_t^k\lor{\rm Pen}_t^k.
\end{equation}
The inequality holds initially because $\xi_{t_k}^k\equiv{\tt 1}$, and it holds outside $J_k$ at all times because ${\rm Pen}_t^k\equiv{\tt 1}$ there. At each horizontal OR update mark inside $J_k$, the inequality is preserved because all three processes apply the same OR update rule.
At an AND update, the clean neighboring row resets the actual state to ${\tt 0}$, while $\xi^k$ is also reset to ${\tt 0}$ and ${\rm Pen}^k$ remains unchanged.
Thus, each update preserves the comparison. Since only finitely many marks occur in $J_k$ during the time interval, induction over these marks establishes \eqnref{eq:R_aux_process}.

Define the events that either auxiliary process leaves a defect in the target interval $I_{k+1}$ at time $t_{k+1}$:
\begin{equation}
\begin{split}
{\sf C}_k &=\{\exists x\in I_{k+1}: \xi_{t_{k+1}}^k(x)={\tt 1}\},\\
{\sf P}_k &=\{\exists x\in I_{k+1}: {\rm Pen}_{t_{k+1}}^k(x)= {\tt 1}\}.
\end{split}
\end{equation}
Since both auxiliary processes start from deterministic initial states at $t_k$, and future Poisson marks are independent of $\mathcal F_{t_k}$, the contact-process decay and penetration bounds give
\begin{equation}
\label{eq:R_new_row_errors}
\mathbb P({\sf C}_k\mid\mathcal F_{t_k})
\leq C|I_{k+1}|e^{-\gamma\alpha},
\quad
\mathbb P({\sf P}_k\mid\mathcal F_{t_k})
\leq 2e^{-q}.
\end{equation}
Importantly, these bounds do not require conditioning on ${\sf G}_k$. 
The subsequent argument uses only a union bound, so no independence between ${\sf C}_k$, ${\sf P}_k$, and ${\sf G}_k$ is needed.

On the event ${\sf A}_k\cap{\sf G}_k\cap{\sf C}_k^c\cap{\sf P}_k^c$, all previously cleared rows remain clean on $I_{k+1}\subset J_k$, while the comparison~\eqref{eq:R_aux_process} guarantees that row $k+1$ is also clean there at time $t_{k+1}$.
Thus, the failure of stage $k+1$, given the success through stage $k$, requires at least one of the events ${\sf G}_k^c$, ${\sf C}_k$, or ${\sf P}_k$ to occur:
\begin{equation}
 {\sf A}_k\cap{\sf A}_{k+1}^c \subseteq {\sf A}_k\cap ({\sf G}_k^c\cup{\sf C}_k\cup{\sf P}_k).
\end{equation}
Applying the preceding probability bounds and a union bound yields:
\begin{equation}
\label{eq:R_one_step}
 \mathbb P({\sf A}_k\cap{\sf A}_{k+1}^c\mid\mathcal F_{t_k})
 \leq\mathbf1_{{\sf A}_k}
 \left[C|I_{k+1}|e^{-\gamma\alpha}+2(k+2)e^{-q}\right].
\end{equation}

Since the events ${\sf A}_k$ are nested and ${\sf A}_0$ holds surely, failure to clear the core must occur at some stage of the induction. 
Summing the one-step bounds gives
\begin{align}
 r_{H,1}
 &=\mathbb P\bigl(\exists x\in Q_H(0):
 \zeta_{T_H}(x)={\tt 1}\bigr)
 \leq\mathbb P({\sf A}_H^c)\nonumber\\
 &=\sum_{k=0}^{H-1}
       \mathbb P({\sf A}_k\cap{\sf A}_{k+1}^c)\nonumber\\
 &\leq CH|I_0|e^{-\gamma\alpha}
       +H(H+3)e^{-q}\nonumber\\
 &\leq C'_{\mu}HL_H
             \bigl(e^{-\gamma\alpha}+e^{-m/2}\bigr)
 =\mathcal O\bigl((HL_H)^{-2}\bigr).
 \label{eq:R_r_H1_final}
\end{align}
Here we used the fact that $L_H\geq H$, together with \eqnref{eq:R_width_bound} and \eqnref{eq:R_margin_bounds}. The term proportional to $e^{-m/2}$ accounts for lateral defect penetration and the possibility that previously cleared rows become defective again, which is controlled by \eqnref{eq:R_margin_bounds}. Reflection under $(x_1,x_2)\mapsto(x_1,H-1-x_2)$ preserves the core and the buffer while exchanging the two predecessor strips, giving the same bound for $r_{H,2}$.

Finally, although the auxiliary processes extend beyond the buffer, the failure event itself is defined entirely by the finite-buffer process $\zeta$ and depends only on Poisson marks in $\cB_H(0)\times[0,T_H]$. Since the buffer contains exactly $(2K_0+3)^2$ coarse cells, its extent in coarse-lattice units remains independent of $H$.

The buffer contains $b_H=(2K_0+3)^2HL_H$ microscopic sites, and the block time is $T_H=H\alpha$. Hostile noise at a rate of $\eps$ contributes at most $\varepsilon b_HT_H$ to the block-failure probability, where we have chosen $\eps \ll (b_H T_H)^{-1}$. Combining this contribution with the noiseless block estimates, we obtain the noisy block failure probability as follows:
\begin{equation}
p_H(\varepsilon)
\leq 2C_\mu(HL_H)^{-2} +\varepsilon(2K_0+3)^2HL_H\,H\alpha,
\end{equation}
where $C_\mu$ is independent of $H$.
For any fixed $0<\mu<\lambda_c^{\rm 1D}$, we first choose $H$ sufficiently large to suppress the noiseless block failures and then choose $\varepsilon$ sufficiently small at that fixed scale. Theorem~\ref{thm:maintext} therefore establishes the stability of the $\bar{\tt 0}$ configuration.

To establish the stability of the $\bar{\tt 1}$ configuration, we complement all the bits and exchange the coordinate axes. The transformed process has a rate $1$ for OR updates, and a rate $\mu$ for AND updates. Rescaling time by $t'=\mu t$ restores a unit rate for AND updates, while transforming the rate of OR updates and hostile noise to $1/\mu$ and $\eps/\mu$, respectively.  Since $\mu>0$ is fixed, the rescaled noise rate still vanishes as $\eps\to0^+$. The same block construction, with the coordinate axes exchanged, therefore establishes stability of the $\bar{\tt 1}$ configuration whenever $1/\mu<\lambda_c^{\rm 1D}$.

Combining the conditions for both configurations, we conclude that the 2D {\sf R} process is bistable for every fixed ratio of rates $\mu$ satisfying
\begin{equation}
(\lambda_c^{\rm 1D})^{-1}<\mu<\lambda_c^{\rm 1D}.
\end{equation}
Since $\lambda_c^{\rm 1D}>1$~\cite{Sudbury2001}, this bistability interval includes the equal-rate case $\mu=1$, studied in Ref.~\cite{LakeRo2025}.

\subsection{2D Cross-Gated Process}
\label{sec:sm-2dCG-proof}
Here, we present a rigorous proof of the bistability of the CG process. The process lives on $\Z^2$ with one binary variable on each site. The two update rules that make up the CG process are given by:
\begin{align}
    {\sf A}_x&: \eta(x) \mapsto \eta({x+e_1}) \lor \left(\eta({x+e_2}) \land \eta({x-e_2})\right)\quad (\text{rate }a)\\
     {\sf B}_x&: \eta(x)\mapsto \eta(x+e_2) \land \left(\eta({x+e_1}) \lor \eta({x-e_1})\right)\quad (\text{rate }b)
\end{align}
where $a$ and $b$ are finite positive rates. Since the update rules are made up of logical OR and AND operations, they are Boolean monotone by construction. It is also easy to check that both homogeneous configurations $\bar{\tt 0}$ and $\bar{\tt 1}$ are traps (absorbing states) of both update rules. The erosion mechanism of the continuous-time CG process can be seen in Fig.~\ref{fig:CG_snapshots}, which shows snapshots of the noisy dynamics starting from a configuration containing a small (compared to the system size) defect set, i.e., the set of {\tt 1}s, in the background of {\tt 0}s. Our goal here is to explain this noise-robustness using the stability mechanism formulated in this work. 

As a first step, we determine the stable directions of the dynamics. Let $\ell_{u,v} = u e^{\ast}_1 + v e^{\ast}_2$ be some nonzero linear form parametrized by $u,v \in \Z$. Define a half-plane corresponding to $\ell_{u,v}$ as $\cH_{u,v}(c) = \{x\in\Z^2: \ell_{u,v}(x) \geq c\}$. For $\ell_{u,v}$ to be a stable direction, we require that under noiseless CG dynamics, every $x \in \cH_{u,v}(c)$ be clean, provided that $\cH_{u,v}(c)$ was initially clean. The process ${\sf A}_x$ provides the following constraint: $\eta({x+e_1}) = {\tt 0}$ and $\eta(x+e_2) \land \eta(x-e_2) = {\tt 0}$, implying that $u \geq 0$.  Similarly, ${\sf B}_x$ provides the following constraint: either $\eta({x+e_2}) = {\tt 0}$ or $\eta(x+e_1) \lor \eta(x-e_1) = {\tt 0}$, implying that $u = 0$ or $v \geq 0$. Therefore, the complete set of stable directions is given by:
\begin{equation}
    \cS  = \{(u,v) \neq (0,0): u \geq 0, \,v \geq 0\} \cup \{(0,v): v < 0\}. 
\end{equation}
The balanced family of stable directions is given by $\pm e^{\ast}_2$. For the $\bar{\tt 1}$ configuration, the balanced family can be similarly worked out to be $\pm e^{\ast}_1$. 

Since all the rates are bounded and the interactions are finite-range, there exists a positive constant $v_{\ast}$ that depends only on the rates, the interaction range, and the dimension, such that it sets the speed of causal influence (see Lemma~\ref{lem:speed}). 

\begin{figure}
    \centering
    \includegraphics[width=\linewidth]{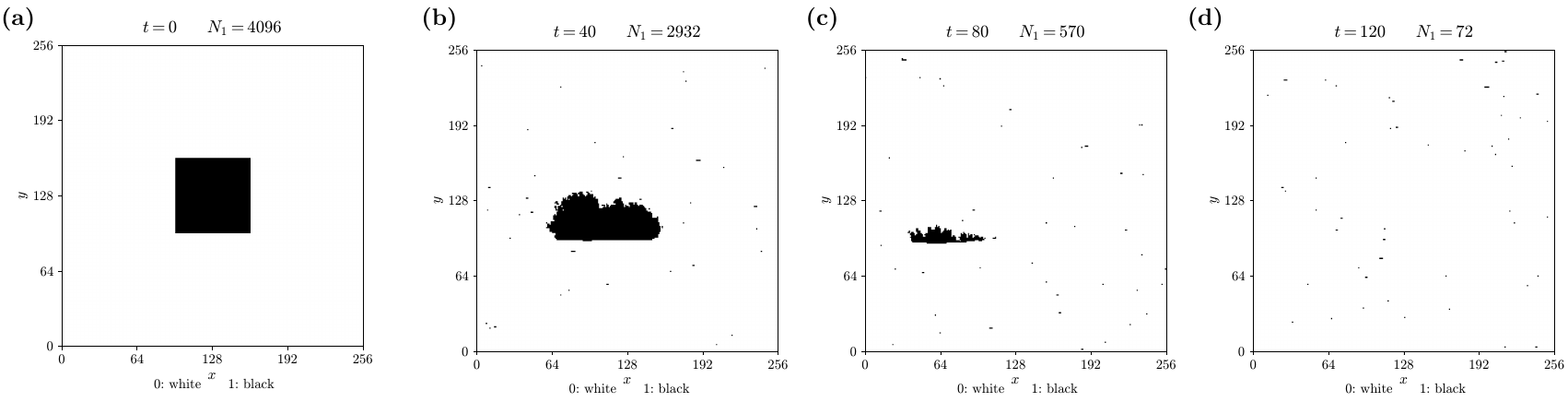}
    \caption{\textbf{Erosion of a finite defect droplet under noisy CG dynamics.} \textbf{(a--d)} Snapshots of the asynchronous dynamics of a finite system of linear size $L=256$ starting from a configuration containing a defect set of size $64 \times 64$ in a background of zeros. By $N_1$, we denote the number of defects ({\tt 1}s) remaining in the system. One-sided hostile noise acts independently on each site at rate $\eps=10^{-3}$.}
    \label{fig:CG_snapshots}
\end{figure}

\subsubsection{Block Estimates}
Next, to obtain the block estimates for the noiseless dynamics, we define (for each $i \in \{1,2\}$) a finite-volume adversarial process $\zeta^i_t$ that dominates the true microscopic dynamics $\eta_t$ when restricted to the buffer of any core cell $z$ and is evolved by the same update marks. We use the isotropic coarse-graining of Sec.~\ref{sec:main-2dCG}, with cells of side length $H$, block time $T_H=\tau H$, and predecessor sets $S_1,S_2$ defined in Eqs.~\ref{eq:main-2dCG-S1}--\ref{eq:main-2dCG-S2}.

By translational invariance, it suffices to fix $z = \boldsymbol{0}$ as the core cell and run block process $\zeta^i_t$ in $\cB_H(\boldsymbol{0})$. All binary variables outside the buffer are frozen to ${\tt 1}$ and the block process is initialized within the buffer as follows:
\begin{equation}
    \zeta^i_0(x) = \begin{cases}
        {\tt 0} &\text{if }x \in Q_H(z+S_i),\\
        {\tt 1} &\text{otherwise}.
    \end{cases}
\end{equation}

Since $\delta_2 = 0$, the initialization assumes that the core is fully clean at the beginning. The only way a defect might occur in the core at block time $T_H$ is through a long causal path to the core starting somewhere outside the buffer. The shortest possible length of such a path is at least $RH $. Choosing $R H > v_{\ast }T_H$, the probability of such long causal paths is super-polynomially suppressed in $H$, due to Lemmas~\ref{lem:speed} and~\ref{lem:static_direction}.

Since $i = 1$ is an advancing direction, bounding its block failure probability is slightly more cumbersome. As in the previous examples, we begin by defining the half-plane $\cH_1 = \{x \in \Z^2: x_2 > H-1\}$, and an auxiliary infinite-volume process $\hat{\xi}_t$ initialized as follows:
\begin{equation}
    \hat{\xi}_0(x) = \begin{cases}
        {\tt 0} &\text{if }x\in \cH_1,\\
        {\tt 1} &\text{otherwise}.
    \end{cases}
\end{equation}
By Lemma~\ref{lem:halfspace}, we know that under noiseless dynamics, $\cH_1$ will remain clean forever. We now show that under $\hat{\xi}_t$, all sites of the core cell $Q_H(\boldsymbol{0})$ are clean at block time $T_H$ with high probability. 

Let $p_1(t)$ denote the density of defects on the top row of the core cell, i.e., the row exposed to the clean sites of $\cH_1$. By translational invariance, we can write $p_1(t) = \EE(\hat{\xi}_t(x))$ for any $x$ in the first row. Using the classical master equation, we can write:
\begin{equation}
\begin{split}
    \frac{dp_1(t)}{dt} &= a \left(\EE({\sf A}_x(\hat{\xi}_t)) - \EE(\hat{\xi}_t(x)) \right) + b\left(\EE({\sf B}_x(\hat{\xi}_t)) - \EE(\hat{\xi}_t(x)) \right)\\
    &= a\left(\EE(\hat{\xi}_t(x+e_1)) - \EE(\hat{\xi}_t(x))\right) + b\left(0 - \EE(\hat{\xi}_t(x))\right)\\
    & = -b p_1(t),
\end{split}
\end{equation}
where we used translational invariance,  $\EE(\hat{\xi}_{t}(x+e_1)) = p_1(t)$, to obtain the last equality. Therefore, $p_1(t) = e^{-bt}$. Following the same approach, we can derive a more general result that would help us determine the block failure probability.

Let $k$ denote the depth into the defective region measured from the clean interface of $\cH_1$. For example, $k=0$ corresponds to the bottom-most row of $\cH_1$ and $k =1$ corresponds to the first defective row immediately below $\cH_1$. The core is contained within $k = 1,\cdots, H$. Let $j$ denote a coordinate transverse to $k$, which we simply take to be the $x$-coordinate, so that every point in the defective half-plane can be labeled by $(j,k)$. Set $p_0(t)=0$. For each $k\geq 1$, let $p_k(t)$ denote the defect density at that depth. Once again, using the master equation, we can write the following:
\begin{equation}
\label{eq:CG_defect_density}
    \frac{d p_k(t)}{dt}= a \left(\EE[{\sf A}_x(\hat{\xi}_t)] - p_k(t)\right) + b \left(\EE[{\sf B}_x(\hat{\xi}_t)] - p_k(t)\right).
\end{equation}
Note that ${\sf A}_x(\hat{\xi}_t) \leq \hat{\xi}_t(x+e_1) +\hat{\xi}_t(x+e_2)$ and ${\sf B}_x(\hat{\xi}_t) \leq \hat{\xi}_t(x+e_2)$, because $\sigma_1 \lor \sigma_2 \leq \sigma_1 +\sigma_2$ and $\sigma_1 \land \sigma_2 \leq \sigma_1$ (or $\sigma_2$) for any two binary variables $\sigma_{1,2}$. Thus, we can simplify \eqnref{eq:CG_defect_density} as follows:
\begin{equation}
\begin{split}
    \frac{dp_k(t)}{dt} &\leq a \left(p_k(t) + p_{k-1}(t) - p_k(t)\right) + b\left(p_{k-1}(t) - p_k(t)\right)\\
    &\leq -b p_k(t) + \lambda p_{k-1}(t),\quad \text{where }\lambda \equiv a + b. 
\end{split}
\end{equation}

Let us consider an ansatz $v_k (t) = \exp\left(\theta k - \gamma t/2\right)$, where $\gamma:= \min(a,b)$, and $\theta := \log (2\lambda/\gamma)$. We claim that $v_k(t)$ dominates $p_k(t)$ at all times. To see this, first note that this claim is trivially true for both $k = 0$ and $t  = 0$. Secondly, the time derivative of $v_k(t)$ is $-\gamma v_k/2$, which is greater than or equal to $-b v_k +\lambda v_{k-1}$. With these ingredients in hand, we now inductively prove the claim that $p_k(t) \leq v_k(t)$ for every $t \geq 0$. 

Let $p_{k}(t) \leq v_{k}(t)$ for some $k \ge 0$, and every $t \geq 0$. We want to show that under this induction hypothesis, $p_{k+1}(t) \leq v_{k+1}(t)$ for every $t \geq 0$. To this end, let us define $w_{k+1}(t) = p_{k+1}(t) - v_{k+1}(t)$. Its time derivative satisfies:
\begin{equation}
\begin{split}
    \frac{d w_{k+1}(t)}{dt} &\leq (-b p_{k+1}(t) + \lambda p_{k}(t)) - (-\gamma v_{k+1}(t)/2) \leq (-b p_{k+1}(t) + \lambda p_{k}(t)) - (-b v_{k+1}(t) + \lambda v_{k}(t)) \\
    &= - b w_{k+1}(t) + \lambda(p_{k}(t) - v_{k}(t))   \leq -b w_{k+1}(t),
\end{split}
\end{equation}
where, to obtain the final inequality, we used the induction hypothesis $p_k(t) - v_k(t) \leq 0$ for every time $t \geq 0$. Integrating both sides of the above inequality and using the fact that $w_{k+1}(0) = 1-e^{\theta (k+1)} < 0$, we obtain:
\begin{equation}
    e^{bt} w_{k+1}(t) \leq w_{k+1}(0) < 0 \implies w_{k+1}(t) \leq 0 \ \forall\, t\geq 0.
\end{equation}
This completes the proof of the claim that $p_k(t) \leq v_k(t)$ at all times. Choosing $\tau$ such that $\gamma \tau >2\theta$, we find the following for any core site:
\begin{equation}
    p_k(T_H) \leq e^{\theta H - \gamma T_H/2} = e^{-cH} \quad (c > 0),
\end{equation}
where we used the fact that for any core site $k\leq H$. The above result tells us that for each core site $x$, the probability that it is defective is upper bounded by $\exp(-cH)$. Taking a union bound over all $H^2$ core sites gives us the following:
\begin{equation}
\label{eq:CG_inf_process}
    \PP(\exists x\in Q_H(\boldsymbol{0}): \hat{\xi}_{T_H}(x) = {\tt 1}) \leq H^2 e^{-cH}.
\end{equation}

Finally, note that the only way $\zeta^1_t$ might be defective while $\hat{\xi}_t$ might be clean at some core site is via a long causal path starting from $\cB^c_H(\boldsymbol{0})$ of length at least $RH$. We know from Lemma~\ref{lem:speed} that for $R H> v_{\ast} T_H$ (as we had assumed above), the probability of such causal influence is superpolynomially suppressed in $H$. Combining \eqnref{eq:CG_inf_process} with the bound on the probability of long causal paths, we obtain the desired block estimate: $r_{H,1} \to 0$ as $H \to \infty$. 

Since $R$ is a scale-independent finite number, the sparseness of the error field of the coarse eroder immediately follows. Therefore, all conditions required by Theorem~\ref{thm:maintext} are satisfied by the CG process, implying that the $\bar{\tt 0}$ state is stable. Note that the 2D CG process is symmetric under the bit-flip combined with the exchange of $e_1$ and $e_2$ and the rates $a$ and $b$. Since our stability proof works for arbitrary $a,b>0$, we can immediately conclude the stability of the $\bar{\tt 1}$ configuration.

\subsection{3D Cross-Gated process}
\label{sec:sm-3dCG-proof}

We establish the block estimates for the process defined in Eqs.~\eqref{eq:3d-CG-A}--\eqref{eq:3d-CG-B} of the main text, with both updates occurring independently at a unit rate at every site. In the discussion below, we primarily focus on the $\bar{\tt 0}$ configuration. We will later invoke a symmetry argument to establish the stability of the ${\bar{\tt 1}}$ configuration once the stability of the $\bar{\tt 0}$ configuration is established.

To coarse-grain the spacetime, we use isotropic cells $Q_H(z)=Hz+\{0,\ldots,H-1\}^3$ and block time $T_H=\tau H$,  where $z\in\Z^3_{\rm c.g.}$ and $\tau>0$ will be fixed subsequently. We take the buffer of a cell $z$ to be:
\begin{equation}
\mathcal B_H(z)=\bigcup_{\|s\|_\infty\le R}Q_H(z+s),
\end{equation}
where $R$ is a scale-independent integer to be chosen after fixing $\tau$. For the stable directions
\begin{align}
\ell_1&=e_1^*,\quad \ell_2=-e_1^*,
\\
\ell_\sigma&=\sigma_2e_2^*+\sigma_3e_3^*,
\quad \sigma=(\sigma_2,\sigma_3)\in\{\pm1\}^2,
\end{align}
we take the predecessor regions as follows:
\begin{align}
S_i&=\{s:\|s\|_\infty\le R,\ \ell_i(s)\ge0\}, \quad i=1,2,
\\
S_\sigma&=\{s:\|s\|_\infty\le R,\ \ell_\sigma(s)\ge1\}.
\label{eq:sm-CG3-predecessors}
\end{align}
The corresponding displacements are $\delta_1=\delta_2=0$ and $\delta_\sigma=1$, as in the main text. For each of these six stable directions, let $\zeta^{H,i,z}$ be the noiseless process on $\mathcal B_H(z)$ with its exterior frozen to ${\tt 1}$. Within the buffer, the process is initialized to ${\tt 0}$ on $\bigcup_{s\in S_i}Q_H(z+s)$ and ${\tt 1}$ elsewhere in the buffer. Translational invariance allows us to set $z=0$ throughout the proof.

We start with the auxiliary process $\chi_t\in\{{\tt 0},{\tt 1}\}^{\Z}$ defined in \eqnref{eq:maintext-3dCG-aux}. Let $C_t=\{j:\chi_t(j)={\tt 1}\}$, and denote by $\EE_C$ the expectation for the process starting from $C_0=C$.
For a finite defect set $C$, (i) define $M(C)=|C|$, (ii) let $Q(C)$ count pairs of adjacent defects, and (iii) let $G(C)$ count clean sites with two defective neighbors.
A direct count of the changes caused by birth and death updates straightforwardly gives the following generator identities:
\begin{equation}
\mathcal{L}M=M-2Q-G, \qquad \mathcal{L}Q=2M-4Q,
\end{equation}
where $\mathcal{L}$ is the Markov generator.
Consequently, the function $V=M-\frac{1}{2}Q$ satisfies
\begin{equation}
\frac{1}{2}M\leq V\leq M, \quad \mathcal{L}V=-G.
\label{eq:sm-chi-Lyapunov}
\end{equation}
In particular, note that $V$ has a non-positive drift. Each defect can create new defects at no more than two neighboring sites, so the total birth rate is at most $2M$. Consequently, starting from a finite defect set $C$, the expected number of defects satisfies $\EE_C[M(C_t)]\leq |C|e^{2t}$ and remains finite at any finite time. This bound justifies applying Dynkin's formula~\cite{Liggett1985}, which relates the expected change of an observable to the time integral of its expected rate of change.  For the function $V$, using Dynkin's formula, we obtain:
\begin{equation}
\EE_C[V(C_t)]
=V(C)+\int_0^t\EE_C[\mathcal LV(C_s)]\,ds
=V(C)-\int_0^t\EE_C[G(C_s)]\,ds.
\end{equation}

We now show that $V$ decreases by a fixed fraction over a finite time interval.  Consider a defective site $j\in C$. There is a simple event under which $j$ becomes an isolated clean site.  First, during the interval $(0,1]$, let us require that $j$ is not reset, while each of its two neighbors receives at least one birth update. Since $j$ remains defective during this interval, both neighbors become defective.  We further require that neither neighbor is reset before time $3$.  Finally, we require that $j$ is reset at some time in $(1,2]$ and receives no birth update during $(1,3]$.

On this event, by time $2$, the site $j$ is clean, while both of its neighbors are defective, and this remains true until time $3$. Thus, $j$ contributes one unit to $G(C_t)$ throughout the whole interval $[2,3]$. The probability of the event described above is a constant, given by:
\begin{equation}
    \kappa=e^{-9}(1-e^{-1})^3>0,
\end{equation}
independent of both $j$ and the initial set $C$.

Applying this construction to every initially defective site and using the linearity of expectation, we obtain:
\begin{equation}
    \mathbb E_C\!\left[\int_0^3 G(C_t)\,dt\right]
    \ge \kappa |C|.
\end{equation}
Note that here, no independence between different sites is required.
Since $\mathcal{L} V=-G$, Dynkin's formula then gives
\begin{align}
    \mathbb E_C V(C_3)
    &=V(C)-
      \mathbb E_C\!\left[\int_0^3G(C_t)\,dt\right]
      \nonumber\\
    &\le V(C)-\kappa|C|
    \le (1-\kappa)V(C),
\end{align}
where we used $V(C)\le |C|$.  Thus, the expected value of $V$ decreases by at least the fixed factor of $1-\kappa$ every three units of time.

The same estimate can be applied repeatedly using the Markov property. Hence, at times $t=3n$, we have
\begin{equation}
    \mathbb E_C V(C_{3n})
    \le (1-\kappa)^n V(C).
\end{equation}
Moreover, we have $\mathcal LV=-G\le0$, which implies that $\mathbb E_CV(C_t)$ cannot increase between these discrete times. Using $\frac{1}{2}|C_t|\le V(C_t)$ and $V(C)\le|C|$, we finally obtain the following:
\begin{equation}
    \mathbb E_C|C_t|
    \le K |C|e^{-ct},
\end{equation}
for $c=-\frac13\log(1-\kappa)$ and $K={2}/({1-\kappa})$. 

For later use, we also compute the spreading produced by a single defect, i.e., $C_0 = \{0\}$ (a single defect at $j = 0$). To begin with, let us define:
\begin{align}
g_t(j)&=\PP_{\{0\}}(j\in C_t),
\\
m(t)&=\sum_{j\in\Z}g_t(j)
=\mathbb E_{\{0\}}|C_t|\le Ke^{-ct}.
\label{eq:sm-chi-kernel-mass}
\end{align}
Both updates of $\chi$ preserve unions, i.e., if two initial defect sets $C$ and $D$ are evolved using the same update marks, then the process started from $C\cup D$ satisfies $C_t^{\,C\cup D} = C_t^{\,C}\cup C_t^{\,D}$. Taking a union bound over possible initial sites, followed by invoking translational invariance, gives:
\begin{equation}
\PP_{\Z}\bigl(\chi_t(0)={\tt 1}\bigr)
\le\sum_{a\in\Z}g_t(-a)
=m(t)\le Ke^{-ct}.
\label{eq:sm-chi-full-decay}
\end{equation}

Next, let us fix $\sigma\in\{\pm1\}^2$ and $h\in\Z$. Let $\hat\xi$
be an infinite-volume 3D CG process initialized as
\begin{equation}
\hat\xi_0(x)=\begin{cases}
{\tt 1} &\text{if }\ell_\sigma(x)<h,\\
{\tt 0} &\text{otherwise}.
\end{cases}
\end{equation}
Introduce an auxiliary process $\widetilde\xi$ with the same initial conditions as $\hat{\xi}$ and update marks, but with the following update rules:
\begin{align}
\widetilde{\sf A}_x(\eta) &=\eta(x-e_1)\lor\eta(x)\lor\eta(x+e_1),
\\
\widetilde{\sf B}_x(\eta) &=\eta(x+\sigma_2e_2)\lor\eta(x+\sigma_3e_3).
\end{align}
The pointwise inequalities ${\sf A}^{(3)}_x\le\widetilde{\sf A}_x$ and ${\sf B}^{(3)}_x\le\widetilde{\sf B}_x$ imply $\hat\xi_t\le\widetilde\xi_t$ under this coupling. 
The half-space $\{\ell_\sigma\ge h\}$ remains clean in both processes.

Label each layer by its depth $k=h-\ell_\sigma(x)$ from the clean half-space, and define
\begin{equation}
q_k(t)=\PP\bigl(\widetilde\xi_t(x)={\tt 1}\bigr), \qquad \ell_\sigma(x)=h-k.
\end{equation}
By translational invariance, $q_k(t)$ is independent of the choice of the site within the layer. Moreover, $q_k(t)=0$ for $k\le0$, since the clean half-space remains invariant. Each layer decomposes into lines parallel to $e_1$, of the form $\{x_0+je_1:j\in\Z\}$. Along each line, the $\widetilde{\sf A}$ updates act as the birth updates of $\chi$, while each $\widetilde{\sf B}$ update resets its site to ${\tt 0}$ unless at least one of its two neighbors in layer $k-1$ is defective. Thus, each line evolves as $\chi$, with additional defect insertions driven by the preceding layer.

By additivity, a defective target descends either from an initial defect on the line or from one of the above-mentioned insertions. At an insertion time $s$, we condition first on the history of updates up to time $s$. The subsequent updates on the line are independent of this history, so a defect inserted at position $j$ has a descendant at position zero under the process $\chi$ at time $t$ with a conditional probability of $g_{t-s}(-j)$. Applying the rate-one Poisson compensation formula (also known as the Campbell-Mecke formula) after this conditioning, summing over $j$, and bounding the probability that either parent is defective by $2q_{k-1}(s)$ gives
\begin{align}
q_k(t)
&\le m(t)+2\int_0^t m(t-s)q_{k-1}(s) ds \le Ke^{-ct}
+2K\int_0^t e^{-c(t-s)}q_{k-1}(s)\,ds.
\label{eq:sm-CG3-recursion}
\end{align}
Starting from $q_0=0$, subsequent iterations render a truncated exponential.
Using $\mathbf1_{\{n<k\}}\le e^{\theta(k-n)}$ with $\theta\equiv \log(4K/c)>0$, we obtain the following:
\begin{align}
q_k(t) \le Ke^{-ct}\sum_{n=0}^{k-1}\frac{(2Kt)^n}{n!} \le K\exp\!\left(\theta k-\frac{ct}{2}\right),
\label{eq:sm-CG3-layer-bound}
\end{align}
which also bounds $\PP(\hat\xi_t(x)={\tt 1})$ at depth $k$.

\subsubsection{Block estimate}
Let us choose $\tau\ge{2(2\theta+1)}/{c}$ and keep it fixed throughout. Let $\mathcal E_H$ be the event that a backward causal path from the core at time $T_H$ reaches the exterior of the buffer.
By Lemma~\ref{lem:speed}, taking $R$ sufficiently large after fixing $\tau$ ensures that the following holds:
\begin{equation}
\PP(\mathcal E_H)\le C_{\rm ext}H^3e^{-c_{\rm ext}H},
\label{eq:sm-CG3-exterior}
\end{equation}
with positive constants, $C_{\rm ext}$ and $c_{\rm ext}$, independent of $H$.

For each advancing direction $\ell_\sigma$, choose
\begin{equation}
h=1+\max_{q\in Q_H(0)}\ell_\sigma(q),
\end{equation}
so that the core lies entirely on the defective side of the comparison half-space.
Writing $x=Hs+q$ with $q\in Q_H(0)$, we have $\ell_\sigma(x)\le h-1$ whenever
$\ell_\sigma(s)\le0$. 
Thus, the clean half-space intersects the buffer only within the corresponding predecessor region:
\begin{equation}
\{\ell_\sigma\ge h\}\cap\mathcal B_H(0)
\subseteq\bigcup_{s\in S_\sigma}Q_H(s).
\end{equation}
Consequently, the finite-buffer initial condition satisfies $\zeta^{H,\sigma,0}_0\le\hat\xi_0$ inside the buffer.
On $\mathcal E_H^c$, no exterior influence reaches the core by time $T_H$, so monotonicity gives $\zeta^{H,\sigma,0}_{T_H}(x)\le\hat\xi_{T_H}(x)$ for every core site.

Since the range of $\ell_\sigma$ on $Q_H(0)$ is $2(H-1)$, every core site lies at a depth of $1\le k\le2H-1$.
Applying \eqnref{eq:sm-CG3-layer-bound} and a union bound over the $H^3$ core sites yields
\begin{align}
r_{H,\sigma} &\le KH^3\exp\!\left(2\theta H-\frac{c\tau H}{2}\right)
+\PP(\mathcal E_H) \nonumber\\ 
&\le KH^3e^{-H}+\PP(\mathcal E_H).
\end{align}
For the two static directions $i=1,2$, the clean half-spaces $\{x_1\ge0\}$ and $\{x_1\le H-1\}$ contain the core, remain invariant under the noiseless dynamics, and are initially clean within the buffer. Thus, the core can become defective only through exterior influence, giving $r_{H,i}\le\PP(\mathcal E_H)$. Together with \eqnref{eq:sm-CG3-exterior}, these bounds prove the proposition.

Since $\tau$ and $R$ are independent of $H$, all block-failure probabilities vanish as $H\to\infty$, while the buffer contains a finite, scale-independent number of coarse cells. Combined with the coarse-grained geometry established above, these estimates prove the stability of the $\bar{\tt 0}$ configuration by Theorem~\ref{thm:maintext}.
Finally, we note that a bit flip combined with the exchange of $e_1$ and $e_3$ maps the process onto itself. Thus, the stability of the $\bar{\tt 0}$ configuration immediately establishes the stability of the opposite homogeneous configuration, $\bar{\tt 1}$.

\section{Non-examples}
\label{sec:app_D}
Having discussed several examples in which Theorem~\ref{thm:maintext} proves stability, we now discuss some important cases that are \emph{not} accounted for by the stability criterion. 

\begin{figure}[htbp]
    \centering
    \includegraphics[width=0.8\linewidth]{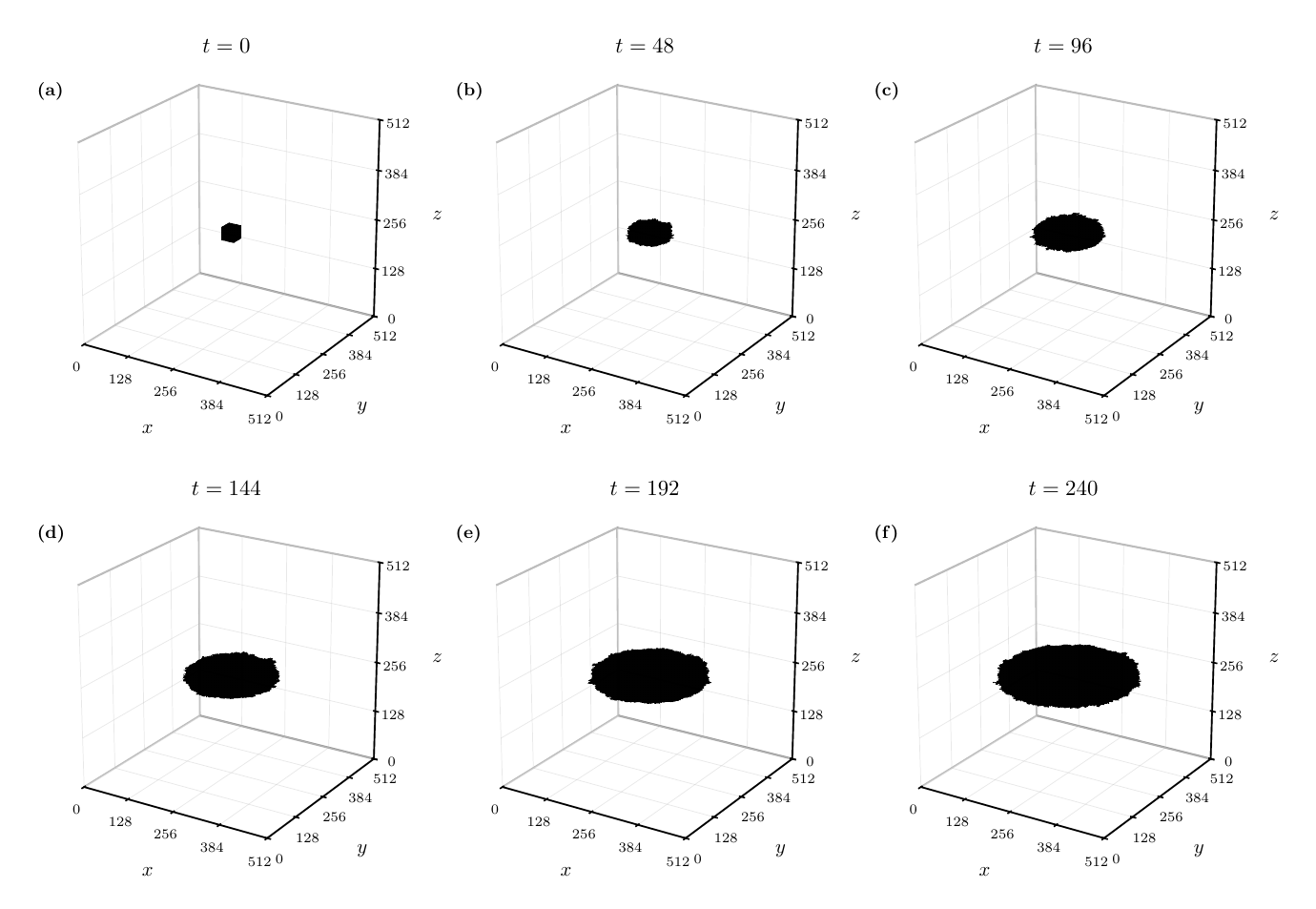}
    \caption{\textbf{Noiseless dynamics under the 3D {\sf R} process.} Snapshots of the asynchronous dynamics of a cubic system of size $ L = 512$ starting from a $32\times 32 \times 32$ cubic defect set. The instability is manifest from the expanding size of the defect set.}
    \label{fig:3D_R_instability}
\end{figure}

\subsection{Continuous-time {\sf T} process}
\label{sec:sm-T-proof}
In Ref.~\cite{LakeRo2025}, one of the squeezing codes introduced was the {\sf T} automaton, which was numerically shown to be robust under synchronous updates but not under asynchronous updates. Below, we show that the stability criterion does not apply to the {\sf T} automaton. We emphasize that the failure of Theorem~\ref{thm:maintext} to apply to a given continuous-time process does not prove the instability of the process, because Theorem~\ref{thm:maintext} only provides a sufficient (not necessary) criterion for stability. Therefore, the following discussion should be treated as a sanity check, as opposed to a proof of instability of the {\sf T} automaton. 

The {\sf T} automaton, like the other squeezing codes, involves an OR update and an AND update. The neighborhoods for both updates are identical and given by: $\cN_{\lor} = \cN_{\land} = \{0,e_1, -e_2\}. $

For the OR update to output {\tt 0}, every site in ${\cN}_{\lor}$ must be {\tt 0}, implying that there is only a single zero set for the OR update: $\cZ^{\rm min}_{\lor} = \{\{0, e_1, -e_2\}\}$. The AND update outputs {\tt 0} if at least one site in $\cN_{\land}$ is {\tt 0}, implying that the set of minimal zero sets is given by $\cZ^{\rm min}_{\land} = \{\{0\}, \{e_1\}, \{-e_2\}\}$. Let $\ell$ be a stable direction, i.e., $\cN^+_{a}(\ell)$ is a zero set for each update type $a$. The OR update places the following constraints on $\ell$: (i) $\ell(e_1) \geq 0$ and (ii) $\ell(e_2) \leq 0$, while the AND update does not place any constraints since $\{0\}$ is a zero-set. Therefore, the set of stable directions, $\cS_{\sf T}$, of the {\sf T} automaton is given by:
\begin{equation}
    \cS_{\sf T} = \{\ell\neq 0^{\ast}: \ell(e_1) \geq 0 \text{ and }\ell(e_2) \leq 0\}. 
\end{equation}
The stable directions correspond to rays pointing into the fourth quadrant (south-east). 

Let us assume that the stable directions $\ell_1, \ell_2, \cdots , \ell_m \in \cS_{\sf T}$, together with positive weights $\alpha_1, \alpha_2, \cdots, \alpha_m$, form a balanced family, i.e., $\sum_{i = 1}^{m} \alpha_i \ell_i = 0^{\ast}$. On applying both sides to the vector $e_2$, we obtain $\sum_{i} \alpha_i\ell_i(e_2) = 0.$ Since each $\ell_i$ is a stable direction, we know that $\ell_i(e_2) \leq 0$ for every $i\in\{1,2,\cdots, m\}$. As the weights $\{\alpha_i\}$ are all strictly positive, this implies that $\ell_i(e_2) = 0$ for every $i \in \{1,2,\cdots,m\}$. We now repeat the same steps with $e_2$ replaced by $e_1$. Since every $\ell_i$ in the balanced family satisfies $\ell_i(e_1) \geq 0$ and the weights are all positive, we conclude that $\ell_i(e_1) = 0$ for every $i \in \{1,2,\cdots, m\}$. The vectors $e_1$ and $e_2$ span $\RR^2$, and thus, if a linear form $\ell$ satisfies $\ell(e_1) = \ell(e_2) = 0$, then $\ell$ must be $0^{\ast}$. This means that each $\ell_i$ in the balanced family must be $0^{\ast}$, which is a contradiction since $0^{\ast}$ cannot be a part of $\cS_{\sf T}$. Therefore, no balanced family exists for the ${\sf T}$ automaton.
Hence, Theorem~\ref{thm:maintext} does not apply to the continuous-time {\sf T} process.

\subsection{Continuous-time {\sf R} process in 3D}
\label{sec:sm-3dR-proof}
We next consider the three-dimensional generalization of the {\sf R} process introduced in Ref.~\cite{LakeRo2025}, which was found numerically not to be bistable under asynchronous updates. Although it satisfies the geometric requirements of the stability criterion, the contact-process argument that produced the block estimates in two dimensions fails to work in three dimensions and reveals the cause of the lack of bistability. Our instability argument requires the condition $\lambda_c^{\rm 2D}<1/2$, where $\lambda_c^{\rm 2D}$ is the critical birth rate per neighbor of the two-dimensional contact process. This condition is supported by the numerical estimate $\lambda_c^{\rm 2D}\approx0.41$~\cite{2D_CP_Vojta}, which lies appreciably below $1/2$.

The 3D {\sf R} process on $\Z^3$ is characterized by OR updates along $e_1$ and $e_2$, and an AND update along $e_3$, each occurring at a rate of $1$. The neighborhood of each update is $\{0,\pm e_i\}$ for $i\in\{1,2,3\}$. We know that an OR update has a single zero set, equal to the full neighborhood of that update type. Repeating the same steps as in the discussion of the 2D {\sf R} process, we find a balanced family of stable directions
\begin{equation}
    \ell_{1,2}=\pm e_3^{\ast},
    \quad
    \alpha_1=\alpha_2=1.
\end{equation}
Both directions permit positive displacements in the proposed coarse rule, just as for the 2D {\sf R} process.
Thus, at the level of the stable-direction geometry, there is little difference between the two- and three-dimensional processes. 
The obstruction instead appears when we attempt to establish the required block estimates.

The erosion mechanism of the 2D {\sf R} process proceeds by successively clearing exposed rows of a defect droplet. This shrinks one dimension of the droplet until it disappears. As we saw above, this mechanism depends crucially on the dynamics of an exposed row: at equal OR and AND rates, the exposed-row process is stochastically dominated by a subcritical
one-dimensional contact process. In three dimensions, an exposed layer is two-dimensional, and the corresponding comparison actually goes in the opposite direction, as we show below.

Consider a finite cubical defect droplet in a background of ${\tt 0}$s, and let $\Pi$ be the horizontal plane containing its upper face (normal to $e_3$). The plane immediately above $\Pi$ remains clean under the noiseless dynamics. Consequently, every AND update on $\Pi$ resets its site to ${\tt 0}$, giving a defect-death rate of one. Meanwhile, the OR updates allow defects to spread along the two horizontal directions. Let $n_i\in{0,1,2}$ denote the number of defective neighbors along $\pm e_i$, for $i=1,2$. Since each OR update occurs at a unit rate and creates a defect whenever at least one neighbor along its corresponding axis is defective, the total birth rate at a vacant site on $\Pi$ is given by:
\begin{equation}
    b_{\sf R}(n_1,n_2) = \mathbf{1}_{\{n_1\geq1\}} + \mathbf{1}_{\{n_2\geq1\}}.
\end{equation}
For every $n_1,n_2\in\{0,1,2\}$, we have $b_{\sf R}(n_1,n_2) \geq \frac{1}{2}(n_1+n_2)$.
The right-hand side is precisely the birth rate of the standard contact process on $\Z^2$ with a death rate of $1$ and a per-neighbor birth rate of $\lambda=1/2$. 
Thus, under evolution with shared update marks, the exposed-plane ($\Pi$) dynamics of the 3D {\sf R} process stochastically dominate this two-dimensional contact process.

As noted above, the critical per-neighbor birth rate of the 2D contact process is numerically estimated to be $\lambda_c^{\rm 2D}\approx 0.41$~\cite{2D_CP_Vojta}. Under the numerically supported condition $\lambda_c^{\rm 2D}<1/2$, the comparison contact process is supercritical. Starting from any finite nonempty defect set, it therefore survives indefinitely and spreads to arbitrarily large distances with positive probability. By stochastic domination, the same holds for the exposed-plane dynamics of the noiseless 3D {\sf R} process.

This explains why the two-dimensional block argument does not extend to three dimensions. In 2D, exposure to a clean neighboring row produces a subcritical process whose defect density decays exponentially over time. In 3D, exposure to a clean neighboring plane instead produces a two-dimensional process capable of sustaining defects indefinitely. Thus, the exponential decay estimate underlying the 2D {\sf R} block construction no longer holds.

Figure~\ref{fig:3D_R_instability} illustrates the spreading of defects under noiseless asynchronous 3D {\sf R} dynamics. Starting from a finite cubical defect droplet, the defect set spreads over increasingly large spatial scales rather than being squeezed to extinction, in contrast to the 2D {\sf R} process.

We can, in fact, strengthen the above contact process comparison to establish weak-noise instability, provided the numerically supported inequality $\lambda^{\rm 2D}_c<1/2$ holds.
Let $\theta>0$ denote the survival probability of the standard 2D contact process with a death rate of $1$ at every site, and a birth rate of $1/2$ per neighbor, starting from a single defect. 
Let the one-sided noise rate be $\varepsilon>0$. As we have already established, on any fixed horizontal plane, the resulting dynamics stochastically dominate a 2D contact process with the same birth and death rates, along with spontaneous births at a rate of $\varepsilon$.

To establish the weak-noise instability of the 3D {\sf R} process under the aforementioned assumption about $\lambda^{\rm 2D}_c$, we consider all sites from which a defect could reach the origin (an arbitrary choice, owing to translational invariance) at time $t$ through the updates of the comparison process.

Because defect transmission is symmetric between neighboring sites, tracing the possible sources backward in time has the same probability law as a contact process started from a single defect. Thus, with a probability of at least $\theta$, the set of possible sources remains nonempty throughout $[0,t]$~\cite{Grimmett2018}. On this event, at every earlier time, there is at least one site where a noise-created defect would eventually reach the origin at time $t$. Since hostile-noise marks occur independently at a rate of $\varepsilon$ per site, the probability that at least one such defect is created is at least $1-e^{-\varepsilon t}$. By stochastic domination, we therefore obtain
\begin{equation}
\mathbb P_{\bar{\tt 0}}^{\varepsilon}
\bigl(\eta_t(0)={\tt 1}\bigr)
\geq \theta(1-e^{-\varepsilon t}).
\label{eq:3D-R-defect}
\end{equation}
Finally, taking the supremum over time and then taking the limit $\varepsilon\to0^+$ gives
\begin{equation}
\liminf_{\varepsilon\to0^+}
\sup_{t\geq0}
\mathbb P_{\bar{\tt 0}}^{\varepsilon}
\bigl(\eta_t(0)={\tt 1}\bigr)
\geq \theta>0,
\end{equation}
establishing weak-noise instability of the $\bar{\tt 0}$ configuration of the 3D {\sf R} process, provided that $\lambda^{\rm 2D}_c<1/2$.

\end{document}